\PassOptionsToPackage{dvipsnames}{xcolor}
\pdfoutput=1

\newif\ifdraft\draftfalse 
\newif\ifanon\anonfalse    
\newif\iffull\fullfalse   
\newif\iflongrefs\longrefsfalse 
\newif\ifbackref\backreffalse 
\newif\ifsooner\soonerfalse
\newif\iflater\laterfalse
\newif\ifieee\ieeetrue
\newif\ifcamera\cameratrue 
\newif\ifappendix\appendixfalse 
\newif\ifallcites\allcitesfalse
\newif\ifneedspace\needspacetrue
\newif\ifhighlightnewtext\highlightnewtextfalse 
\newif\ifdiff\difffalse 
\newif\ifhighlightnewreviews\highlightnewreviewsfalse 
\newif\ifdiffreviews\diffreviewsfalse 
\newif\ifaggressivecomments\aggressivecommentsfalse 
\makeatletter
  \@input{texdirectives.tex} 
\makeatother

\ifieee
\documentclass[10pt, conference, compsocconf, letterpaper, times]{IEEEtran}

\DeclareMathAlphabet{\mathit}{\encodingdefault}{\familydefault}{m}{it}

\else 

\ifcamera
\documentclass[acmsmall]{acmart}
\else
\ifanon
\documentclass[acmsmall,review,anonymous,nonacm]{acmart}
\else
\documentclass[acmsmall,review=false,screen,nonacm]{acmart}
\fi
\fi

\acmJournal{PACMPL}
\acmYear{2018}
\acmMonth{1}
\acmVolume{2}
\aacmNumber{POPL}
\acmPrice{}
\acmArticle{65}
\acmMonth{1}
\acmDOI{10.1145/3158153}
\ifcamera\else\copyrightyear{2017}\fi 

\ifanon
\setcopyright{none}             
\else
\setcopyright{rightsretained}   
\fi

\makeatletter
\def\@copyrightpermission{\ifcamera\\\\\\\fi This work is licensed under a \href{https://creativecommons.org/licenses/by/4.0/}{Creative Commons Attribution 4.0 International License}}
\makeatother

\fi 

\usepackage{booktabs}   
\usepackage{subcaption} 

\usepackage[inference]{semantic}
\usepackage{amsmath}\allowdisplaybreaks
\usepackage{mathpartir}
\mprset {sep=1em} 
\usepackage{amsthm}
\usepackage{amssymb}
\usepackage{stmaryrd} 
\usepackage{xspace}
\usepackage{latexsym}
\usepackage{ifthen}
\usepackage{mathtools}
\usepackage{color}
\usepackage{listings}
\usepackage{verbatim}
\usepackage{tikz}
\usetikzlibrary{positioning,shadows,arrows,calc,backgrounds,fit,shapes,shapes.multipart,decorations.pathreplacing,shapes.misc,patterns,decorations.markings}
\usepackage{tikzscale}
\usepackage{tikz-qtree}
\usepackage[T1]{fontenc}
\usepackage[scaled=.83]{beramono}
\usepackage{epigraph}
\usepackage{booktabs}
\usepackage{float}
\usepackage{etoolbox}
\usepackage{wrapfig}
\usepackage{scalerel}
\usepackage{nameref}
\usepackage[strict]{changepage}
\usepackage{bm}
\usepackage{graphicx}
\usepackage[makeroom]{cancel}
\usepackage{centernot}
\usepackage{thm-restate}
\usepackage{enumitem}
\usepackage{fancyvrb}
\usepackage{varwidth}
\usepackage{combelow}
\usepackage{csquotes}

\usepackage{titlecaps}
\usepackage[normalem]{ulem}
\usepackage{cals}
\usepackage{afterpage}
\usepackage{placeins}

\ifieee
\usepackage[numbers,sort]{natbib}

\fi

\usepackage{chngcntr}

\ifanon
\usepackage[switch]{lineno}
\linenumbers

\fi

\usepackage{stfloats}

\definecolor{dkblue}{rgb}{0,0.1,0.5}
\definecolor{dkgreen}{rgb}{0,0.3,0}
\definecolor{dkred}{rgb}{0.6,0,0}
\definecolor{dkpurple}{rgb}{0.7,0,1.0}
\definecolor{purple}{rgb}{0.9,0,1.0}
\definecolor{olive}{rgb}{0.4, 0.4, 0.0}
\definecolor{teal}{rgb}{0.0,0.4,0.4}
\definecolor{azure}{rgb}{0.0, 0.5, 1.0}
\definecolor{gray}{rgb}{0.5, 0.5, 0.5}
\definecolor{dkgray}{rgb}{0.3, 0.3, 0.3}

\definecolor{cbgreen}{RGB}{000,158,115}
\definecolor{cbblue}{RGB}{000,114,178}
\definecolor{cbteal}{RGB}{091,142,253}
\definecolor{cbyellow}{RGB}{240,228,066}
\definecolor{cborange}{RGB}{230,140,000}
\definecolor{cbred}{RGB}{213,094,000}
\definecolor{cbpurple}{RGB}{204,121,167}

\newcommand*{\highlight}[1]{{\color{cbred} #1}}

\ifieee
\usepackage[
    pdftex,%
    pdfpagelabels,%
    colorlinks,%
    linkcolor=dkblue,%
    citecolor=dkblue,%
    filecolor=dkblue,%
    urlcolor=dkblue%
    \ifbackref,backref\fi%
]{hyperref}
\else
\usepackage{hyperref}
\hypersetup{
  breaklinks=true,
  citecolor=black,
  linkcolor=black,
  urlcolor=black,
}
\fi

\def\Snospace~{\S{}}

\def\Nnospace~{}

\usepackage{etoolbox}
\makeatletter
\patchcmd{\hyper@makecurrent}{%
    \ifx\Hy@param\Hy@chapterstring
        \let\Hy@param\Hy@chapapp
    \fi
}{%
    \iftoggle{inappendix}{
        \@checkappendixparam{chapter}%
        \@checkappendixparam{section}%
        \@checkappendixparam{subsection}%
        \@checkappendixparam{subsubsection}%
        \@checkappendixparam{paragraph}%
        \@checkappendixparam{subparagraph}%
    }{}%
}{}{\errmessage{failed to patch}}

\newcommand*{\@checkappendixparam}[1]{%
    \def\@checkappendixparamtmp{#1}%
    \ifx\Hy@param\@checkappendixparamtmp
        \let\Hy@param\Hy@appendixstring
    \fi
}
\makeatletter

\newtoggle{inappendix}
\togglefalse{inappendix}

\apptocmd{\appendix}{\toggletrue{inappendix}}{}{\errmessage{failed to patch}}


\AtBeginEnvironment{tikzpicture}{\catcode`$=3 } 

\newcommand{\comm}[3]{\ifdraft{{\color{#1}\ifaggressivecomments\bfseries \color{red}\fi[#2: #3]}}\fi}
\newcommand{\ch}[1]{\comm{teal}{CH}{#1}}
\newcommand{\jb}[1]{\comm{dkpurple}{JB}{#1}}
\newcommand{\yh}[1]{\comm{dkgray}{YH}{#1}}

\newcommand{\remove}[1]{\ifdiff{\color{RedOrange}\sout{#1}}\fi}

\usepackage{marginnote}
\newcommand{\mnote}[1]{\marginnote{\small#1}[-\baselineskip]}

\newcommand{\newtext}[1]{\ifhighlightnewtext{\color{MidnightBlue}#1}\else#1\fi}
\makeatletter
\newcommand{\newreviews}[2][\@nil]{%
  \def\tmp{#1}%
  \ifhighlightnewreviews{\color{ForestGreen}%
      \ifx\tmp\@nnil\else\mnote{#1}\fi%
  #2}\else#2\fi}
\newcommand{\removereviews}[2][\@nil]{%
  \def\tmp{#1}%
  \ifdiffreviews{\color{red}%
      \ifx\tmp\@nnil\else\mnote{#1}\fi%
  \sout{#2}}\fi}
\makeatother

\newcommand*{\EG}{e.g.,\xspace}
\newcommand*{\IE}{i.e.,\xspace}

\newcommand{\Triosecuris}{Triosecuris\xspace}

\newcommand*{\state}[1]{\langle #1 \rangle}
\newcommand*{\var}[1]{\mathtt{#1}}
\newcommand*{\MiniMIR}{\textsc{MiniMIR}\xspace}
\newcommand*{\MiniMC}{\textsc{MiniMC}}
\newcommand*{\cskip}{\texttt{skip}}
\newcommand*{\casgn}[2]{#1 ~ \texttt{:=} ~ #2}
\newcommand*{\cdiv}[3]{#1 ~ \texttt{:=} ~ #2 \texttt{/} #3}
\newcommand*{\cbranch}[2]{\texttt{branch}~#1~\texttt{to}~#2}
\newcommand*{\cjump}[1]{\texttt{jump}~#1}
\newcommand*{\cload}[2]{#1 \leftarrow \texttt{load[} #2 \texttt{]}}
\newcommand*{\cpeek}[1]{#1 \leftarrow \texttt{peek}}
\newcommand*{\cstore}[2]{\texttt{store[} #1 \texttt{]} \leftarrow #2}
\newcommand*{\ccall}[1]{\texttt{call}~#1}
\newcommand*{\cctarget}{\texttt{ctarget}}
\newcommand*{\cret}{\texttt{ret}}

\newcommand*{\ibranch}{\texttt{branch}}
\newcommand*{\ijump}{\texttt{jump}}

\newcommand*{\icall}{\texttt{call}}

\newcommand*{\ccond}[3]{#1 \operatorname{\texttt{?}} #2 \operatorname{\texttt{:}} #3}

\newcommand*{\st}{\mathit{st}}
\newcommand*{\msf}{\mathit{msf}}
\newcommand{\callee}{\mathit{callee}}
\newcommand*{\ms}{\mathit{ms}}
\newcommand*{\pc}{\mathit{pc}}
\newcommand*{\tpc}{\mathit{tpc}}
\newcommand*{\ct}{\mathit{ct}}
\newcommand*{\stk}{\mathit{stk}}
\newcommand*{\UV}{\texttt{UV}}

\newcommand*{\prog}{\texttt{p}}
\newcommand*{\m}{\texttt{m}} 
\newcommand*{\fetch}[3]{#1 \llbracket #2 \rrbracket = #3 }
\newcommand*{\mcfetch}[3]{#1 \texttt{[} #2 \texttt{]} = #3 }

\newcommand{\ite}[3]{\mathsf{if}\ #1\ \mathsf{then}\ #2\ \mathsf{else}\ #3}
\newcommand*{\mfalse}{\mathit{\bot}}
\newcommand*{\mtrue}{\mathit{\top}}

\newcommand{\lengthof}[1]{\left\lvert#1\right\lvert}

\newcommand*{\eval}[2]{\llbracket #1 \rrbracket_{#2}}
\newcommand*{\subst}[3]{[#1 \mapsto #2] #3}
\newcommand*{\cfg}[4]{#1,#2,#3,#4}
\newcommand*{\icfg}[2]{#1,#2}
\newcommand*{\scfg}[3]{#1,#2,#3}
\newcommand*{\runningstate}[1]{\langle #1\rangle}
\newcommand*{\faultstate}{\mathsf{Fault}}
\newcommand*{\termstate}{\mathsf{Term}}
\newcommand*{\emptylist}{\mathbin{[\,]}}
\newcommand*{\seqstep}[1]{\xrightarrow{#1}}
\newcommand*{\seqmulti}[1]{\xrightarrow{#1}{\hspace{-0.5em}}{}^*}

\newcommand*{\specstep}[2]{\xrightarrow[#2]{#1}\hspace{-0.5em}{}_s}
\newcommand*{\specmulti}[2]{\xrightarrow[#2]{#1}\hspace{-0.5em}{}_s^*}

\newcommand*{\idealstep}[2]{\xrightarrow[#2]{#1}\hspace{-0.5em}{}_i}
\newcommand*{\idealmulti}[2]{\xrightarrow[#2]{#1}{\hspace{-0.5em}}{}_i^*}
\newcommand*{\concat}{\mathbin{+\!\!+}}
\newcommand{\kw}[1]{\mathsf{#1}}

\newcommand{\barB}{\boldsymbol{|}}

\newcommand{\op}{\mathrm{op}}
\newcommand*{\wfp}[1]{\mathit{wf\_prog} ( #1 )}
\newcommand*{\wwfp}[1]{\mathit{wwf\_prog} ( #1 )} 
\newcommand*{\wfret}[1]{\mathit{wf\_ret} ( #1 )}
\newcommand*{\wfreg}[1]{\mathit{wf\_reg} ( #1 )}
\newcommand*{\wfmem}[1]{\mathit{wf\_mem} ( #1 )}
\newcommand*{\len}{\mathit{len}}
\newcommand*{\wfds}[2]{\mathit{wf\_ds} \; #2 \; #1 }
\newcommand*{\wfdsmc}[3]{\mathit{wf\_ds}_{\texttt{mc}} \; #3 \; #2 \; #1 }

\newcommand*{\UsedVars}[1]{\mathit{VARS} ( #1 )}
\newcommand*{\isct}[1]{\mathit{IsCallTarget} ( #1 )} 
\newcommand*{\issafe}[3]{\mathit{Safe}_{\mathit{#2}} \; #3  \; #1 }
\newcommand*{\efp}[1]{\texttt{\&}#1} 
\newcommand*{\vfp}[1]{\mathcal{\scriptstyle\&}#1} 
\newcommand{\seqobseq}{\approx}
\newcommand{\specobseq}{\approx_\mathit{s}}
\newcommand{\idealobseq}{\approx_\mathit{i}}
\newcommand{\mcobseq}{\approx^{\texttt{mc}}_{\mathit{s}}}

\newcommand*{\Obss}{\mathcal{O}}
\newcommand*{\Dirs}{\mathcal{D}}
\newcommand*{\obranch}[1]{\textit{OBranch}\,#1}
\newcommand*{\dbranch}[1]{\textit{DBranch}\,#1}
\newcommand*{\ocall}[1]{\textit{OCall}\,#1}
\newcommand*{\dcall}[1]{\textit{DCall}\,#1}
\newcommand*{\dret}[1]{\textit{DRet}\,#1}
\newcommand*{\oload}[1]{\textit{OLoad}\,#1}
\newcommand*{\ostore}[1]{\textit{OStore}\,#1}
\newcommand*{\odiv}[2]{\textit{ODiv}\,#1\,#2}

\newcommand{\tr}[1]{\llparenthesis \, #1 \, \rrparenthesis}
\newcommand{\mc}[1]{\llparenthesis \, #1 \, \rrparenthesis^{\texttt{mc}}}

\theoremstyle{definition}
\newtheorem{example}{Example}
\newtheorem{definition}{Definition}
\newtheorem{theorem}{Theorem}
\newtheorem{lemma}{Lemma}

\newtheorem{listing}{Listing}

\ifieee\else\fi

\newcommand{\titleString}{\Triosecuris: Formally Verified Protection Against\\Speculative Control-Flow Hijacking}
\title{\titleString}

\ifneedspace
\title{\huge\bf{\titleString}
\ifneedspace\ifieee\ifanon\vspace{-1.5em}\else\vspace{-0.0em}\fi\fi\else\ifanon\vspace{2em}\fi\fi}
\else
\title{\titleString}
\subtitle{\subtitleString}
\fi

\ifanon
\author{}
\else

\ifieee\large
\author{
  Jonathan Baumann\textsuperscript{1\ddag} \quad 
  Yonghyun Kim\textsuperscript{1\ddag}\quad
  Yan Farba\textsuperscript{1,2}\quad
  C\u{a}t\u{a}lin Hri\cb{t}cu\textsuperscript{1} \quad 
  Julay Leatherman-Brooks\textsuperscript{1,3}
\\[0.5em]
{\small
  \textsuperscript{1}MPI-SP,\ifcamera\else{} Bochum,\fi{} Germany\quad
  \textsuperscript{2}Ruhr University Bochum, Germany\quad
  \textsuperscript{3}Portland State University, USA}\\[0em]
}
\else 

\author{AUTHOR1}
\affiliation{
  \ifcamera\institution{AFF1}\city{CITY}\country{COUNTRY}
  \else\institution{AFF1}\fi}
\email{EMAIL}

\makeatletter
\renewcommand{\@shortauthors}{SHORTAUTHORS}
\makeatother

\fi 

\fi 

\begin{document}


\ifieee\maketitle\fi

\ifanon\else
{\renewcommand{\thefootnote}{\fnsymbol{footnote}}
\footnotetext[3]{Equal contribution}}
\fi


\begin{abstract}
This paper introduces \removereviews{SpecIBT}\newreviews[REQ2]{\Triosecuris}, a formally verified defense against Spectre BTB, RSB, and PHT that combines CET-style hardware-assisted control-flow integrity with compiler-inserted speculative load hardening (SLH).
\Triosecuris is based on the novel observation that in the presence of CET-style
protection, we can precisely detect BTB misspeculation for indirect calls
\newreviews[REQ2]{and RSB misspeculation for returns} and set the SLH misspeculation flag.
We formalize \Triosecuris as a transformation in Rocq and provide a machine-checked proof that it achieves relative security: any transformed program running with speculation leaks no more than what the source program leaks without speculation.
This strong security guarantee applies to arbitrary programs, even those not following the cryptographic constant-time programming discipline.

\end{abstract}

\ifieee
\renewcommand\IEEEkeywordsname{Keywords}
\begin{IEEEkeywords}
speculative execution, side-channels, Spectre BTB, Spectre RSB, Spectre PHT, control-flow hijacking,
Intel CET, IBT, speculative load hardening, Ultimate SLH, relative security,
secure compilation, Rocq
\end{IEEEkeywords}
\fi

\section{Introduction}
\label{sec:intro}

The Spectre speculative side-channel attacks are a formidable threat for the
security of software systems~\cite{KocherHFGGHHLM019, CanellaBSLBOPEG19, HertoghQRSMB26}.
Without any defenses, attackers can easily construct ``universal read gadgets''
that leak a sensitive program's entire memory~\cite{KocherHFGGHHLM019, McIlroySTTV19}.
Yet strong and efficient Spectre defenses are notoriously challenging to implement.
Even the Linux kernel, for which many developers have spent years to implement
effective Spectre defenses, still sometimes falls prey to Spectre attacks, including
universal read gadgets~\cite{WiebingTBG24, WiebingG25}, and the situation
seems worse when developers have fewer resources and rely
entirely on the defenses currently provided by the OSes and compilers.
These defenses often involve a complex combination of software in compilers and
OSes as well as hardware features and models, which are all very difficult to get right.
Often the attacks circumventing Spectre defenses exploit implementation
mistakes, bad security-efficiency tradeoffs, unclear attacker models, and
unclear and overly optimistic assumptions.

Two particularly powerful Spectre variants are Spectre BTB~\cite{KocherHFGGHHLM019}, which exploits
indirect branch target speculation to jump to arbitrary addresses, and Spectre
RSB~\cite{KoruyehKSA18, MaisuradzeR18}, which exploits return address speculation.
%
These variants can be used to speculatively hijack control flow and then exploit
arbitrary gadgets in the code to leak secrets~\cite{BhattacharyyaSK20}.
Low-overhead mitigations against these variants can benefit from existing
control-flow integrity (CFI)~\cite{AbadiBEL09, BurowCNLFBP17}
features in modern processors.
In particular, Intel CET~\cite{CET} supports CFI protection both for forward
control-flow edges (e.g. indirect jumps) via Indirect Branch Tracking (IBT)
and also for backward edges (shadow stack for return addresses).
These hardware CFI features were originally introduced to defend against hijacking
control-flow using memory corruption~\cite{CET, GaidisMSMAK23},
but were later also applied to mitigate
speculative control-flow hijacking~\cite{GaidisMSMAK23, KoruyehSKSA20,
  MosierNMT24, NarayanDMCJGVSS21}.
%
In response to such defenses~\cite{MosierNMT24}, Intel updated its CET
documentation~\cite{CETspec}, confirming that the CET implementation in some of its current
processors (including products like Alder Lake-N and Arizona Beach) and the
intended long-term direction for CET is to also provide security under speculation.

Yet some of this previous work using CET~\cite{GaidisMSMAK23,KoruyehSKSA20}
disregards that Spectre has a different attacker model than memory corruption.
In particular, even the most fine-grained forward-edge CFI defenses against
memory corruption~\cite{GaidisMSMAK23} have to statically {\em overapproximate} the
set of possible targets of indirect calls and then dynamically check that each
call goes to a target in the corresponding set.
Directly applying this technique to defend against Spectre BTB preserves this
approximation aspect~\cite{GaidisMSMAK23, KoruyehSKSA20}, which allows clever
attackers to still circumvent these defenses in practice~\cite{WiebingTBG24}.
%
%
Even Serberus~\cite{MosierNMT24}, which is
\newreviews[REQ3]{a complete defense against Spectre BTB (and RSB)}\removereviews{designed
to fully prevent Spectre BTB attacks}, does
not avoid overapproximation, which in the end leads to complex
enforcement (\EG procedure-private stacks, register cleaning)
and limitations (\EG secrets can't be passed by value).

\newreviews[REQ2]{
Moreover,
the CET shadow stack does not provide
perfect protection against Spectre RSB, even when also setting the
{\tt RRSBA\_DIS\_U} flag, which disables return address prediction using the BTB.
As noticed by the authors of Serberus~\cite{MosierNMT24}, some Spectre RSB
misspeculation is still possible, with returns being able to
speculatively jump to the instruction immediately following an arbitrary program call.
This overapproximation also adds to the complexity and limitations of Serberus.}

In this work, we show that no approximation is needed when using CET to
defend against Spectre BTB \newreviews[REQ2]{and RSB}.
We do this by introducing \Triosecuris: a simple \removereviews[REQ2]{forward-edge  }CFI protection
specifically designed against the Spectre BTB \newreviews[REQ2]{and RSB} attacker
model\newtext{s}.
With \Triosecuris, the caller saves the function pointer
it is about to indirectly call in a
register and the callee checks that the value in this register is the correct one.
This check is useless in the absence of speculation, but it {\em precisely} detects
misspeculation, in which case we set a misspeculation flag register that
is used as a mask by Speculative Load Hardening (SLH).
\newreviews[REQ2]{For returns, we use the same kind of check, but with the return
  address on the stack, to detect RSB misspeculation and again set the SLH flag.}
%
\Triosecuris currently relies on Ultimate SLH~\cite{ZhangBCSY23},
a Spectre PHT defense (very similar to strong SLH~\cite{PatrignaniG21})
using the misspeculation flag to mask not only the values loaded from
memory, but also any operands that could be leaked via timing (\IE branch
conditions, addresses of memory accesses, etc.).


\ifsooner
\ch{Current contributions explain what we did, but we may want to explain better
  what's novel or interesting.}
\fi

\paragraph{Contributions}
\begin{itemize}[leftmargin=*,nosep,label=$\blacktriangleright$]
\item
  We introduce \Triosecuris, a strong defense that protects arbitrary programs against
  Spectre BTB, RSB, and PHT by combining hardware-assisted control-flow integrity
  in the style of Intel CET with compiler-inserted Ultimate SLH. The main idea
  behind \Triosecuris is simple, but as far as we know novel: in the presence of
  CET-style protection, we can precisely detect BTB \newreviews[REQ2]{and RSB}
  misspeculation and set the SLH misspeculation flag.
  \newreviews[REQ4B1]{Moreover, since \Triosecuris is a conservative extension
  of Ultimate SLH, it still provides complete protection against
    Spectre PHT.}\footnote{\newreviews[REQ2]{We named our defense
      \Triosecuris because it provides security against three Spectre variants
      (BTB, RSB, and PHT), and also inspired by the \emph{Dioscuri} of Greek
      mythology, twins (our BTB and RSB protections) who guide
      travelers lost at sea (misspeculating programs) back to safety.}}
%
\item
  We formalize \Triosecuris in the Rocq prover\footnote{The
    \href{https://rocq-prover.org/}{Rocq interactive theorem prover} was
    previously known as Coq.} as a compiler from a simple low-level language
  with basic blocks and indirect function calls to a similar language with
  an idealized model of CET
  and a speculative semantics modeling the remaining capabilities
  of the attacker, who can still mount Spectre BTB, \newreviews[REQ2]{RSB,} and PHT
  attacks\removereviews{, but not Spectre RSB ones (which we assume can be stopped by a CET-style shadow stack)}.
  This mechanized formalization required getting the formal details right:
  For a start, transforming the code in the presence of indirect jumps requires
  a notion of basic blocks and an abstract notion of code pointers.
%
  So the source and target language of \Triosecuris feature CompCert-style function
  pointer values~\cite{CompCertMM, ZakowskiBYZZZ21},
  and passing an integer value to operations needing a code
  pointer (\IE indirect calls) and vice versa is undefined behavior.
%
  More subtly, in this setting the Ultimate SLH masking can change abstract code
  pointers into integers. Thus, in order to show that \Triosecuris does not introduce speculative
  undefined behavior, we also had to add to the semantics CompCert-style
  undefined values~\cite{CompCertMM, LeeKSHDMRL17}, so a more precise
  model of undefined behavior than that of, \EG the C standard.
\item
  We prove in Rocq that the \Triosecuris transformation achieves relative security:
  any transformed program running with speculation does not leak more than what
  the source program leaks sequentially~\cite{BaumannBDHH25}.
  This strong guarantee applies to arbitrary programs, even those not
  following the (restrictive) cryptographic constant-time discipline.
\item
  The target language of the \Triosecuris transformation above still includes basic
  blocks and abstract code pointers, so we also formalize a subsequent linearization
  step targeting machine code for a simple machine.
  This linearization step links basic blocks together, lays them out in memory, and
  replaces code pointers with concrete integers.
  We prove in Rocq relative security for this linearization step,
  on all safe inputs, which do not trigger undefined behavior.
\item
  We compose the proofs above to obtain a machine-checked end-to-end
  relative security proof, as illustrated in \autoref{fig:composition}.
  For this we additionally show that our \Triosecuris transformation is safety
  preserving, \IE inputs that do not trigger undefined behavior in the source also
  do not trigger it after the \Triosecuris step; this proof relies on undefined values.
\end{itemize}

\begin{figure}
  \centering
  \begin{tikzpicture}
    \node[draw, rectangle] (src) at (0,0) {
      \begin{minipage}{.5\linewidth}
        \centering
        \MiniMIR \\
        \footnotesize
        basic blocks, func. ptrs, undef. values,
        sequential semantics
      \end{minipage}
    };
    \node[draw, rectangle] (tgt) at (0,-1.7) {
      \begin{minipage}{.5\linewidth}
        \centering
        \MiniMIR \\
        \footnotesize
        basic blocks, func. ptrs, undef. values,
        speculative semantics + CET
      \end{minipage}
    };
    \node[draw, rectangle] (mc) at (0,-3.2) {
      \begin{minipage}{.5\linewidth}
        \centering
        \MiniMC~\footnotesize machine code \\
        speculative semantics + CET
      \end{minipage}
    };
    
    \draw[->] (src) -- (3, 0) -- node[right] {
      \begin{minipage}{.18\linewidth}
        \scriptsize 
        relative security\\ +
safety\\ preservation
      \end{minipage}
    } (3, -1.65) --
    ([shift=({0,0.05})]tgt.east);
    \draw[->] ([shift=({0,-0.05})]tgt.east) -- (3, -1.75) -- node[right] {
      \begin{minipage}{.12\linewidth}
        \scriptsize relative security\\
        for safe\\ inputs
      \end{minipage}
    } (3, -3.2) -- (mc);
    \draw[->] ([shift=({0,0.2})]src.east) -- (5.2, 0.2) -- node[right] {
      \begin{minipage}{.2\linewidth}
        \footnotesize 
        relative security\\
        for safe\\ inputs
      \end{minipage}
    }(5.2, -3.4) -- ([shift=({0,-0.2})]mc.east);

    \draw[->] (src) -- node[right] {\footnotesize \Triosecuris} (tgt);
    \draw[->] (tgt) -- node[right] {\footnotesize Linearization} (mc);
  
  \end{tikzpicture}
  \caption{Languages and security results}
  \label{fig:composition}
  \vspace{-1em}
\end{figure}

\newreviews[REQ1]{
\paragraph{Scope and Limitations}
In this work we formally investigate Triosecuris on simple languages that
capture the main conceptual ideas such as (1) detecting BTB and RSB misspeculation,
(2) transforming basic blocks to add such detection code on top of the masking and
\newreviews[REQ4B1]{Spectre PHT detection} of Ultimate SLH, (3) dealing with refinement of undefined
behavior and supporting undefined values.
We have not implemented these ideas in a real compiler like LLVM.
This is interesting future work though, which would allow experimentally
evaluating the overhead of Triosecuris, as further discussed in \autoref{sec:conclusion}.}

\paragraph{Outline} \autoref{sec:background} gives background,
after which \autoref{sec:key-ideas} introduces our key ideas.
\autoref{sec:defining-specibt} formally defines the \Triosecuris transformation
and \autoref{sec:formal-results} presents our security proof for it.
\autoref{sec:translation-mc} presents our translation to machine code and
our end-to-end security result.
\autoref{sec:related-work} discusses related work and
\autoref{sec:future-work} concludes with future work.
\ifappendix
The appendices include additional technical details.
 \else
 An extended version including appendices with additional technical
 details is available at \url{https://arxiv.org/abs/2601.22978}.
\fi

\paragraph{Artifact} The results of this paper have been fully formalized in the Rocq proof assistant.
The development leading to our main theorems has $\sim$\removereviews{5400}\newreviews[REQ2]{6800} lines of Rocq code
and is available
\ifanon
as anonymous supplementary material.
\else
at \mbox{\url{https://github.com/secure-compilation/Triosecuris}}.
\fi{}
The supplementary material also includes $\sim$\removereviews{2300}\newreviews[REQ2]{2500} lines of property-based
testing\ifanon\else~\cite{Paraskevopoulou15,testing_ni_jfp}\fi{} code that we used to
check our properties before proving them.
Testing helped discover and fix issues early, for instance it helped us
identify the need for adding undefined values to obtain safety preservation
(\autoref{sec:ki-undefined-values}).
\newtext{The authors used generative AI (Claude Opus, Microsoft Copilot, and Google Gemini) to assist with the Rocq mechanization of Spectre RSB protection,
refactoring, and minor coding assistance.
All AI-assisted development was reviewed by the authors, and all proofs mechanically checked by Rocq;
the authors take full responsibility for the entire artifact development.}

\section{Background}
\label{sec:background}

\subsection{Spectre PHT}
Nearly all modern processors make use of a variety of hardware mechanisms
to improve performance (\EG caches). These are part of the processor's
{\em microarchitecture} and are not
exposed in the {\em instruction set architecture} (ISA), so they should not affect the
results of computations. Still, their effects can be detected
across processes or even remotely by measuring execution time,
which can be used to leak secrets.

\emph{Speculative execution} is the result of a variety of techniques to avoid
stalling by executing code that is likely, but not guaranteed to be
needed by the program. While speculatively executed instructions do not affect
the \emph{architectural} state, they still change the
\emph{microarchitectural} state. This enables the well-known
\emph{Spectre} attacks~\cite{KocherHFGGHHLM019},
which exploit speculative execution to leak
secrets into the microarchitectural state, from which they are
observed via traditional timing side-channels.

Spectre attacks are classified by the 
prediction mechanism
triggering speculation~\cite{CanellaBSLBOPEG19}.
The first and most well-known variant, Spectre PHT~\cite{KocherHFGGHHLM019},
targets the \emph{pattern history table}, predicting the outcomes of
conditional branch instructions.

\begin{listing}[Vulnerable program]
  \label{lst:running-example}
  ~

  \centering
  \begin{tikzpicture}
    \node[draw, rectangle, scale=0.8] (callnum) {
      \begin{minipage}{0.55\linewidth}
        \vspace{-1em}
        \begin{align*}
          \mathit{calln}: \,& \cbranch{\var{arg1} < \var{len}}{l_\mtrue}\\
             & \casgn{\var{fun}}{\efp{\mathit{fun\_1}}}\\
             & \cjump{l_\mathit{cont}}
        \end{align*}
      \end{minipage}
    };
    \node[draw, rectangle, scale=0.8, below=0.1 of callnum] (ltrue) {
      \begin{minipage}{0.55\linewidth}
        \vspace{-1em}
        \begin{align*}
          l_\mtrue: \quad& \casgn{\var{fun}}{\efp{\mathit{fun\_2}}}\\
             & \cjump{l_\mathit{cont}}
        \end{align*}
      \end{minipage}
    };
    \node[draw, rectangle, scale=0.8, below=0.1 of ltrue] (lcont) {
      \begin{minipage}{0.55\linewidth}
        \vspace{-1em}
        \begin{align*}
          l_\mathit{cont}: \quad& \ccall \var{fun}\\
                                & \vspace{-.1em}\scalebox{0.8}{\large\vdots}
        \end{align*}
      \end{minipage}
    };
    \node[draw, rectangle, scale=0.8, right=0.2 of callnum] (fun1) {
      \begin{minipage}{0.6\linewidth}
        \vspace{-1em}
        \begin{align*}
          \mathit{fun\_1}: \quad& \dots \\
                                & \vspace{-.1em}\scalebox{0.8}{\large\vdots}
        \end{align*}
      \end{minipage}
    };
    \node[draw, rectangle, scale=0.8, below=0.1 of fun1] (fun2) {
      \begin{minipage}{0.6\linewidth}
        \vspace{-1em}
        \begin{align*}
          \mathit{fun\_2}: \quad& \cload{\var{x}}{\var{base} + \var{arg1}}\\
                                & \cload{\var{y}}{\var{x}}\\
                                & \vspace{-.1em}\scalebox{0.8}{\large\vdots}
        \end{align*}
      \end{minipage}
    };
  \end{tikzpicture}
\end{listing}
\begin{example}
  \label{ex:spectre-pht}
  Consider 
  \autoref{lst:running-example}. The load instructions
  at label $\mathit{fun\_2}$ affect the cache, allowing an attacker to
  determine which address was accessed via a timing attack. 
  Sequentially, this code is protected by the conditional branch at label
  $\mathit{calln}$,
  which ensures that $\mathit{fun\_2}$ will only be
  called if $\var{arg1}$ is within a certain range (\EG a public array starting
  at $\var{base}$).
  Thus, while the second load allows an attacker to determine the value loaded
  by the first, those values are guaranteed to be public during sequential
  execution.

  With a Spectre PHT attack, however, an attacker can train the branch predictor
  to predict that the branch will be taken,
  even for an out-of-bounds value for $\var{arg1}$.
  This would cause $\mathit{fun\_2}$ to be called speculatively with an
  out-of-bounds index, exposing an arbitrary memory location to the attacker,
  which is known as a universal read gadget.
\end{example}

\subsection{Ultimate SLH}
\label{sec:uslh}
One mitigation against Spectre PHT attacks is Ultimate SLH~\cite{ZhangBCSY23}
(very similar to a previous variant called strong SLH~\cite{PatrignaniG21}),
which builds upon the SLH implementation in LLVM~\cite{Carruth18}.
All SLH mitigations share the core concept of using a designated
\emph{misspeculation flag} register to detect misspeculation. They achieve this
by using branchless, unpredicted logic, which we represent in our setting with
the conditional expression $\ccond{e_\mathit{cond}}{e_\mtrue}{e_\mfalse}$,
to update the misspeculation flag after every branch as shown in
\autoref{lst:tracking-pht}.

\begin{listing}[Tracking PHT misspeculation]
  \label{lst:tracking-pht}
  We show only the basic blocks which are modified w.r.t.
  \autoref{lst:running-example}.

  \centering
  \begin{tikzpicture}
    \node[draw, rectangle, scale=0.8] (callnum) {
      \begin{minipage}{0.55\linewidth}
        \vspace{-1em}
        \begin{align*}
          \mathit{calln}: \,& \cbranch{\var{arg1} < \var{len}}{\highlight{l'_\mtrue}}\\
                  & \highlight{\casgn{\msf}{\ccond{\var{arg1} <
                  \var{len}}{1}{\msf}}}\\
             & \casgn{\var{fun}}{\efp{\mathit{fun\_1}}}\\
             & \cjump{l_\mathit{cont}}
        \end{align*}
      \end{minipage}
    };
    \node[draw, dashdotted, rectangle, scale=0.8, above right = -.9 and 0.15 of callnum, color=cbred] (ltrueprime) {
      \begin{minipage}{0.58\linewidth}
        \vspace{-1em}
        \begin{align*}
          l'_\mtrue: \quad& \casgn{\msf}{\ccond{\var{arg1} \geq
          \var{len}}{1}{\msf}}\\
                         & \cjump{l_\mtrue}
        \end{align*}
      \end{minipage}
    };
    \node[draw, rectangle, scale=0.8, below=0.1 of ltrueprime] (ltrue) {
      \begin{minipage}{0.58\linewidth}
        \vspace{-1em}
        \begin{align*}
          l_\mtrue: \quad& \casgn{\var{fun}}{\efp{\mathit{fun\_2}}}\\
             & \cjump{l_\mathit{cont}}
        \end{align*}
      \end{minipage}
    };
  \end{tikzpicture}
\end{listing}

This misspeculation flag is used to \emph{mask} vulnerable data, by once
again using branchless, unpredicted operations (in practice, bitwise boolean
operations are often used) to replace them with a safe value (in our case 0)
if the misspeculation flag is set. This has no effect
during sequential execution, but prevents these values from leaking due to
misspeculation.

\begin{listing}[Masking during misspeculation]
  \label{lst:running-example-masking}
  We show only the basic blocks which are modified w.r.t.
  \autoref{lst:running-example}. The misspeculation flag is set as in
  \autoref{lst:tracking-pht}.

  \centering
  \begin{tikzpicture}
    \node[draw, rectangle, scale=0.8] (fun2) {
      \begin{minipage}{0.6\linewidth}
        \vspace{-1em}
        \begin{align*}
          \mathit{fun\_2}: \quad& \cload{\var{x}}{\highlight{ \msf
          \operatorname{\texttt{?}} 0 \operatorname{\texttt{:}}} \var{base} +
        \var{arg1}}\\
                                & \cload{\var{y}}{\highlight{ \msf
          \operatorname{\texttt{?}} 0 \operatorname{\texttt{:}}}\var{x}}\\
                                & \vspace{-.1em}\scalebox{0.8}{\large\vdots}
        \end{align*}
      \end{minipage}
    };
  \end{tikzpicture}
\end{listing}

\begin{example}
  Recall the Spectre PHT attack described in \autoref{ex:spectre-pht}.
  With Ultimate SLH applied, this attack is no longer possible:
  The mispredicted branch is detected by the tracking inserted in
  \autoref{lst:tracking-pht} and the misspeculation flag is set, causing the
  addresses to be masked as shown in \autoref{lst:running-example-masking}.
  Now, even if these loads are executed speculatively, the attacker will only
  observe the constant address 0.
\end{example}

Ultimate SLH derives its name from its thorough approach to masking, which
includes not only loads, as shown above,
but also any operands that could be leaked via timing (\IE branch
conditions, addresses of memory accesses, leaked arguments of variable-time
instructions \newtext{such as division}, etc.).

\subsection{Relative Security}
\label{sec:bg-rs}
Several previous 
mitigations~\cite{ShivakumarBBCCGOSSY23,
OlmosBCGLOSYZ25, MosierNMT24} have restricted their scope to 
cryptographic code, which follows very strict disciplines such as
\emph{cryptographic constant-time} to ensure security against sequential
attackers~\cite{DanielBR21, DanielBR23, BartheBGHLPT20}.
In this setting, it is possible to prove that the hardened program
has no leakage sequentially or speculatively.

Instead of requiring constant-time,
our \Triosecuris transformation protects \emph{arbitrary} source programs,
as previously done by Ultimate SLH~\cite{ZhangBCSY23, PatrignaniG21}
and FSLH~\cite{BaumannBDHH25}.
As such, we can only hope to prove that the
hardened program \emph{does not leak more} under speculative execution than the
source program leaks sequentially. Specifically, if a non-speculative attacker cannot
distinguish between two input states, our mitigation should ensure that a
speculative attacker cannot do so either.
This is a relative notion of security~\cite{CauligiDMBS22,
  DongolGPW24, CheangRSS19, GuarnieriKMRS20}, as the hardened program is secure
\emph{relative} to the source program~\cite{BaumannBDHH25}.
Ultimate SLH satisfies relative security against Spectre PHT attacks~\cite{BaumannBDHH25}.

\subsection{Spectre BTB \newreviews[REQ2]{and RSB}}
\label{sec:bg-btb-rsb}

In this paper, we consider the more powerful Spectre
BTB~\cite{KocherHFGGHHLM019} \newreviews[REQ2]{and RSB~\cite{KoruyehKSA18, MaisuradzeR18}}
attacks.
\newreviews{In the former, }\removereviews{in which} the target location of an indirect call or jump is predicted
by the \emph{branch target buffer} (BTB), and the processor speculatively
executes code at the predicted location. A Spectre BTB attacker can train the
BTB to predict any address in the victim program, and can therefore
speculatively redirect control flow arbitrarily.
This can circumvent SLH protections completely:
\begin{example}
  \label{ex:btb-slh-bypass}
  Recall the Ultimate SLH-hardened program of \autoref{lst:tracking-pht} and
  \autoref{lst:running-example-masking}. An attacker with control of the BTB can
  let the first branch proceed correctly, but then misdirect the call in basic block
  $l_\mathit{cont}$ to $l_\mtrue + 1$, thus executing that branch without
  updating the $\msf$.
  Alternatively, the attacker could directly misdirect the call to
  $\mathit{fun\_2}$ when it should be $\mathit{fun\_1}$. Both attacks result in
  $\mathit{fun\_2}$ being executed speculatively without the misspeculation flag
  set, and thus exposing an arbitrary memory location to the attacker.
\end{example}

\newreviews[REQ2]{
In a Spectre RSB attack, the target location of a return is predicted by the
\emph{return stack buffer} (RSB), which keeps track of addresses of previous
call instructions. This buffer can also be manipulated by an
attacker~\cite{KoruyehKSA18}, however, only call instructions can add entries to
the RSB.\footnote{This assumes that the kernel fills the RSB on context switch,
clearing any entries from other processes.}
Return address prediction can often bypass the RSB, and use the
BTB instead~\cite{WiknerR22}. However, this behavior can be disabled using the
RRSBA\_DIS\_U flag\jb{cn}, which we require in our setting.
Thus, like Serberus~\cite{MosierNMT24}, we assume that the RSB can only predict locations
immediately following call instructions.
}

\subsection{Hardware-Assisted CFI Protections}
\label{sec:bg-hardware-cfi}
Modern processors offer hardware protections for forward-edge CFI such as
Intel's Indirect Branch Tracking (IBT)~\cite{CET} and ARM's Branch Target
Identification (BTI)~\cite{arm_bti}.
Compilers using these mechanisms
introduce a special instruction to mark valid targets for indirect branches
and calls, \EG the \texttt{ENDBR64} and \texttt{ENDBR32} instructions for
Intel's IBT~\cite{CET}. When IBT is enabled, this
marker must be the first instruction executed after an indirect call
or branch, otherwise a hardware fault is triggered and execution halted.

\begin{figure}
  \begin{subfigure}{\linewidth}
    \begin{tikzpicture}[
      scale=1.5,
      code/.style={ellipse, draw=black}
      ]
      \node[code] (call1) at (1, 0) {\texttt{calln}};
      \node[code] (call2) at (3.5, 0) {\texttt{callalpha}};

      \node[code] (tgt1) at (0, -1.5) {\texttt{fun\_1}};
      \node[code] (tgt2) at (1, -2) {\texttt{fun\_2}};
      \node[code] (tgta) at (2.5, -1.5) {\texttt{fun\_a}};
      \node[code] (tgtb) at (3.5, -2) {\texttt{fun\_b}};
      \node[code] (tgtc) at (4.5, -1.5) {\texttt{fun\_c}};

      \draw[->] (call1) [out=270, in=90] to (tgt1);
      \draw[->,dashed] (call1) [out=270, in=90] to (tgt2);
      \draw[dotted] (call1) [out=270, in=90] to (tgta);
      \draw[dotted] (call1) [out=270, in=90] to (tgtb);
      \draw[dotted] (call1) [out=270, in=90] to (tgtc);
      \draw[dotted] (call2) [out=270, in=90] to (tgt1);
      \draw[dotted] (call2) [out=270, in=90] to (tgt2);
      \draw[->,dashed] (call2) [out=270, in=90] to (tgta);
      \draw[->] (call2) [out=270, in=90] to (tgtb);
      \draw[->,dashed] (call2) [out=270, in=90] to (tgtc);
    \end{tikzpicture}
    \caption{Coarse-grained IBT~\cite{MosierNMT24, NarayanDMCJGVSS21}:
      calls are guaranteed to lead to function
      entry points, with no further restriction on the targeted function.
      This includes functions that should never be reached sequentially, here
      represented by dotted lines, as well as functions that can be reached sequentially,
      but are not the intended target (dashed lines).\vspace{1em}
}
    \label{fig:IBT}
  \end{subfigure}
  \begin{subfigure}{\linewidth}
    \begin{tikzpicture}[
      scale=1.5,
      code/.style={ellipse, draw=black}
      ]
      \node[code] (call1) at (1, 0) {\texttt{calln}};
      \node[code] (call2) at (3.5, 0) {\texttt{callalpha}};

      \node[code] (tgt1) at (0, -1) {\texttt{fun\_1}};
      \node[code] (tgt2) at (1, -1.75) {\texttt{fun\_2}};
      \node[code] (tgta) at (2.5, -1) {\texttt{fun\_a}};
      \node[code] (tgtb) at (3.5, -1.75) {\texttt{fun\_b}};
      \node[code] (tgtc) at (4.5, -1) {\texttt{fun\_c}};

      \draw[->] (call1) [out=270, in=90] to 
        node[pos=0.1, right] {$\mathit{id}_S \gets 1$}
        node[pos=0.9, right] {$\mathit{id}_S = 1?$} (tgt1);
      \draw[dashed, ->] (call1) [out=270, in=90] to 
        node[pos=0.9, right] {$\mathit{id}_S = 1?$} (tgt2);
      \draw[dashed, ->] (call2) [out=270, in=90] to 
        node[pos=0.1, right] {$\mathit{id}_S \gets 2$}
        node[pos=0.9, right] {$\mathit{id}_S = 2?$} (tgta);
      \draw[->] (call2) [out=270, in=90] to 
        node[pos=0.9, right] {$\mathit{id}_S = 2?$} (tgtb);
      \draw[dashed, ->] (call2) [out=270, in=90] to 
        node[pos=0.9, right] {$\mathit{id}_S = 2?$} (tgtc);
    \end{tikzpicture}
    \caption{Fine-grained IBT~\cite{GaidisMSMAK23,KoruyehSKSA20}:
      Checks against a statically determined identifier either
      detect~\cite{GaidisMSMAK23} or prevent~\cite{KoruyehSKSA20} calls to functions
      outside a statically determined set.
      However, neither method protects against calls to an \emph{incorrect}
      function within the \emph{correct} set (dashed lines).\vspace{1em}}
    \label{fig:FineIBT}
  \end{subfigure}

  \begin{subfigure}{\linewidth}
    \begin{tikzpicture}[
      scale=1.5,
      code/.style={ellipse, draw=black}
      ]
      \node[code] (call1) at (1, 0) {\texttt{calln}};
      \node[code] (call2) at (3.5, 0) {\texttt{callalpha}};

      \node[code] (tgt1) at (0, -1) {\texttt{fun\_1}};
      \node[code] (tgt2) at (1, -1.75) {\texttt{fun\_2}};
      \node[code] (tgta) at (2.5, -1) {\texttt{fun\_a}};
      \node[code] (tgtb) at (3.5, -1.75) {\texttt{fun\_b}};
      \node[code] (tgtc) at (4.5, -1) {\texttt{fun\_c}};

      \draw[->] (call1) [out=270, in=90] to 
        node[pos=0.1, right] {$\callee \gets \mathtt{fun\_1}$}
        node[pos=0.9, right] {$\callee = \mathtt{fun\_1}?$} (tgt1);
      \draw[->] (call2) [out=270, in=90] to 
        node[pos=0.1, right] {$\callee \gets \mathtt{fun\_b}$}
        node[pos=0.9, right] {$\callee = \texttt{fun\_b}?$} (tgtb);
    \end{tikzpicture}
    \caption{\Triosecuris: Checks against the intended (dynamic) call target
      precisely detect when a misprediction has occurred.}
    \label{fig:Triosecuris}
  \end{subfigure}
  \caption{Granularity comparison of different IBT defenses against speculative executions attackers}
  \label{fig:IBT-comparison}
  \vspace{-1em}
\end{figure}

While originally designed to protect against \newreviews{sequential}
memory corruption attacks~\cite{CET, GaidisMSMAK23},
\newreviews[REQ4C]{hardware IBT protection} can also be leveraged to mitigate
speculative control-flow hijacking~\cite{GaidisMSMAK23, KoruyehSKSA20,
MosierNMT24, NarayanDMCJGVSS21}, \newreviews[REQ4C]{where software techniques are
infeasible. Spectre BTB attacks allow
jumps to any point in the code, making it infeasible to place software
mitigations at every possible speculative target.
Software techniques thus have to rely on checking the destination before
performing the jump or call. This is, however, impossible for Spectre BTB
attacks, where the misprediction is triggered by the call instruction itself and
cannot be prevented by any prior checks.

%
Hardware IBT, on the other hand, does not require any
instrumentation before the call instruction, and thus prevents speculative
calls to any location that does not contain the marker instruction.
However, hardware IBT by itself does not restrict which functions can be called
from where.}\removereviews{Some of these speculative
defenses~\mbox{\cite{MosierNMT24, NarayanDMCJGVSS21}}
are very coarse-grained, only marking locations that can be reached using
any indirect call, preventing jumps into the middle of a function,
but not restricting which functions can be called from where.}
As illustrated in \autoref{fig:IBT}, this allows an attacker to still
speculatively call any function even if it is not the intended target
(represented by the dotted and dashed lines).
Serberus~\cite{MosierNMT24} \removereviews{and Swivel~\cite{NarayanDMCJGVSS21}}\newreviews[REQ3]{uses
only such coarse-grained CFI protection, as they}
use additional mechanisms in the compiler to cope with this
imprecision\newreviews[REQ3]{ and obtain complete BTB protection (see \autoref{sec:related-work})}.

Other work~\cite{GaidisMSMAK23, KoruyehSKSA20} has applied ideas from
sequential control-flow hijacking prevention directly to mitigate Spectre BTB
by augmenting hardware CFI protections with labels, which are statically
computed at compile-time from the program's control flow graph. 
As illustrated in \autoref{fig:FineIBT}, all indirect calls and all indirectly callable functions are
annotated with a label, and an indirect call is only allowed to proceed if the
labels match, effectively limiting the targets to a limited set of
developer-intended functions. However, this 
does not yield complete protection against Spectre BTB, as it only
restricts mispredicted execution to a set of functions, which is statically
overapproximated, allowing clever attackers to still circumvent these defenses
in practice~\cite{WiebingTBG24}.

%
\begin{example}
  \autoref{lst:running-example-fineibt} shows FineIBT~\cite{GaidisMSMAK23}
  applied to the program in \autoref{lst:running-example}. Before the call, the
  identifier of a set of allowed functions is stored in a designated register
  $\mathit{id}_S$. All sequentially callable functions, here $\mathit{fun\_1}$
  and $\mathit{fun\_2}$, must share this identifier.
  The $\cctarget$ instruction (on Intel called ENDBR64 or ENDBR32)
  at the beginning of each function marks valid call targets.
  After the $\cctarget$, a conditional branch checks the
  identifier, branching to $\mathit{exit}$ in case of a mismatch.
  Note that the
  function $\mathit{fun\_a}$, which is not sequentially reachable from
  $l_\mathit{cont}$, checks against a different identifier.

  While this greatly reduces the possible attack surface, it does not offer
  complete protection. For example, the second attack described in
  \autoref{ex:btb-slh-bypass} is not prevented:
  Since $\mathit{fun\_1}$ and
  $\mathit{fun\_2}$ must share the same set identifier, a misprediction of one
  instead of the other cannot be detected. 
  \jb{we \emph{could} mention here that the inserted branches are vulnerable,
  but can be protected with USLH - but currently I've dropped PHT protection
entirely from this example}
\end{example}

\begin{listing}[FineIBT-hardened program]
  \label{lst:running-example-fineibt}
  We show only the basic blocks which are modified w.r.t.
  \autoref{lst:running-example}.
  
  \centering
  \begin{tikzpicture}
    \node[draw, rectangle, scale=0.8] (lcont) {
      \begin{minipage}{0.55\linewidth}
        \vspace{-1em}
        \begin{align*}
          l_\mathit{cont}: \quad& \highlight{\casgn{\mathit{id}_S}{1}}\\
          & \ccall \var{fun}\\
                                & \vspace{-.1em}\scalebox{0.8}{\large\vdots}
        \end{align*}
      \end{minipage}
    };
    \node[draw, rectangle, scale=0.8, right=0.2 of lcont] (fun2) {
      \begin{minipage}{0.6\linewidth}
        \vspace{-1em}
        \begin{align*}
          \mathit{fun\_2}: \quad&\highlight{\cctarget}\\
                                &\highlight{\cbranch{\mathit{id}_S \neq
          1}{\mathit{exit}}}\\
                                & \cload{\var{x}}{\var{base} +
        \var{arg1}}\\
                                & \cload{\var{y}}{\var{x}}\\
                                & \vspace{-.1em}\scalebox{0.8}{\large\vdots}
        \end{align*}
      \end{minipage}
    };
    \node[draw, rectangle, scale=0.8, below left=0.1 and 0.2 of fun2] (fun1) {
      \begin{minipage}{0.55\linewidth}
        \vspace{-1em}
        \begin{align*}
          \mathit{fun\_1}: \quad&\highlight{\cctarget}\\
                                &\highlight{\cbranch{\mathit{id}_S \neq
          1}{\mathit{exit}}} \\
                                & \vspace{-.1em}\scalebox{0.8}{\large\vdots}
        \end{align*}
      \end{minipage}
    };
    \node[draw, rectangle, scale=0.8, right=0.2 of fun1] (funa) {
      \begin{minipage}{0.6\linewidth}
        \vspace{-1em}
        \begin{align*}
          \mathit{fun\_a}: \quad&\highlight{\cctarget}\\
                                &\highlight{\cbranch{\mathit{id}_S \neq
          2}{\mathit{exit}}} \\
                                & \vspace{-.1em}\scalebox{0.8}{\large\vdots}
        \end{align*}
      \end{minipage}
    };
  \end{tikzpicture}
\end{listing}

\section{Key Ideas}
\label{sec:key-ideas}

\subsection{Precisely Detecting BTB Misprediction}
\label{sec:ki-btb}

In this work, we show that when using CET to defend against Spectre BTB, no
approximation is necessary. Instead of using statically computed labels, we make
direct use of the indirectly called function pointer, which we store
in a designated register immediately before the call,
as illustrated in \autoref{fig:Triosecuris}. All possible call
targets are also instrumented with a check comparing this designated register to
their actual location.
While this check is useless in the absence of speculation, it is able to detect
BTB mispredictions \emph{precisely}, as a mismatch
between the intended target at the ISA level (the value stored in our
designated register) and the misspeculated target location predicted by the BTB.
Thus, unlike FineIBT, this approach can be used to defend against all Spectre
BTB attacks.

As detection alone does not prevent Spectre attacks, we combine this approach
with Ultimate SLH (\autoref{sec:uslh}).
\removereviews{Our approach integrates nicely, as we can
directly update the same misspeculation flag, thus expanding the existing PHT
protection to also cover BTB misprediction.}
\newreviews[REQ4B1]{Since Ultimate SLH (as well as other SLH variants) mask
vulnerable data based on a \emph{misspeculation flag}, we can set
this flag when BTB misprediction is detected to apply the same protection. This
does not interfere with the existing detection mechanism for mispredicted
branches, so the resulting mitigation accurately detects both PHT and BTB
mispredictions. In combination with the detection mechanism for RSB
mispredictions (\autoref{sec:ki-rsb}), we obtain full protection against Spectre
PHT, BTB and RSB attacks, as well as combinations thereof.}

\begin{listing}[\Triosecuris-hardened program (calls)]
  \label{lst:running-example-specibt}
  We show only the basic blocks which are modified w.r.t. \autoref{lst:tracking-pht}
  and \autoref{lst:running-example-masking}.

  \centering
  \begin{tikzpicture}
    \node[draw, rectangle, scale=0.8] (lcont) {
      \begin{minipage}{0.4\linewidth}
        \vspace{-1em}
        \begin{align*}
          l_\mathit{cont}: \quad& \highlight{\casgn{\callee}{\var{fun}}}\\
          & \ccall \var{fun}\\
                                & \vspace{-.1em}\scalebox{0.8}{\large\vdots}
        \end{align*}
      \end{minipage}
    };
    \node[draw, rectangle, scale=0.8, right=0.15 of lcont] (fun2) {
      \begin{minipage}{0.72\linewidth}
        \vspace{-1em}
        \begin{align*}
          \mathit{fun\_2}: \quad&\highlight{\cctarget}\\
                                &\highlight{\casgn{\msf}{\ccond{\callee
                                \neq \efp{\mathit{fun\_2}}}{1}{\msf}}}\\
                                & \cload{\var{x}}{ \msf
          \operatorname{\texttt{?}} 0 \operatorname{\texttt{:}} \var{base} +
        \var{arg1}}\\
                                & \cload{\var{y}}{ \msf
          \operatorname{\texttt{?}} 0 \operatorname{\texttt{:}}\var{x}}\\
                                & \vspace{-.1em}\scalebox{0.8}{\large\vdots}
        \end{align*}
      \end{minipage}
    };
    \node[draw, rectangle, scale=0.8, below =0.1 of fun2] (fun1) {
      \begin{minipage}{0.72\linewidth}
        \vspace{-1em}
        \begin{align*}
          \mathit{fun\_1}: \quad&\highlight{\cctarget}\\
                                &\highlight{\casgn{\msf}{\ccond{\callee
                                \neq \efp{\mathit{fun\_1}}}{1}{\msf}}}\\
                                & \vspace{-.1em}\scalebox{0.8}{\large\vdots}
        \end{align*}
      \end{minipage}
    };
  \end{tikzpicture}
\end{listing}
\begin{example}
  \autoref{lst:running-example-specibt} shows \Triosecuris (including Ultimate SLH)
  applied to \autoref{lst:running-example}. In contrast to FineIBT
  (\autoref{lst:running-example-fineibt}), we make direct use of the called
  function pointer instead of a statically determined identifier. We also update
  the misspeculation flag directly, instead of introducing a branch.
  This version fully prevents both attack types shown in
  \autoref{ex:btb-slh-bypass}, and offers full protection against Spectre BTB
  attacks.
\end{example}

%
%
%

\newreviews[REQ2]{
\subsection{Precisely Detecting RSB Misprediction}
\label{sec:ki-rsb}

We use a similar approach to detect RSB misprediction.
Our approach relies on the fact
that, as described in \autoref{sec:bg-btb-rsb}, RSB
entries can only be created by call instructions\jb{cn}, thus,
the RSB can only predict targets which immediately follow a call.
%
%
\ch{+When the right flag is set that prevents the BTB from being used instead of the RSB}%
This restricts the set of possible target locations sufficiently that we can
also instrument each target location with a check. We use a similar mechanism as
in \autoref{sec:ki-btb}: Before a return, we read the top of the call stack
and store it in a designated register.
After each call instruction, we
place a check comparing this designated register to the actual location,
updating the misspeculation flag if a mismatch is detected.

\begin{listing}[\Triosecuris-hardened program (returns)]
  \label{lst:running-example-specibt-ret}
  We show only the basic blocks modified w.r.t. \autoref{lst:tracking-pht}
  and \autoref{lst:running-example-masking}.

  \centering
  \begin{tikzpicture}
    \node[draw, anchor=north west, rectangle, scale=0.8] (lcont) {
      \begin{minipage}{0.4\linewidth}
        \vspace{-1em}
        \begin{align*}
          l_\mathit{cont}: \quad & \casgn{\callee}{\var{fun}}\\
          & \ccall \var{fun}\\
          & \highlight{\casgn{\msf}{\ccond{\callee \neq \efp{l_\mathit{cont} +
          2}}{1}{\msf}}}\\
                                & \vspace{-.1em}\scalebox{0.8}{\large\vdots}
        \end{align*}
      \end{minipage}
    };
    \node[draw, rectangle, scale=0.8, right=0.2 of lcont.north east, anchor=north west] (fun2) {
      \begin{minipage}{0.4\linewidth}
        \vspace{-1em}
        \begin{align*}
          \mathit{fun\_2}: \quad& \vspace{-.1em}\scalebox{0.8}{\large\vdots}\\
                                & \highlight{\cpeek{\callee}}\\
                                & \cret
        \end{align*}
      \end{minipage}
    };
  \end{tikzpicture}
\end{listing}
\begin{example}
  \autoref{lst:running-example-specibt-ret} shows \Triosecuris (including Ultimate SLH)
  applied to \autoref{lst:running-example}. 
  The $\cpeek{\callee}$ instruction reads the head of the call stack and stores
  it in the designated register $\callee$, which is also used to detect call misspeculation.
\end{example}

}

\subsection{Undefined Values}
\label{sec:ki-undefined-values}


Correctly setting up the formal definitions and proving end-to-end relative
security for \Triosecuris was nontrivial, most interestingly because of undefined behavior.
We built \Triosecuris as an extension of Ultimate SLH, which has a realistic
implementation as one of the final passes of the LLVM compiler~\cite{ZhangBCSY23},
since this reduces the likelihood that compilation passes break security.
Working in such a low-level intermediate language is also convenient for \Triosecuris, since it
makes use of security features implemented in hardware (\IE the CET IBT instructions).
Yet, this language still features an abstract notion of labelled basic blocks
and, like in prior compiler formalizations~\cite{CompCertMM, ZakowskiBYZZZ21},
we make the semantics feature \emph{code pointer values}, which are distinct
from numerical values.
This allows us to still easily transform the code without changing the values of
the code pointers in the semantics of this intermediate language.

Since in mainstream compilers like LLVM all these values get compiled to just machine integers, the
distinction between code pointers and integers introduces undefined behavior in
the semantics, as the result of a comparison between a code pointer and a
numerical value cannot be determined before the program is compiled and laid out in memory.
In the presence of such undefined behavior, we can provide end-to-end security
guarantees only for safe inputs, which sequentially do not trigger undefined behavior.

Despite this seeming downside, defining the mitigation at this level is highly
advantageous compared to working at lower levels of abstraction or even directly
on machine code, as inserting instructions and blocks only requires tracking offsets locally within each block,
rather than across the whole program;\yh{In Fig. 6, we also track offsets for RSB protection.} further, working on machine code
would introduce the extra challenge of distinguishing whether constants are to
be interpreted as code pointers, and thus need to be adjusted, or as numeric
values, which must be left untouched. Finally, if a value were to be used as
both, it would not be possible to apply our mitigation while preserving the
sequential behavior of the program. We thus leave such behaviors undefined for
our language.

However, restricting our guarantees to inputs and executions free of such
undefined behaviors is not sufficient to be able to prove relative
security: 
Due to the Ultimate SLH masking applied to the address of loads, during
misspeculated execution all loads will read from the same masked address, in
our case zero ($0$). This applies independently of whether the load would
sequentially load a code pointer or a numerical value. Thus, this masking
would lead to undefined operations during speculative execution for any
operations that depend on previously loaded values.

\begin{listing}[Ill-typed comparison induced by masking]
\renewcommand{\theHequation}{lst\thelisting.\theequation}
  \label{lst:ill-typed-masking}
%
  ~\\ 

  \vspace{-.5em}
  \centering
  \begin{tikzpicture}
    \node[draw, rectangle, scale=0.8] (src) {
      \begin{minipage}{0.35\linewidth}
        \vspace{-1em}
        \begin{alignat}{2}
          & \cstore{i}{\efp{g}}\\
          & \cload{n}{j}\\
          & \casgn{b}{n \leq 42}\label{eq:ex-ill-typed}\\
          & \cbranch{b}{l}
        \end{alignat}
      \end{minipage}
    };
    \node[draw, rectangle, scale=0.8] (tgt) at (5.2, 0) {
      \begin{minipage}{0.5\linewidth}
        \vspace{-1em}
        \begin{align*}
          &\cstore{\ccond{\msf = 1}{0}{i}}{\efp{g}}\\
          &\cload{n}{\ccond{\msf = 1}{0}{j}}\\
          &\casgn{b}{n \leq 42}\\
          & \cbranch{\ccond{\msf = 1}{0}{b}}{l'}
        \end{align*}
      \end{minipage}
    };
    \draw[->] (src) -- node[above]{compiles to} (tgt);
  \end{tikzpicture}
\end{listing}
\begin{example}
  \label{ex:ill-typed-masking}
  Consider the snippet on the left of \autoref{lst:ill-typed-masking} and its compilation on the right.
  Let's assume that sequentially, address $j$ contains a number, so that sequentially the comparison with the 42 in line
  \eqref{eq:ex-ill-typed} compares two numbers, which is a defined
  operation. However, during misspeculated execution of the transformed program,
  $\msf$ is set to 1 and thus now both the store and the load use address 0
  instead. Therefore, the load in line 2 reads the value $\efp{g}$, and the comparison
  in line 3 is now between a code pointer and a number, which is undefined.
\end{example}

If we were to treat these undefined operations as undefined behavior, our
speculative semantics would not describe the behavior of our mitigation. 
This would allow later compilation steps to refine it arbitrarily, potentially in
ways that leak secrets, which would break end-to-end relative security.
To prevent this, we ensure that our \Triosecuris mitigation does not introduce
undefined behavior.
In other words, we prove that our mitigation preserves safety: Any safe input,
\IE one that doesn't trigger undefined behavior sequentially in the source,
is also safe during speculative execution of the \Triosecuris-hardened program.
This then allows us to compose the relative security of \Triosecuris with the
relative security of our subsequent step (\autoref{fig:composition} and
\autoref{sec:translation-mc}), which only holds for safe inputs because it
refines undefined behavior.

In order for safety preservation to hold for \Triosecuris, we need to make examples
like the one on the right-side of \autoref{ex:ill-typed-masking} be defined in
the semantics of our intermediate language.
For this we introduce CompCert-style \emph{undefined values}~\cite{CompCertMM}
to represent values that cannot be determined at the current level of
abstraction, but that are refined to arbitrary concrete values during later
compilation steps, like the one in \autoref{sec:translation-mc}.
When an undefined value is created---\EG when speculatively comparing an integer
introduced by our masking with a code pointer, like on line 3 on the
right-side of \autoref{ex:ill-typed-masking}---the program proceeds normally.
Only if an undefined
value is used as the address of a load or store, or as the target of a call,
do we consider this to be undefined behavior.
However, undefined values which are the result of masking will not be used in
this way, as Ultimate SLH also masks all loads, stores, and calls,
which during speculative execution replaces any undefined values with constants.
\jb{Reviewer comment: This sentence is confusing, please revise}\ch{Started,
  please have a look.}

\begin{example}
  Recall the snippet in \autoref{ex:ill-typed-masking}.
  By allowing an undefined value for $b$ after line 3, we can let the execution proceed without undefined behavior:
  In the next line, where $b$ would be used, the expression is again masked with $\msf$, so that the result is independent of the (undefined) value of $b$.
\end{example}
Thus, \Triosecuris does not introduce undefined behavior:
\begin{lemma}[\Triosecuris preserves safety]
\label{lem:specibt-safety-preservation}
For any input for which the source program does not encounter undefined behavior
with respect to the sequential semantics of our intermediate language, the result of hardening this
program using \Triosecuris does not encounter undefined behavior with respect to the
speculative semantics of the language.
\end{lemma}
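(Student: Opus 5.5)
We prove the lemma with a simulation between the sequential execution of the source program and the speculative execution of the hardened program, carrying an invariant strong enough to exclude undefined behavior in the target. The invariant has two regimes, selected by the value of $\msf$ in the target state.

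\emph{Step 1: the invariant.} In the \emph{non-misspeculating} regime ($\msf = 0$), the target state is the source state plus the transformation's bookkeeping:
\begin{itemize}
\item the target program counter is the image of the source one under the block and offset translation;
\item registers and memory agree on all source variables;
\item $\callee$ holds either the pointer just called or the return address on top of the stack, as appropriate;
\item the call stacks correspond.
\end{itemize}
In the \emph{misspeculating} regime ($\msf = 1$), we only require well-formedness:
\begin{itemize}
\item the program counter points into transformed code at a position reachable under CET (a $\cctarget$, or the instruction after a call for RSB);
\item $\msf$ is a defined integer;
\item the stack contains only return addresses produced by calls.
\end{itemize}
Values may be arbitrary, including undefined.

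\emph{Step 2: the non-misspeculating case.} Each target step with correct directives is matched by a source sequential step. Safety of the source step gives that every operand used as a load or store address, call target, or division argument is defined and well-typed. The corresponding masked operand evaluates to the same value because $\msf = 0$. The inserted instructions are then checked directly:
\begin{itemize}
\item $\casgn{\callee}{\var{fun}}$ and $\cpeek{\callee}$ are always defined;
\item the $\msf$ updates $\ccond{\callee \neq \efp{f}}{1}{\msf}$ compare two code pointers, which is defined;
\item the branch-tracking updates reuse the already-defined branch condition.
\end{itemize}
A directive that mispredicts (branch, BTB call, or RSB return) either faults under CET, which is a defined behavior and not UB, or lands at an instrumented target. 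There the first inserted instruction detects the mismatch and sets $\msf := 1$, moving us into the misspeculating regime with the well-formedness invariant established.

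\emph{Step 3: the misspeculating case.} This is the step I expect to be the main obstacle, since we have no source state to lean on. We show that every instruction of transformed code is UB-free whenever $\msf = 1$. Ultimate SLH masks every address, branch condition, call target, and leaked operand with $\ccond{\msf = 1}{0}{e}$. By the semantics of undefined values, the conditional expression with a defined true condition yields the constant branch $0$ even when $e$ is undefined, so addresses become $0$ and are defined. Ordinary assignments may produce undefined values but are never UB.

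Two points need care. First, masked call targets must not be the integer $0$, because a call on an integer is UB. I would check that the transformation masks calls in a way that yields a code pointer or faults, and otherwise define calling a non-pointer speculatively as a CET fault. Second, the $\msf$ updates must keep $\msf$ defined when $\callee$ is undefined. Here I would use that the conditional with an undefined condition is still UB-free, and argue that $\ccond{\cdot}{1}{\msf}$ with $\msf = 1$ yields $1$ in both branches. If the semantics instead propagates undefinedness, I would strengthen the invariant to track that $\callee$ is always a defined pointer, because it is only written from masked call targets or from the stack, whose contents are return addresses.

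\emph{Step 4: conclusion.} Induction on the length of the speculative execution, using Steps 2 and 3, gives that no reachable target state is stuck on UB, which is the statement of the lemma.
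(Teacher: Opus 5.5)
Your plan works, but it takes a different route from the paper's proof. The paper never simulates the sequential source run against the hardened run directly. It proceeds in two parts:
\begin{itemize}
\item It shows that the \emph{ideal} semantics of the source is safe whenever the sequential one is. The two coincide until $\ms$ is raised; afterwards $\ms$ is never reset, and the semantics itself replaces every UB-prone operand by a constant.
\item It transfers ideal safety to $\tr{\prog}$ using a strengthened form of the backwards simulation of \autoref{lem:specibt-bcc}. The last speculative state of any prefix is matched by an ideal state; that state can step; and a case analysis on its step yields a speculative step.
\end{itemize}

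You instead run one simulation against the sequential semantics and, once $\msf=1$, drop the source and reason from a target-only invariant on the hardened code. This is self-contained and gives a nice modular fact: hardened code with $\msf=1$ is safe from any well-formed state. The cost is that you must rebuild by hand the bookkeeping the paper inherits from the BCC relation, which it needs anyway for relative security:
\begin{itemize}
\item extra cases for the states between a misprediction and the $\msf$ update that detects it (these have $\msf=0$ but match no sequential source state; after a BTB misprediction there are two, the $\cctarget$ and the check);
\item registers, memory and stack related by the offset-adjusting $\sim$ rather than equal;
\item above all, the invariant that $\callee$ holds a code pointer at every check site, so that $\msf$ stays the number $1$.
\end{itemize}
Your first option for the last point depends on how a conditional with an undefined condition evaluates. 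The second option is the robust one, but state it only for check sites: $\cpeek{\callee}$ on an empty stack does write $\UV$, harmlessly, since the following $\cret$ terminates. Derive it from layout facts: a call instruction is reachable only by falling through its $\callee$ assignment, function entries only by calls (apart from the initial state), and post-call sites only by returns.

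Two smaller points. You need not, and may not, alter the semantics for masked calls, since call targets are masked with $\efp{(0,0)}$, a code pointer. Also, ``masked addresses become $0$'' needs address $0$ to be valid. This is why the formal statement assumes $\lengthof{\mu}>0$, and it quantifies only over well-formed directives so that a mispredicted call either faults or reaches a $\cctarget$.
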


We used property-based testing in Rocq~\cite{Paraskevopoulou15} to discover and
fix issues with safety preservation before trying to prove it. In
particular, testing produced counterexamples like the one in
\autoref{lst:ill-typed-masking}, which we fixed by introducing undefined values.

\section{Defining \Triosecuris}
\label{sec:defining-specibt}
\subsection{Language Syntax and Sequential Semantics}
\label{sec:lang}
In this paper, we start from a simple low-level language with basic blocks
that we call \MiniMIR, because it is a simple model of the Machine IR (MIR) intermediate
language of LLVM where SLH is performed.
On top of this, \MiniMIR includes the CFI protections introduced by
\Triosecuris, which is a transformation from \MiniMIR to \MiniMIR (see
\autoref{fig:composition} in \autoref{sec:intro}).
The \MiniMIR syntax is shown in \autoref{fig:syntax}.
Programs consist of a list of basic blocks, where each block is a tuple of a list of
instructions (of type \texttt{inst}) and a boolean flag indicating whether the block
corresponds to the beginning of a function, in which case we refer to the
first instruction of this block as a \emph{function entry point}.

Expressions can, in addition to numeric constants, also contain
code-pointer constants. We write $\efp{(l, o)}$, where $(l, o)$ is a pair of a
label and an offset,
for the expression, and denote the corresponding value by $\vfp{(l, o)} \in \mathcal{CP}$.\ch{Didn't
  get this very last part. Does the difference in font matter, and anyway what
is this trying to say?}\jb{Is this better?}\ch{So we are using different fonts
to mean 2 different things?}\jb{Yes, this differentiates values from the
expressions that create them}
Also note the presence of a constant-time conditional expression ($\ccond{b}{e_1}{e_2}$),
which is essential for the SLH mitigation to work.
Our instruction set includes indirect calls
as well as a $\cctarget$ instruction that is used to mark valid call targets in
the speculative semantics, but otherwise behaves like a skip.
We assume that the source program does not use the $\cctarget$ instruction; it
will instead be inserted by our transformation (\autoref{sec:specibt-transformation}).
\newreviews[REQ2]{
  Further, our instruction set also includes a \texttt{peek} instruction,
  which stores the current topmost element of the return stack in the target
  register.\jb{Do we need to elaborate why this is its own instruction? mention
  explicitly that the return stack is not placed in memory? Do we need to
justify this as reading the shadow stack to preempt concerns about speculative
return address overwrites?}\ch{Okay to quickly elaborate on these.
  You seem to talk about the shadow stack in a couple of paras though.}
}


\begin{figure}
  \centering
  \[
  \begin{aligned}
    e \in\texttt{exp} ::=& ~ n\in\mathbb{N} && \text{number}\\
    |& ~ \efp{(l, o)} \in \mathcal{CP} && \text{code pointer}\\
    |& ~ \texttt{X} \in \mathcal{R}&&\text{register}\\
    |& ~ \op(e, \ldots, e)&&\text{arithmetic operator}\\
    |& ~ \ccond{b}{e}{e} &&\text{constant-time conditional}\\
    c\in\texttt{inst} ::=&~ \cskip &&\text{do nothing}\\
    |&~\casgn{\var{X}}{e}&&\text{assignment}\\
    |&~\cdiv{\var{X}}{e}{e}&&\text{division}\\
    |&~\cbranch{e}{l}&&\text{conditional branch}\\
    |&~\cjump{l}&&\text{unconditional jump}\\
    |&~\cload{\var{X}}{e}&&\text{load from address}\\
    |&~\cstore{e}{e}&&\text{store to address}\\
    |&~\ccall{e}&&\text{indirect call}\\
    |&~\cctarget&&\text{call-target marker}\\
    |&~\cpeek{\var{X}}&&\text{peek}\\
    |&~\cret&&\text{return}\\
  \end{aligned}
  \]
  \vspace{-1em}
  \caption{Language syntax}
  \label{fig:syntax}
  \vspace{-.5em}
\end{figure}

\ch{Shouldn't we call them registers instead of variables?
  AFAIR we already made this change at in the SpecCT SecF chapter already, right?
  Even if we agree this change may be too invasive to do at this point,
  but we should at least explain that our variables are similar to registers,
  in particular accessing them is not an observation.}
  \jb{done. Do we need to mention that we consider an unlimited number of
  registers?}\ch{Okay}

As explained in \autoref{sec:ki-undefined-values}, we consider not only natural
numbers and code pointer values $\vfp{(l, o)}$, but also an
undefined value $\UV$, which cannot be determined at this
level of abstraction (but will be refined to concrete values at lower levels, see
\autoref{sec:ki-undefined-values} and \autoref{sec:translation-mc}).
Accordingly, we define register assignments $\rho : \mathcal{R} \rightarrow
\mathbb{N} \cup \mathcal{CP} \cup \{\UV\}$
and memories $\mu : \mathbb{N} \rightarrow \mathbb{N} \cup \mathcal{CP} \cup
\{\UV\}$ to be able to store all three kinds of values, and define the
evaluation of expressions $\eval{e}{\rho}$ as described in
\autoref{fig:eval-exp}.
We write $\subst{\var{X}}{v}{\rho}$ to denote updating the value of $\var{X}$ to $v$ in
$\rho$, and similarly $\subst{i}{v}{\mu}$ to denote updating the value at index
$i$ to $v$. Memory lookups are written $\mu(i)$.

\begin{figure}
\centering
  \begin{align*}
  \op &::= \kw{+} ~\barB~ \kw{-} ~\barB~ \kw{*} ~\barB~ \kw{=} ~\barB~
          \kw{\leq} ~\barB~ \kw{\land} ~\barB~ \kw{\rightarrow} 
  \end{align*}
  \vspace{-1em}
  \begin{align*}
  v_1\, \op \, v_2 &=
    \begin{cases}
      (l_1, o_1) = (l_2, o_2) & \text{if } (v_1 = \vfp{(l_1, o_1)}, \\
      & \phantom{\text{if } (} v_2 = \vfp{(l_2, o_2)})\\
      & \op \text{ is } \kw{=} \\
      \textit{standard} & \text{if } v_1, v_2 \in \mathbb{N}\\
      \UV & \text{otherwise}
    \end{cases} \\
  \eval{n}{\rho} &= n \\
  \eval{\efp{(l, o)}}{\rho} &= \vfp{(l, o)} \\
  \eval{x}{\rho} &= \rho(x) \\
  \eval{e_1 \, \op \, e_2}{\rho} &= \eval{e_1}{\rho} \; \op \; \eval{e_2}{\rho} \\
  \eval{\ccond{b}{e_1}{e_2}}{\rho} &= \ite{\eval{b}{\rho} \neq 0}{\eval{e_1}{\rho}}{\eval{e_2}{\rho}}
  \end{align*}
\caption{Evaluation of expressions. We assume the standard interpretation of
binary operators on integers. Note that the conditional expression
$\ccond{b}{e_1}{e_2}$ can still produce a defined value if only the unused
expression is undefined.}
\label{fig:eval-exp}
\end{figure}

Beyond the points above, the small-step operational semantics to \MiniMIR is
standard.
It can be obtained from \autoref{fig:spec-semantics-specibt} by ignoring the
\highlight{highlighted parts}, which will be added by the speculative
semantics\ifappendix~(and is also given verbatim in \autoref{fig:spec-semantics-minimc})\fi.
Program states are either a tuple of a program
counter $\mathit{pc}$, a register assignment $\rho$, a memory $\mu$ and a stack
$\mathit{stk}$ of return locations, or the special state $\termstate$
representing successful termination.
Program counter values consist of a basic block label and an offset within the block.
\newreviews[REQ2]{
The stack $\mathit{stk}$ abstractly represents a CET-protected return stack. We
do not model the combination of a normal stack in memory with an additional
protected shadow stack, instead, our language only relies on the stack abstractly represented
within program states, which is modified only by \texttt{call} and \texttt{ret}
instructions, and can be read from using \texttt{peek}.
}

We write $\prog \vdash \state{\mathit{pc}, \rho, \mu, \mathit{stk}} \seqstep{o}
\state{\mathit{pc}', \rho', \mu', \mathit{stk}'}$ for a step optionally
producing an \emph{observation} $o$, or $\prog \hspace{-2pt}\vdash\hspace{-2pt} \state{\mathit{pc}, \rho,
\mu, \mathit{stk}} \seqstep{o} \cdot$ if we do not care about the state after
the step. As the program remains fixed throughout the execution, we may omit it
and the turnstile from the relation, although it is used in the premises.

Observations $o \in \mathit{Option}(\mathit{Obs})$ represent the information
that can potentially be gained by an attacker via side channels.
We use a standard\jb{Is this still standard?}\ch{Why not?} model for sequential attackers,
which observe any accessed memory addresses ($\oload{a}$ and $\ostore{a}$ resp.
for read and write operations at address $a$), the control flow of a program
($\obranch{b}$ for a branch whose condition evaluates to $b$ and $\ocall{l}$ for
a call to label $l$), \newtext{as well as the operands of the division
instruction ($\odiv{n_1}{n_2}$).}
For steps that do not produce an observation, we write $\bullet$.
%
Expression evaluation does not produce observable events, since as
mentioned above we assume the conditional expression
$\ccond{\mathit{b}}{e_1}{e_2}$ is internally implemented using branchless and
unpredicted logic (such as conditional moves~\cite{Carruth18}).

Multi-step execution is the reflexive and transitive closure of the step
relation, written $\seqmulti{\Obss}$, where $\Obss \in
\mathit{List}(\mathit{Obs})$ is a trace of events.
Silent steps leave no mark in the trace, in other words,
$\Obss$ contains only observations that are not $\bullet$.

\subsection{Speculative Semantics}

\label{sec:spec-sem}

\begin{figure}
  \centering
  \resizebox{\columnwidth}{!}{%
  \(
  \begin{array}{c}
  \inferrule*[left=\textsc{Spec\_Skip}]
    { \fetch{\prog}{\pc}{\cskip} }
    { \runningstate{\scfg{\cfg{\pc}{\rho}{\mu}{\stk}}{\highlight{\mfalse}}{\highlight{\ms}}}
      \specstep{\bullet}{\highlight{\bullet}}
      \runningstate{\scfg{\cfg{\pc+1}{\rho}{\mu}{\stk}}{\highlight{\mfalse}}{\highlight{\ms}}}
    }
  \\[1.2ex]
  \inferrule*[left=\textsc{Spec\_Asgn}]
    { \fetch{\prog}{\pc}{\casgn{x}{e}} \quad \eval{e}{\rho} = v }
    { \runningstate{\scfg{\cfg{\pc}{\rho}{\mu}{\stk}}{\highlight{\mfalse}}{\highlight{\ms}}}
      \specstep{\bullet}{\highlight{\bullet}}
      \runningstate{\scfg{\cfg{\pc+1}{\subst{x}{v}{\rho}}{\mu}{\stk}}{\highlight{\mfalse}}{\highlight{\ms}}}
    }
  \\[1.2ex]
  \inferrule*[left=\textsc{Spec\_Div}]
    { \fetch{\prog}{\pc}{\cdiv{x}{e_1}{e_2}} \quad \eval{e_1}{\rho} = n_1 \quad \eval{e_2}{\rho} = n_2 \quad
    n = {\begin{cases} n_1/n_2 & \text{if } n_2 \neq 0 \\ \UV & \text{if } n_2 = 0 \end{cases}}
    }
    { \runningstate{\scfg{\cfg{\pc}{\rho}{\mu}{\stk}}{\highlight{\mfalse}}{\highlight{\ms}}}
      \specstep{\odiv{n_1}{n_2}}{\highlight{\bullet}}
      \runningstate{\scfg{\cfg{\pc+1}{\subst{x}{n}{\rho}}{\mu}{\stk}}{\highlight{\mfalse}}{\highlight{\ms}}}
    }
  \\[1.2ex]
  \inferrule*[left=\textsc{Spec\_Branch}]
  { \substack{
    \fetch{\prog}{\pc}{\cbranch{e}{l}} \quad \eval{e}{\rho} = n \quad b = (n \neq 0) \\
     \pc' = \ite{\highlight{b'}}{(l,0)}{\pc+1} \quad \highlight{\ms' = \ms \lor (b \neq b')}
   } }
  { \runningstate{\scfg{\cfg{\pc}{\rho}{\mu}{\stk}}{\highlight{\mfalse}}{\highlight{\ms}}}
    \specstep{\obranch{b}}{\highlight{\dbranch{b'}}}
    \runningstate{\scfg{\cfg{\pc'}{\rho}{\mu}{\stk}}{\highlight{\mfalse}}{\highlight{\ms'}}}
  }
  \\[1.2ex]
  \inferrule*[left=\textsc{Spec\_Jump}]
  { \fetch{\prog}{\pc}{\cjump{l}} }
  { \runningstate{\scfg{\cfg{\pc}{\rho}{\mu}{\stk}}{\highlight{\mfalse}}{\highlight{\ms}}}
    \specstep{\bullet}{\highlight{\bullet}}
    \runningstate{\scfg{\cfg{(l,0)}{\rho}{\mu}{\stk}}{\highlight{\mfalse}}{\highlight{\ms}}} }
  \\[1.2ex]
  \inferrule*[left=\textsc{Spec\_Load}]
  { \fetch{\prog}{\pc}{\cload{x}{e}} \quad
    \eval{e}{\rho} = n \quad \mu [n] = v
  }
  { \runningstate{\scfg{\cfg{\pc}{\rho}{\mu}{\stk}}{\highlight{\mfalse}}{\highlight{\ms}}}
    \specstep{\oload{n}}{\highlight{\bullet}}
    \runningstate{\scfg{\cfg{\pc+1}{\subst{x}{v}{\rho}}{\mu}{\stk}}{\highlight{\mfalse}}{\highlight{\ms}}}
  }
  \\[1.2ex]
  \inferrule*[left=\textsc{Spec\_Store}]
  {
    \fetch{\prog}{\pc}{\cstore{e}{e'}} \quad \eval{e}{\rho} = n \quad \eval{e'}{\rho} = v
  }
  {
    \runningstate{\scfg{\cfg{\pc}{\rho}{\mu}{\stk}}{\highlight{\mfalse}}{\highlight{\ms}}}
    \specstep{\ostore{n}}{\highlight{\bullet}}
    \runningstate{\scfg{\cfg{\pc+1}{\rho}{\subst{n}{v}{\mu}}{\stk}}{\highlight{\mfalse}}{\highlight{\ms}}}
  }
  \\[1.2ex]
  \inferrule*[left=\textsc{Spec\_Call}]
  {
    \fetch{\prog}{\pc}{\ccall{e}} \quad \eval{e}{\rho} = \vfp{(l, o)} \quad \highlight{\ms' = \ms \lor \pc' \neq (l, o)}
  }
  {
    \runningstate{\scfg{\cfg{\pc}{\rho}{\mu}{\stk}}{\highlight{\mfalse}}{\highlight{\ms}}}
    \specstep{\ocall{(l, o)}}{\highlight{\dcall{\pc'}}}
    \runningstate{\scfg{\cfg{\pc'}{\rho}{\mu}{(\pc+1)::\stk}}{\highlight{\mtrue}}{\highlight{\ms'}}}
  }
  \\[1.2ex]
  \inferrule*[left=\textsc{Spec\_Peek}]
  { \fetch{\prog}{\pc}{\cpeek{x}} \quad
    v = {\begin{cases} \pc' & \text{if } \stk = \pc' :: \stk' \\ \UV & \text{if } \stk = \epsilon \end{cases}}
  }
  { \runningstate{\scfg{\cfg{\pc}{\rho}{\mu}{\stk}}{\highlight{\mfalse}}{\highlight{\ms}}}
    \specstep{\bullet}{\highlight{\bullet}}
    \runningstate{\scfg{\cfg{\pc+1}{\subst{x}{v}{\rho}}{\mu}{\stk}}{\highlight{\mfalse}}{\highlight{\ms}}}
  }
  \\[1.2ex]
  \inferrule*[left=\textsc{Spec\_Ret}]
  {
    \fetch{\prog}{\pc}{\cret} \quad \highlight{\wfret{\pc''}} \quad \highlight{\ms' = \ms \lor \pc' \neq \pc''}
  }
  {
    \runningstate{\scfg{\cfg{\pc}{\rho}{\mu}{\pc' :: \stk}}{\highlight{\mfalse}}{\highlight{\ms}}}
    \specstep{\bullet}{\highlight{\dret{\pc''}}}
    \runningstate{\scfg{\cfg{\pc''}{\rho}{\mu}{\stk}}{\highlight{\mfalse}}{\highlight{\ms'}}}
  }
  \\[1.2ex]
  \inferrule*[left=\textsc{Spec\_Term}]
  {
    \fetch{\prog}{\pc}{\cret}
  }
  {
    \runningstate{\scfg{\cfg{\pc}{\rho}{\mu}{\emptylist}}{\highlight{\mfalse}}{\highlight{\ms}}}
    \specstep{\bullet}{\highlight{\bullet}}
    \termstate
  }
  \\[1.2ex]
  \highlight{\inferrule*[left=\textsc{Spec\_CTarget}]
  {
  \fetch{\prog}{\pc}{\cctarget}
  }
  {
    \runningstate{\scfg{\cfg{\pc}{\rho}{\mu}{\stk}}{\ct}{\ms}}
    \specstep{\bullet}{\bullet}
    \runningstate{\scfg{\cfg{\pc+1}{\rho}{\mu}{\stk}}{\mfalse}{\ms}}
  }}
  \\[1.2ex]
  \highlight{\inferrule*[left=\textsc{Spec\_Fault}]
  { \prog \llbracket \pc \rrbracket \neq \cctarget }
  {
    \runningstate{\scfg{\cfg{\pc}{\rho}{\mu}{\stk}}{\mtrue}{\ms}}
    \specstep{\bullet}{\bullet}
    \faultstate
  }}
  \end{array}\)
  }
  \caption{Speculative semantics for \MiniMIR}
  \label{fig:spec-semantics-specibt}
  \vspace{-1em}
\end{figure}

The sequential semantics and its observations only allow us to
reason about leakage that occurs due to the intended control flow
and memory accesses of programs. Spectre attacks, however,
rely on observations produced during \emph{transient} execution, when the
processor \emph{speculatively} begins to execute the instructions along an
incorrect path before the instruction which determines the correct path has
fully resolved. Furthermore, speculative attackers are able to actively influence
these transient executions, by training the hardware prediction mechanisms to
predict an attacker-chosen path.

\newreviews[REQ4B2]{
  In order to model an attacker that can mount PHT, BTB and RSB attacks, we give
  the attacker direct control over all three prediction mechanisms using
  \emph{directives}. 
  As shown by \citet{BartheCGKLOPRS21}, by quantifying over all sequences of
  directives, this securely overapproximates attacker capabilities, even without
  modeling a rollback mechanism and hardware details such as
  finite speculation windows.\jb{We can also cite my Master's project here, but only for the not-anonymized
  version}
  Because of its simplicity, this directives-based model is quite popular
  in cryptographic settings~\cite{BartheCGKLOPRS21, OlmosBBGL25, OlmosBCGLOSYZ25,
  ShivakumarBBCCGOSSY23, ShivakumarBGLOPST23}
  and beyond~\cite{ZhangBCSY23, BaumannBDHH25}.
  Moreover, \citet{GuarnieriKRV21} have shown that many other low-level details such as
  out-of-order execution can safely be ignored, as they do not cause any further
  leakage.\jb{This could probably be phrased a lot better}
}

\remove{
Our mitigation targets hardware IBT protection, which requires specific markers
($\cctarget$) to be inserted in all valid callable locations. In order to ensure
that our mitigation places these markers correctly, we need to model both an
attacker's ability to misdirect execution to arbitrary locations even within
functions, as well as a hardware mechanism that halts execution if no
$\cctarget$ marker is found.
}
\newtext{
  Further, our speculative semantics also needs to model hardware IBT
  protection, requiring that specific markers ($\cctarget$) are inserted at all
  variable locations, and halting execution if no such marker is found after a
  call. We thus use%
}%
\remove{Therefore, to model attacks in this setting, we use} the speculative semantics with
CFI protection shown in \autoref{fig:spec-semantics-specibt}. This semantics is
a strict extension of the sequential semantics\ifappendix\space(\autoref{sec:appendix-seq-semantics})\fi, which can be recovered by
removing the \highlight{highlighted additions}.
The speculative small-step semantics $\prog \vdash \state{\mathit{pc}, \rho,
\mu, \mathit{stk}, \mathit{ct}, \mathit{ms}} \specstep{o}{d}
\state{\mathit{pc}', \rho', \mu', \mathit{stk}', \mathit{ct}', \mathit{ms}'}$ 
 differs from the sequential semantics in the following ways:
\begin{itemize}[leftmargin=*,nosep,label=$\blacktriangleright$]
  \item An optional directive $d \in \mathit{Option}(\mathit{Dir})$ overapproximates
    the attacker's ability to cause execution to go down a specific
    speculative path, either by forcing a conditional branch
    with $\dbranch{b}$, by choosing the transient target of an indirect call
    with $\dcall{\pc}$ (with an arbitrary program counter value, i.e. label and
    offset), \newreviews[REQ2]{or by choosing the transient target of a return with $\dret{\pc}$
    (where the predicate $\wfret{\pc}$ restricts targets to valid return addresses,
    \IE locations immediately following a call instruction;
    this restriction is justified by the {\tt RRSBA\_DIS\_U} flag)}.
    For other instructions, no directive is needed, so we write $\bullet$
    instead.
  \item We introduce a new special state $\faultstate$, to clearly distinguish
    the faulty termination as the result of a CFI violation from both normal
    termination ($\termstate$) and from undefined behavior (stuck states).
  \item Speculative program states include a flag $\mathit{\ct}$, which tracks
    whether a $\cctarget$ instruction is expected as the next instruction. This
    flag is set to true during a call (rule \textsc{Spec\_Call}) and reset by
    the $\cctarget$ instruction (rule \textsc{Spec\_CTarget}).
    While this flag is set, any other instruction will trigger a fault (rule
    \textsc{Spec\_Fault}).
  \item We extend speculative program states with a misspeculation flag $\ms$,
    set by \textsc{Spec\_Branch}, \textsc{Spec\_Call}, and \textsc{Spec\_Ret} to track whether
    execution has diverged from the sequential path. It is not needed for the
    semantics, but helps distinguish sequential and speculative execution in our
    proofs.
\end{itemize}
As with the sequential semantics, we often elide the program $\prog$ and the
turnstile from the relation, as it is unchanged throughout the execution.
We write $\specmulti{\Obss}{\Dirs}$ for multi-step execution, where
$\Dirs \in \mathit{List}(\mathit{Dir})$ is a sequence of attacker
directives.
\removereviews[REQ4B2]{
Because our definition of relative security quantifies over all execution paths,
our speculative semantics does not need to model hardware
details~
\jb{I can't infer the point of citing \cite{BrotzmanZKT21} - are we citing
examples of works that do model hardware details? (they seem to model caches).
\cite{GuancialeBD20} use an OoO semantics}
such a finite speculation window~\cite{OleksenkoFKS22},
not even rollbacks~\cite{BartheCGKLOPRS21}.
}


\subsection{\Triosecuris Transformation}
\label{sec:specibt-transformation}

\begin{figure}[t]

\begin{subfigure}{\columnwidth}
\scriptsize

\[
\begin{aligned}
\tr{i} \triangleq (\bar{\imath},\Delta)
\qquad
&\text{where }\bar{\imath}\text{ is emitted code (a basic block fragment) and } \\
&\Delta\text{ is a list of added basic blocks.}
\\
&\bar{\imath} : \mathsf{Blk} \triangleq \mathsf{list}\ \mathsf{inst}
\qquad
\Delta : \mathsf{Prog} \triangleq \mathsf{list}\ \mathsf{Blk}.
\end{aligned}
\]

\vspace{-1em}

\begin{align*}
\tr{\cdiv{x}{e_1}{e_2}} &= ([\cdiv{x}{(\ccond{\msf}{0}{e_1})}{(\ccond{\msf}{0}{e_2})}],[]) \\
\tr{\cload{x}{e}} &= ([\cload{x}{\ccond{\msf}{0}{e}}],[]) \\
\tr{\cstore{e}{e'}} &= ([\cstore{\ccond{\msf}{0}{e}}{e'}],[]) \\
\tr{\cbranch{e}{l}} &= \begin{array}[t]{@{}l@{}}
                       ([\cbranch{e'}{l^\star}; \casgn{\msf}{\ccond{e'}{1}{\msf}}],\Delta^\star) \\
                        \text{where } e' = \ccond{\msf}{0}{e}, l^\star \text{ is a fresh label.} \\
                        \Delta^\star = [\casgn{\msf}{\ccond{\neg e'}{1}{\msf}}; \cjump{l}]
                       \end{array} \\
\tr{\ccall{e}} &= \begin{array}[t]{@{}l@{}}
                  ([\casgn{\callee}{e'}; \ccall{e'}; \casgn{\msf}{\ccond{e''}{1}{\msf}}],[]) \\
                  \text{where } e' = \ccond{\msf}{\texttt{\&}(0, 0)}{e} \\
                  \text{and } e'' = (\callee = \tpc), \tpc \text{ is the pc of the assignment in } \tr{p}
                  \end{array} \\
\tr{\cret} &= ([\cpeek{\callee}; \cret],[]) \\
\end{align*}
\vspace{-3.7em}
\caption{Translation of individual instructions}
\label{fig:specibt-transformation-inst}
\end{subfigure}
\vspace{0.2em}

\begin{subfigure}{\columnwidth}
  \scriptsize
\[
\begin{aligned}
\tr{\mathit{blk}}
\triangleq\;&
\text{let }(\bar{\imath}_1,\Delta_1),\dots,(\bar{\imath}_k,\Delta_k)
\text{ be translations of }i_j\in\mathit{blk}
\\
&\text{in }\Big(\bar{\imath}_1\concat \cdots\concat \bar{\imath}_k,\ 
\Delta_1\concat \cdots\concat \Delta_k\Big).
\end{aligned}
\]

\vspace{0.2em}

\[
\mathit{pre}_l \triangleq [\ \cctarget;\ \msf := (\ccond{(\callee=\efp{(l, 0)})}{\msf}{1})\ ].
\]

\vspace{0.1em}

\[
\begin{aligned}
\tr{(l,\mathit{blk})}
\triangleq\;&
\text{let }(\mathit{blk}',\Delta)=\tr{\mathit{blk}}\text{ in}
\\
&
\begin{cases}
(\mathit{pre}_l \concat \mathit{blk}',\ \Delta)
& \text{if } l \text{ is a procedure entry basic block.}
\\
(\mathit{blk}',\ \Delta)
& \text{otherwise.}
\end{cases}
\end{aligned}
\]
\vspace{-1.5em}
\caption{Translation of basic blocks}
\label{fig:specibt-transformation-blk}
\end{subfigure}
\vspace{0.5em}
\begin{subfigure}{\columnwidth}
  \scriptsize
\[
\begin{aligned}
\tr{p}
\triangleq\;& \text{let }((p',\Delta_{\mathrm{all}}),\_) \triangleq
\mathsf{mapAccum}\Big(\lambda (k,\mathit{blk}).\ \tr{(k,\mathit{blk})},\ \lengthof{\prog},\ \prog\Big)
\text{ in}
\\
& \prog' \concat \Delta_{\mathrm{all}}.
\end{aligned}
\]
\vspace{-1.5em}
\caption{Translation of full programs}
\label{fig:specibt-transformation-prog}
\end{subfigure}

\caption{\Triosecuris transformation}
\label{fig:specibt-transformation}
\vspace{-.5em}
\end{figure}

Our \Triosecuris transformation makes use of two reserved registers: $\msf$, which
contains the misspeculation flag, and $\callee$\jb{consider renaming}, which tracks the intended
target of indirect calls \newreviews[REQ2]{and returns}.
The transformation inserts instructions to maintain these flags and use them to
mask leaking operands. Thus, at a high level, it needs to accomplish the following
tasks:
\begin{itemize}[leftmargin=*,nosep,label=$\blacktriangleright$]
  \item For every conditional branch, in both branches, the misspeculation flag
    must be updated with the condition used in the branch. While this is
    trivial if the branch is not taken (as we can simply insert the
    corresponding instruction), it is more difficult in case the branch is
    taken: The next instruction that will be executed is not at the current
    location, but at the target label. However, we cannot insert the instruction
    there either, as multiple branch or jump instructions could target the same
    label, and the instruction we insert to update the misspeculation flag must
    not interfere with those.
    Instead, we perform \emph{edge-splitting} by creating a new basic block which
    contains only the update of the misspeculation flag and a jump to the
    original target.
  \item Before each call instruction, $\callee$ must be set to the intended
    target. Unlike the previous case, this is simple, as all
    function entry points are known in advance and are handled in the exact same
    way.
  \item Basic blocks corresponding to function entry points must be prefixed with
    a $\cctarget$ instruction, followed by a check to ensure the $\callee$
    register matches the current basic block, and updating the misspeculation flag
    accordingly.
  \item \newreviews[REQ2]{Before every return instruction, $\callee$ must be set to the correct
    target, which is obtained by the $\cpeek{\callee}$ instruction.}
  \item \newreviews[REQ2]{After every call instruction (\IE every instruction that can be reached
    by a return), a $\callee$ check must be placed to update the misspeculation
    flag in case the location does not match the location read from the return
  stack.}
  \item Like in Ultimate SLH,
    all addresses of stores and loads, \newtext{operands of division
    instructions,} as well as the conditions of branches
    and the target expressions of indirect calls, must be masked based on the
    misspeculation flag. This only requires modifying the expression by wrapping
    it in a constant-time conditional $\ccond{\msf}{0}{e}$ (or
    $\ccond{\msf}{\efp{0}}{e}$ for indirect calls).
\end{itemize}

The definition in \autoref{fig:specibt-transformation} decomposes into three parts.
The first is the translation of individual instructions
(\autoref{fig:specibt-transformation-inst}), which returns a sequence of
instructions as well as a list of new basic blocks.
\autoref{fig:specibt-transformation-blk} handles blocks by concatenating the
translations of their individual instructions, and adding the prelude consisting
of a $\cctarget$ instruction and the $\msf$ update to blocks that are
marked as function entry points. Finally, at the program level, we translate
each block individually and then append all newly created blocks
(\autoref{fig:specibt-transformation-prog}).

\section{Formal Results for \Triosecuris}
\label{sec:formal-results}

In this section, we prove that \Triosecuris (\autoref{fig:specibt-transformation}) guarantees
relative security (\autoref{sec:bg-rs}) for all programs. \newtext{Our proof
  follows the same high-level structure as the relative security proof of FSLH~\cite{BaumannBDHH25},
but is more involved due to the presence of indirect calls and returns. }

\subsection{Ideal Semantics}

\begin{figure}
  \centering
  \resizebox{\columnwidth}{!}{%
  \(
  \begin{array}{c}
  \inferrule*[left=\textsc{Ideal\_Div}]
    { \substack{
      \fetch{\prog}{\pc}{\cdiv{x}{e_1}{e_2}} \quad \highlight{\ite{\ms}{0}{\eval{e_1}{\rho}} = n_1} \\
      \highlight{\ite{\ms}{0}{\eval{e_2}{\rho}} = n_2} \quad
      n = {\begin{cases} n_1/n_2 & \text{if } n_2 \neq 0 \\ \UV & \text{if } n_2 = 0 \end{cases}}
    }
    }
    { \runningstate{\icfg{\cfg{\pc}{\rho}{\mu}{\stk}}{\ms}}
      \specstep{\odiv{n_1}{n_2}}{\bullet}
      \runningstate{\icfg{\cfg{\pc+1}{\subst{x}{n}{\rho}}{\mu}{\stk}}{\ms}}
  }
  \\[1.2ex]
  \inferrule*[left=\textsc{Ideal\_Branch}]
  { \substack{
    \fetch{\prog}{\pc}{\cbranch{e}{l}} \quad \highlight{\ite{\ms}{0}{\eval{e}{\rho}} = n} \quad b = (n \neq 0) \\
     \pc' = \ite{b'}{(l,0)}{\pc+1} \quad \ms' = \ms \lor (b \neq b')
   } }
  { \runningstate{\icfg{\cfg{\pc}{\rho}{\mu}{\stk}}{\ms}}
    \idealstep{\obranch{b}}{\dbranch{b'}}
    \runningstate{\icfg{\cfg{\pc'}{\rho}{\mu}{\stk}}{\ms'}}
  }
  \\[1.2ex]
  \inferrule*[left=\textsc{Ideal\_Load}]
  { \fetch{\prog}{\pc}{\cload{x}{e}} \quad \highlight{\eval{\ite{ms}{0}{e}}{\rho} = n} \quad \mu [n] = v
  }
  { \runningstate{\icfg{\cfg{\pc}{\rho}{\mu}{\stk}}{\ms}}
    \idealstep{\oload{n}}{\bullet}
    \runningstate{\icfg{\cfg{\pc+1}{\subst{x}{v}{\rho}}{\mu}{\stk}}{\ms}}
  }
  \\[1.2ex]
  \inferrule*[left=\textsc{Ideal\_Store}]
  {
    \fetch{\prog}{\pc}{\cstore{e}{e'}} \quad \highlight{\eval{\ite{ms}{0}{e}}{\rho} = n} \quad \eval{e'}{\rho} = v
  }
  {
    \runningstate{\icfg{\cfg{\pc}{\rho}{\mu}{\stk}}{\ms}}
    \idealstep{\ostore{n}}{\bullet}
    \runningstate{\icfg{\cfg{\pc+1}{\rho}{\subst{n}{v}{\mu}}{\stk}}{\ms}}
  }
  \\[1.2ex]
  \inferrule*[left=\textsc{Ideal\_Call}]
  {
    \fetch{\prog}{\pc}{\ccall{e}} \quad \highlight{\ite{\ms}{\vfp{(0, 0)}}{\eval{e}{\rho}} = \vfp{(l, o)}} \quad \ms' = \ms \lor (l', o') \neq (l, o) \\
    \highlight{\lengthof{\prog} > l' \quad \isct{\prog[l']}} \quad \highlight{o' = 0}
  }
  {
    \runningstate{\icfg{\cfg{\pc}{\rho}{\mu}{\stk}}{\ms}}
    \idealstep{\ocall{(l, o)}}{\dcall{(l', o')}}
    \runningstate{\icfg{\cfg{\pc'}{\rho}{\mu}{(\pc+1)::\stk}}{\ms'}}
  }
  \\[1.2ex]
  \inferrule*[left=\textsc{Ideal\_Call\_Fault}]
  {
    \fetch{\prog}{\pc}{\ccall{e}} \quad \highlight{\ite{\ms}{\vfp{(0, 0)}}{\eval{e}{\rho}} = \vfp{(l, o)}} \\
    \highlight{\lengthof{\prog} \leq l' \lor \neg\isct{\prog[l']} \lor (o' \neq 0)}
  }
  {
    \runningstate{\icfg{\cfg{\pc}{\rho}{\mu}{\stk}}{\ms}}
    \idealstep{\ocall{(l, o)}}{\dcall{(l', o')}}
    \faultstate
  }
  \\[1.2ex]
  \end{array}\)
  }

  \caption{Ideal semantics for \MiniMIR (selected rules)\ch{What's the meaning
  of the color highlight?}\jb{see text}}
  \label{fig:ideal-semantics-specibt}
\end{figure}
In order to simplify reasoning about the security of our \Triosecuris semantics, we
follow the approach of \citet{ShivakumarBBCCGOSSY23}
and introduce an auxiliary semantics to represent the idealized behavior that
we wish to enforce with the \Triosecuris transformation.
This \emph{ideal semantics} therefore models speculative execution, but with all
masking enforced directly in the semantics; similarly, this semantics does not
model hardware IBT, but directly enforces that attacker attempts to force control
flow to an invalid target lead to a fault.
%

Again, we define this semantics as a small-step relation, written $\prog \vdash
\state{\pc, \rho, \mu, \stk, \ms} \idealstep{\Obss}{\Dirs} \, \st'$ (resp.
$\prog \vdash \st \idealmulti{\Obss}{\Dirs} \, \st'$ for multi-step execution).
As with the sequential and speculative semantics, we often omit the program
$\prog$ and the turnstile from the equation.
The ideal semantics shown in \autoref{fig:ideal-semantics-specibt} is derived from the speculative semantics shown
in \autoref{fig:spec-semantics-specibt}, with the following
\highlight{highlighted} notable changes \ifappendix\space(The full set of rules can be found in \autoref{sec:appendix-ideal-semantics})\fi:
\begin{itemize}[leftmargin=*,nosep,label=$\blacktriangleright$]
  \item \textsc{Ideal\_Load} and \textsc{Ideal\_Store} conditionally mask the
    address using the misspeculation flag $\mathit{ms}$ in order to ensure that
    there is no leakage via addresses. Recall that this flag
    is directly tracked by the speculative semantics, and therefore always
    accurately reflects whether execution is currently on an incorrect path.
    Similarly, \textsc{Ideal\_Branch} masks the condition, ensuring that there
    is no leakage via control flow.
  \item The rule \textsc{Ideal\_Call} not only masks the target expression
    during speculative execution, but also enforces that the target location
    specified by the attacker directive is a valid call target
    \newtext{--- \IE the first instruction ($o' = 0$) of procedure entry basic block ($\isct{\prog[l']}$) ---}
    \remove{(}and thus should not trigger a fault\remove{)}.
  \item In case the attacker directive specifies an invalid call target, rule
    \textsc{Ideal\_Call\_Fault} specifies that the call will trigger a fault.
  \item As this semantics does not model the underlying hardware IBT mechanism,
    it does not need to track the $\mathit{ct}$ flag, nor does it have rules for
    the $\cctarget$ instruction.
\end{itemize}

\subsection{Key Theorems}
\label{sec:key-theorems}

Our proof of relative security hinges on two key lemmas.
The first of these is a backwards compiler correctness (BCC) result, which shows
that a program hardened by \Triosecuris executed in the unrestricted speculative
semantics behaves like the original source program in the idealized semantics,
in other words, the \Triosecuris transformation correctly implements the restrictions
specified by the ideal semantics.

\begin{lemma}[Backwards compiler correctness of \Triosecuris]
\label{lem:specibt-bcc}

\begingroup\footnotesize
\setcounter{equation}{0}
\renewcommand{\theHequation}{lem\thelemma.\theequation}

\begin{alignat}{2}
& \left(\begin{array}{@{}l@{}}
  \wfp{\prog} \wedge \callee, \msf \notin \UsedVars{\prog} \wedge {}
  \\
  \isct{\prog[0]} \wedge \cctarget \notin \prog \wedge {}
  \\
  \rho'(\callee) = \vfp{(0, 0)} \wedge \rho'(\msf) = 0
  \end{array}\right)
 &\;&\Rightarrow
\label{eq:specibt-bcc-sidecond}
\\
&\tr{\prog} \vdash \runningstate{\scfg{(0,0), \rho', \mu', \emptylist}{\mtrue}{\mfalse}}\specmulti{\Obss'}{\Dirs'}\cdot
   &&\Rightarrow
\label{eq:specibt-bcc-spec}
\\
& \rho \sim \rho' \land \mu \sim \mu' &&\Rightarrow
\label{eq:specibt-bcc-match}
\\
& \exists \Obss, \Dirs.\; \prog \vdash \runningstate{\icfg{(0,0), \rho, \mu, \emptylist}{\mfalse}}
      \idealmulti{\Obss}{\Dirs}\cdot \land \Obss \sim \Obss' \land \Dirs \sim \Dirs'
&&\label{eq:specibt-bcc-ideal}
\end{alignat}

\endgroup
\end{lemma}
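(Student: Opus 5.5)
We prove the lemma by a backward simulation between the speculative execution of $\tr{\prog}$ and the ideal execution of $\prog$. Concretely, we will define a matching relation $R$ between target speculative states $\state{\pc', \rho', \mu', \stk', \ct, \ms}$ and source ideal states $\state{\pc, \rho, \mu, \stk, \ms}$. It relates the two when the following hold.
\begin{itemize}[leftmargin=*,nosep,label=$\blacktriangleright$]
  \item $\pc'$ is the image of $\pc$ under the block/offset layout induced by \autoref{fig:specibt-transformation}. This layout shifts offsets by the prelude $\mathit{pre}_l$ for entry blocks and by the extra instructions emitted before each source instruction.
  \item $\rho \sim \rho'$ and $\mu \sim \mu'$, where agreement holds on all registers except the reserved $\callee$ and $\msf$, and memory matches pointwise. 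Code pointers are translated to their shifted targets.
  \item $\rho'(\msf) = (\ms\ ?\ 1 : 0)$, so the target flag register exactly tracks the semantic misspeculation flag.
  \item Each return address in $\stk'$ is the image of the corresponding entry in $\stk$, i.e.\ it points to the $\msf$ check emitted right after the translated call.
  \item If $\ct = \mtrue$, the source state is exactly the one just after a matched \textsc{Ideal\_Call}.
\end{itemize}
The side conditions~\eqref{eq:specibt-bcc-sidecond} give the initial case of $R$. $\rho'(\msf)=0$ matches $\ms=\mfalse$, and $\rho'(\callee)=\vfp{(0,0)}$ together with $\isct{\prog[0]}$ makes the entry prelude a no-op. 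The hypotheses $\callee,\msf\notin\UsedVars{\prog}$ and $\cctarget\notin\prog$ ensure that source instructions never touch the reserved registers and that the prelude is unambiguous.

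The core step is a one-source-instruction simulation lemma. If $R(t,s)$ and $t$ takes a \emph{bounded} number of speculative steps that execute the translation $\tr{i}$ of the source instruction at $s$ (a prefix of it suffices), then either $s$ takes one ideal step to some $s_1$ with $R(t_1,s_1)$ and matching observations and directives, or $t$ reaches $\faultstate$ and $s$ reaches $\faultstate$ via \textsc{Ideal\_Call\_Fault}. Prefixes of a translated block also have to be handled, since a trace may end mid-translation; these are matched by the corresponding zero-or-one-step ideal prefix. We prove the lemma by case analysis on $i$.
\begin{itemize}[leftmargin=*,nosep,label=$\blacktriangleright$]
  \item \emph{Load, store and division.} By the invariant on $\msf$, the masked expression $\ccond{\msf}{0}{e}$ evaluates exactly as in \textsc{Ideal\_Load}, \textsc{Ideal\_Store} and \textsc{Ideal\_Div}.
  \item \emph{Branch.} The directive $\dbranch{b'}$ is copied. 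In the fall-through case the inserted $\msf$ update re-establishes the invariant with $\ms'=\ms\lor(b\neq b')$. In the taken case execution goes through the fresh block $l^\star$ in $\Delta$, whose update does the same and then jumps to $(l,0)$.
  \item \emph{Call.} For $\dcall{\pc''}$ there are two subcases. If $\pc''$ is not the first instruction of a translated entry block, the next instruction is not $\cctarget$, so \textsc{Spec\_Fault} fires. We match this with \textsc{Ideal\_Call\_Fault} by pulling $\pc''$ back to a source directive that is invalid, which requires showing that the pullback of layout positions is well defined. If $\pc''$ is such a first instruction, the prelude compares $\callee$, which holds the masked target, to the actual entry. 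This sets $\msf$ to exactly $\ms\lor (l',o')\neq(l,o)$, using the fact that code-pointer equality is defined on code pointers.
  \item \emph{Return.} $\dret{\pc''}$ with $\wfret{\pc''}$ lands on a post-call $\msf$ check. That check compares $\pc''$ with the value of $\callee$ loaded by \texttt{peek}, which is the true stack top, so $\msf$ again matches the ideal $\ms'$.
\end{itemize}

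Finally, we obtain the lemma by induction on the length of the speculative trace in~\eqref{eq:specibt-bcc-spec}, using strong induction on the number of steps and peeling off one translated instruction at a time. The observations and directives of the ideal trace are accumulated in this process, and the relations $\Obss\sim\Obss'$ and $\Dirs\sim\Dirs'$ are defined up to this pc translation and silent inserted steps.

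We expect the main obstacle to be the call case, together with the bookkeeping of the pc and label translation. We must show three things there. First, every speculative target not of the form (translated entry block, offset 0) faults. Second, directive pcs and code pointers stored in registers and memory correspond under a bijective translation between source positions and target positions. Third, the masked call target $\ccond{\msf}{\efp{(0,0)}}{e}$ yields a defined code pointer, which here uses $\isct{\prog[0]}$. The relation $\sim$ on directives must also map fresh-block and prelude positions sensibly. We would first state and prove a standalone lemma describing the translated block layout, namely offsets through $\mathsf{mapAccum}$ and the positions of fresh labels. All instruction cases would then reduce to fetch lemmas of the form ``$\tr{\prog}\llbracket \mathit{tr}(\pc)+k\rrbracket$ is the $k$-th instruction of $\tr{i}$''.
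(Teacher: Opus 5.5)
Your proposal is correct and follows essentially the paper's proof. It is a backward simulation whose relation tracks the pc layout, $\sim$ on registers, memory and return stack, and agreement of $\msf$ with $\ms$; your per-instruction macro-steps with prefixes are just the paper's extra source-stuttering cases, and the initial prelude is run separately before the relation is established, as in the paper. The stuttering cases are: edge-split blocks for branches; callee update, call, $\cctarget$ and prelude check for calls; peek, ret and post-call check for returns.

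The one unnecessary complication is pulling call directives back. Labels of original blocks are preserved, fresh blocks get labels $\geq \lengthof{\prog}$, and $\cctarget$ occurs exactly at offset $0$ of entry blocks. So the same $\dcall{(l',o')}$ faults in $\tr{\prog}$ iff it is invalid in $\prog$, which is why the paper takes $\sim$ to be the identity on call directives.
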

  This lemma connects \enquote{target runs} of the program hardened with \Triosecuris
  \eqref{eq:specibt-bcc-spec} to corresponding \enquote{source runs} of the original program
  in the ideal semantics.
  \remove{
    The initial program states are identical except for the $\msf$ and $\callee$
    registers in the register assignment \eqref{eq:specibt-bcc-match}, which are set to $\mfalse$ and $\vfp{0}$ in the target
    execution, but do not exist in the ideal state, and the $\ct$ flag, which also
    does not exist in the ideal state.
  }
  \newreviews[REQ2]{
    We define a relation $\sim$ between values in ideal and speculative program
    states. Program pointers are related by adjusting the offset according to
    the \Triosecuris transformation,\footnote{We only handle program pointers which
      either have offset zero (function pointers) or which are valid return
    addresses, since no other program pointers can occur during execution.} all
    other values are only related to themselves.
    We lift this relation to memories \eqref{eq:specibt-bcc-match} in the straightforward way and to register
    assignments by ignoring the $\msf$ and $\callee$ registers, which do not
    exist in the ideal state.

    Similarly, we lift this relation to directives and observations
    \eqref{eq:specibt-bcc-ideal}. For directives, we only need to apply it to
    $\dret\cdot$ directives, since the correct return addresses are different
    for the source program. It is not necessary to apply this mapping to
    $\dcall\cdot$ directives, since those only fall into one of two categories:
  }
  \begin{itemize}[leftmargin=*,nosep,label=$\blacktriangleright$]
  \item If the call directive points to a function entry point in the target
    execution, then it will have an offset of zero. This matches the function
    entry point in the source program, so the simulation proceeds without issue.
  \item Otherwise, the target execution will reach the $\faultstate$ state, as
    the location in the directive does not contain a $\cctarget$
    instruction. This is independent of the actual instruction at that location,
    so a mismatch between the precise location in the source and target programs
    is inconsequential.
  \end{itemize}

  \newtext{\autoref{lem:specibt-bcc}} requires a number of side conditions \eqref{eq:specibt-bcc-sidecond}:
  For one, we require that the input program does not contain any $\cctarget$
  instructions and begins with a function as the entry point (\IE $(0,0)$). The program must
  also be well-formed in the following sense: all basic blocks end in a $\cret$ or $\ijump$ instruction;
  programs and blocks are non-empty; all $\ijump$ and $\ibranch$ instructions go to labels that are not marked as function entry points;
  and all code pointer constants point to function entry points.
  Further, the source program must not use the reserved $\msf$ and $\callee$ registers, so as to not interfere with the \Triosecuris transformation.
  In the target initial state, the $\msf$ register is initialized to $0$,
  to align with the misspeculation flag in the initial state, and the $\callee$ register is initialized to $\vfp{(0, 0)}$,
  reflecting that basic block $0$ serves as the program entry point (analogous to \texttt{main}).

  \remove{
    This statement relates ideal and target executions using \emph{the same sequence
    of attacker directives}\eqref{eq:specibt-bcc-ideal}, despite the fact that the \Triosecuris transformation
    prepends two instructions to each procedure entry block, resulting in a
    discrepancy between the program counters of the source and target.
    This works because all directives fall into one of two categories, making the
    actual value of the offset largely irrelevant:
  }

\begin{proof}[Proof outline]
  \remove{The proof proceeds by strong induction on the number of steps of the target run, with
  a slightly generalized statement: Instead of the fixed initial state in
  \eqref{eq:specibt-bcc-spec} and \eqref{eq:specibt-bcc-ideal},
  they are allowed to vary, with a relation between source and
  target states ensuring that the program counter values (including on the
return stack) are properly adjusted for \Triosecuris.}
\newtext{
  The proof uses a backwards simulation with source stuttering~\cite{CompCert, Chappe26}
  with the following simulation relation:
  Generally, we relate states based on the program counters: If the program
  counter in the ideal state points to an instruction $i$, the program counter
  in the speculative state must point to beginning of the sequence
  $\tr{i}$.
  Additionally, the
  relation $\sim$ must hold for the register assignments, memories and the return
  stack, and we require that the misspeculation flag register $\msf$ in the
  speculative register assignment agrees with the misspeculation flag $\ms$ in
  the ideal state.

  For source instructions which correspond to more than one instruction in the
  hardened program, we allow the source to stutter by introducing additional
  cases, which relate all intermediate speculative states to the same ideal state:
  \begin{itemize}[leftmargin=*,nosep,label=$\blacktriangleright$]
    \item For branch instructions, the number of intermediate instructions depends on
      the branch outcome. If the branch is taken
      (1) the branch instruction to the edge-split block is already related to
      the source instruction based on the program counter, and we introduce
additional cases for
      (2) the update of the misspeculation register in the edge-split block, and
      (3) the jump instruction to the original target.
      Otherwise, the branch falls through, and we only require one additional
      case for the update of the misspeculation flag.
    \item For call instructions, a source instruction corresponds to four target instructions:
      (1) update of the $\callee$ register, (2) call to the target,
      (3) $\cctarget$ instruction at the target, and (4) update of the misspeculation register at the target.
      The first case is related to the source call instruction based on the
      program counter. We introduce extra cases relating (2) to the same call
      instruction and (3) and (4) to the following instruction, so that the leakage-producing
      $\icall$ step has a corresponding step in the source
      semantics.\jb{@YH: please check this phrasing}\yh{we defined a matching relation on states (specifically on PCs(instructions)),
      not on steps. I changed most occurrences of 'step' to 'instruction' accordingly.
      Line 2817--2818 still use 'step' because this refers the simulation consequence:
      the instruction matching makes that call steps of source and target progress in lockstep.}
    \item \newreviews[REQ2]{For return instructions, a source instruction corresponds to three target instructions:
      (1) update of the $\callee$ register with the head of the return stack,
      (2) jump to the return target, and
      (3) update of the misspeculation register at the return target.
      The first case is covered based on the program counter, and we introduce
      additional cases relating (2) to the same return instruction and (3) to the return target instruction.
    }
  \end{itemize}
  These extra cases allow us to temporarily weaken the requirement for the
  $\msf$ and return stacks to match until after the sequence has executed.
  We then prove backwards simulation between the source and target semantics with respect to this state matching relation.

  It remains to show that from the initial states in
  \eqref{eq:specibt-bcc-spec}, we can establish this simulation relation.
  However, the hardened program begins with a $\cctarget$ instruction and an
  update of the misspeculation flag, which do not correspond to any call
  instruction in the source program.
  Therefore, we handle the first two steps of the speculative execution
  separately (with the ideal execution stuttering), after which we can establish
  the simulation relation from the conditions in \eqref{eq:specibt-bcc-match}.

}

\end{proof}

An earlier version of this BCC proof was described in more detail by Leatherman-Brooks~\cite{LeathermanBrooks26}.

The other key lemma is that the ideal semantics, with its built-in protections,
enforces relative security.
We first introduce \emph{observational equivalence} to describe that an attacker
cannot distinguish two executions:

\begin{definition}[Sequential Observational Equivalence]
\label{def:def-seqobs}
Two sequential states $\runningstate{s_1}$
and $\runningstate{s_2}$ are observationally equivalent
under the sequential semantics of program $\prog$, written
$\prog \vdash \runningstate{s_1}\seqobseq\runningstate{s_2}$ iff
\[
  \forall \Obss_1,\Obss_2.\;
    \prog \vdash \runningstate{s_1} \seqmulti{\Obss_1} \cdot \land
    \prog \vdash \runningstate{s_2} \seqmulti{\Obss_2} \cdot \Rightarrow
    \Obss_1 \lessgtr \Obss_2
\]
$\Obss_1 \lessgtr \Obss_2$ denotes that, instead of the both lists being
equal,  $\Obss_1$ may be a prefix of $\Obss_2$ or vice versa. This is required
since the two executions are not required to be of the same length.
\end{definition}

\begin{definition}[Speculative Observational Equivalence]
\label{def:def-specobs}
Two speculative states $\runningstate{s_1}$
and $\runningstate{s_2}$ are observationally equivalent
under the speculative semantics of program $\prog$, written
$\prog \vdash \runningstate{s_1}\specobseq\runningstate{s_2}$ iff
\[
  \forall \Dirs,\Obss_1,\Obss_2.\;
    \prog \vdash \runningstate{s_1} \specmulti{\Obss_1}{\Dirs} \cdot \land
    \prog \vdash \runningstate{s_2} \specmulti{\Obss_2}{\Dirs} \cdot \Rightarrow
    \Obss_1 \lessgtr \Obss_2
\]
We define observational equivalence in the ideal semantics, $\idealobseq$, in
the same way.
\end{definition}

With these definitions, relative security means that sequential observational
equivalence implies observational equivalence in the ideal semantics:

\begin{lemma}[Relative Security between Sequential and Ideal Semantics]
\label{lem:specibt-ideal-rs}

\begingroup\footnotesize
\setcounter{equation}{0}
\renewcommand{\theHequation}{lem\thelemma.\theequation}

\begin{alignat}{2}
&\prog \vdash \runningstate{\pc_1, \rho_1, \mu_1, \stk_1} \, \seqobseq \, \runningstate{\pc_2, \rho_2, \mu_2, \stk_2} &\;& \Rightarrow\;
\label{eq:specibt-ideal-seqeq}
\\
&\prog \vdash \runningstate{\pc_1, \rho_1, \mu_1, \stk_1, \mfalse} \, \idealobseq \,\runningstate{\pc_2, \rho_2, \mu_2, \stk_2, \mfalse}
\label{eq:specibt-ideal-goal}
\end{alignat}
\endgroup  
\end{lemma}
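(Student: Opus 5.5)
We follow the structure of the FSLH relative security proof and reduce ideal executions to sequential ones. The key observation is that in the ideal semantics every speculative step ($\ms = \mtrue$) produces observations that do not depend on the state: loads, stores and division operands are masked to $0$, branch conditions are masked to $0$ (so the observation is $\obranch{\mfalse}$), and call targets are masked to $\vfp{(0,0)}$. Returns produce no observation. Hence an ideal run decomposes into a sequential prefix, with $\ms = \mfalse$, and a speculative suffix, with $\ms = \mtrue$. Its trace is the sequential trace of the prefix, then the single observation of the step that triggers misspeculation, then a suffix whose observations are determined solely by the program counters and directives.

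The proof proceeds in three steps.

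\emph{First}, we show an ideal-to-sequential correspondence. If $\state{\pc,\rho,\mu,\stk,\mfalse} \idealmulti{\Obss}{\Dirs} \state{\pc',\rho',\mu',\stk',\mfalse}$, then $\state{\pc,\rho,\mu,\stk} \seqmulti{\Obss} \state{\pc',\rho',\mu',\stk'}$. This follows by induction on the run, since with $\ms=\mfalse$ masking is the identity and directives must agree with the actual outcome. For \textsc{Ideal\_Call}, keeping $\ms$ false forces $(l',o') = (l,o)$; for returns, it forces $\pc'' = \pc'$. Consequently the directive sequence $\Dirs$ is itself determined by the sequential run, up to the point where misspeculation begins.

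\emph{Second}, we prove an unwinding lemma for speculative ideal states. Take two states $\state{\pc,\rho_1,\mu_1,\stk,\mtrue}$ and $\state{\pc,\rho_2,\mu_2,\stk,\mtrue}$ that share the program counter and the return stack. Under the same directives, they produce the same observations and remain in states with equal $\pc$ and $\stk$, or both fault or terminate. The proof is by induction on the steps. The only subtle cases are these:
\begin{itemize}[leftmargin=*,nosep,label=$\blacktriangleright$]
\item \textsc{Ideal\_Call} / \textsc{Ideal\_Call\_Fault}: the next $\pc$ comes from the directive and validity depends only on $\prog$, so both runs agree.
\item \textsc{Ideal\_Ret}: the next $\pc$ comes from the directive and the stacks are equal.
\item Loads: these may be stuck only if the memory read fails. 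Memories are total, so this cannot happen.
\item Division and undefined values: a masked operand gives $0$, so $n_2 = 0$ yields $\UV$ uniformly.
\end{itemize}
Stuckness should likewise depend only on $\pc$ and the masked values; this is where undefined values could cause trouble, so we check each premise.

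\emph{Third}, we combine the two. Take two ideal runs under a common $\Dirs$ and find the first divergence point in each, where $\ms$ becomes $\mtrue$. Apply the first step to obtain sequential runs, and use $\seqobseq$ from the hypothesis of Lemma~\ref{lem:specibt-ideal-rs} to get prefix-comparable traces. If one run's prefix ends before the other misspeculates, the traces are prefix-related and we are done. Otherwise, both runs reach their divergence step at the same position in $\Dirs$. We then argue that they are in sync: the same number of observations, the same control flow, and equal $\pc$ and $\stk$.

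This synchronization argument is the main obstacle. Sequential observations expose branch outcomes and call targets, but they do not directly expose silent steps, returns, or the stack. We handle it by strengthening the first step into a lockstep invariant: if two sequential runs from states with equal $\pc$ and $\stk$ produce prefix-related traces, then after each observation they have equal $\pc$ and $\stk$. This holds because control flow is fully determined by the observed $\obranch{\cdot}$ and $\ocall{\cdot}$ events, and return addresses come from the (equal) stack. For the initial states, which may differ in $\pc$ and $\stk$, we first require that $\seqobseq$ implies equal control flow. If that fails, we instead generalize the hypothesis so that the initial program counters and stacks are equal; this matches how the lemma is used after Lemma~\ref{lem:specibt-bcc}, where both runs start from $(0,0)$ with an empty stack.

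The divergence step then produces the same observation in both runs, since with $\ms = \mfalse$ its operands are sequentially observed. The directive gives the same next $\pc$, and the second step finishes the argument.
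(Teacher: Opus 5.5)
Your proposal is correct and follows essentially the same route as the paper. Ideal execution with $\ms = \mfalse$ coincides with sequential execution, and an unwinding lemma shows that once $\ms$ is set the observations depend only on $\pc$, $\stk$ and the directives. Sequential indistinguishability then forces both runs to misspeculate at the same point with the same $\pc$ and $\stk$, exactly as in the paper's sketch.

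Your fallback of assuming equal initial program counters and stacks is necessary, not optional, and it strengthens the hypothesis rather than generalizing it. Take two runs at the same $\cret$ with stacks $[a]$ and $[b]$, where $a \neq b$ are valid return addresses whose code performs the same $\cload{x}{r}$ with the same nonzero $r$. These runs are sequentially indistinguishable. Yet under $\dret{a}$ only the second run misspeculates, so the first observes $\oload{\rho(r)}$ and the second observes $\oload{0}$. The paper's sketch implicitly relies on the same equality, which does hold where the lemma is used (both runs start at $(0,0)$ with $\emptylist$).
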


\begin{proof}[Proof sketch]
  The proof relies on a property of the ideal semantics which we call
  \emph{unwinding of speculative execution}: While the misspeculation flag is
  set, any produced observations are fully determined by the program, program
  counter and return stack, \IE{} they are completely independent of registers and
  memory. Concretely, we can prove $\forall \rho_1, \mu_1, \rho_2, \mu_2.\;
  \state{\pc, \rho_1, \mu_1, \stk, \mtrue} \idealobseq \state{\pc, \rho_2,
  \mu_2, \stk, \mtrue}$ (by induction on the executions;
  equality of control flow and produced observations is ensured by masking).

  For the proof of relative security, we thus decompose both executions into two
  parts, before and after misspeculation. As the ideal behavior before
  misspeculation corresponds exactly to the sequential behavior of the program,
  we conclude by \eqref{eq:specibt-ideal-seqeq} that the first parts produce
  equal observations; further, we infer that they have equal control flow and
  trigger misspeculation at the same point, allowing us to apply the
  above-mentioned \emph{unwinding} property to conclude equality of the
  observations during the misspeculated part.
  Composing the two, we thus obtain observational equality for the ideal
  executions.
\end{proof}

By composing these two core lemmas, we obtain that \Triosecuris enforces relative
security.

\begin{theorem}[Relative Security for \Triosecuris]
\label{lem:specibt-rs}
\begingroup\footnotesize
\setcounter{equation}{0}
\renewcommand{\theHequation}{thm\thetheorem.\theequation}

\begin{alignat}{2}
& \left(\begin{array}{@{}l@{}}
  \wfp{\prog} \wedge \callee, \msf \notin \UsedVars{\prog} \wedge {}
  \\
  \isct{\prog[0]} \wedge \cctarget \notin \prog \wedge {}
  \\
  \rho_1' (\callee) = \vfp{(0, 0)} \wedge \rho_2' (\callee) = \vfp{(0, 0)} \wedge {}
  \\
  \rho_1' (\msf) = 0 \wedge \rho_2' (\msf) = 0
  \end{array}\right)
 &\;&\Rightarrow
\label{eq:specibt-rs-sidecond}
\\
&\prog \vdash \runningstate{(0,0), \rho_1, \mu_1, \emptylist} \, \seqobseq \, \runningstate{(0,0), \rho_2, \mu_2, \emptylist} &\;& \Rightarrow\;
\label{eq:specibt-rs-seqeq}
\\
& \rho_1 \sim \rho_1' \wedge \rho_2 \sim \rho_2' \land \mu_1 \sim \mu_1' \wedge \mu_2 \sim \mu_2' &&\Rightarrow
\label{eq:specibt-rs-match}
\\
&\tr{\prog} \vdash \runningstate{(0,0), \rho_1', \mu_1', \emptylist, \mtrue, \mfalse} \, \specobseq \,\runningstate{(0,0), \rho_2', \mu_2', \emptylist, \mtrue, \mfalse}
\label{eq:specibt-rs-goal}
\end{alignat}

\endgroup

\end{theorem}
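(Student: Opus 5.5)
The proof composes \autoref{lem:specibt-bcc} and \autoref{lem:specibt-ideal-rs}. Unfold the goal \eqref{eq:specibt-rs-goal} using \autoref{def:def-specobs}. Fix an arbitrary directive sequence $\Dirs'$ and traces $\Obss_1', \Obss_2'$ such that
\[\tr{\prog} \vdash \runningstate{(0,0), \rho_i', \mu_i', \emptylist, \mtrue, \mfalse} \specmulti{\Obss_i'}{\Dirs'} \cdot \quad (i = 1,2).\]
We must show $\Obss_1' \lessgtr \Obss_2'$.

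\textbf{Step 1 (pull back to the ideal semantics).} Apply \autoref{lem:specibt-bcc} twice, once per run. Its side conditions \eqref{eq:specibt-bcc-sidecond} are exactly the hypotheses \eqref{eq:specibt-rs-sidecond}, instantiated with $\rho_1'$ and with $\rho_2'$ respectively. The matching hypotheses come from \eqref{eq:specibt-rs-match}. This yields ideal runs
\[\prog \vdash \runningstate{(0,0),\rho_i,\mu_i,\emptylist,\mfalse} \idealmulti{\Obss_i}{\Dirs_i}\cdot\]
with $\Obss_i \sim \Obss_i'$ and $\Dirs_i \sim \Dirs'$.

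\textbf{Step 2 (the same directives on both sides).} To use ideal observational equivalence we need a single directive list, whereas Step 1 gives two lists $\Dirs_1, \Dirs_2$, each related to $\Dirs'$. I expect this to be the main obstacle. The plan is to prove that $\sim$ on directives is functional in the backwards direction: $\Dirs_1 \sim \Dirs'$ and $\Dirs_2 \sim \Dirs'$ imply $\Dirs_1 = \Dirs_2$.

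\begin{itemize}
\item $\dbranch\cdot$ and $\dcall\cdot$ directives are related only to themselves, so they agree trivially.
\item For $\dret{\pc}$ we need the offset adjustment on valid return addresses to be injective. The adjustment is a deterministic shift determined by the transformation (inserted instructions before the call site within the block). So the inverse image of a target pc that is a valid return address is unique, provided we restrict to $\wfret{\cdot}$ targets. These are the only ones the speculative semantics admits.
\end{itemize}

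If injectivity turns out awkward, the fallback is to check whether the construction inside the proof of \autoref{lem:specibt-bcc} computes $\Dirs$ as a function of $\Dirs'$. In that case both applications produce the same list by construction, and I would restate the lemma to expose this.

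\textbf{Step 3 (relative security and transfer back).} Apply \autoref{lem:specibt-ideal-rs} to \eqref{eq:specibt-rs-seqeq}, with $\pc_i = (0,0)$ and $\stk_i = \emptylist$. This gives ideal observational equivalence of the two initial ideal states. Instantiating it with the common directive list $\Dirs_1 = \Dirs_2$ gives $\Obss_1 \lessgtr \Obss_2$.

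It remains to transport this through $\sim$ on observations. We need that $\sim$ is pointwise and length-preserving, and that it is functional in the forward direction, i.e., each source observation has a unique related target observation. Observations carry numbers, and call targets carry offset-zero code pointers; on both of these the relation is identity or a fixed deterministic map. A small lemma then gives: if $\Obss_1 \lessgtr \Obss_2$, $\Obss_1 \sim \Obss_1'$ and $\Obss_2 \sim \Obss_2'$, then $\Obss_1' \lessgtr \Obss_2'$. This concludes the proof.
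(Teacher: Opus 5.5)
Your proposal is correct and follows essentially the same route as the paper's proof. Like the paper, you apply \autoref{lem:specibt-bcc} to each speculative run and use injectivity of $\sim$ on directives (only $\dret{\cdot}$ entries are offset-adjusted) to obtain a single ideal directive list. You then invoke \autoref{lem:specibt-ideal-rs} and transport $\lessgtr$ back along the forward-functional relation $\sim$ on observations, with your explicit offset-shift and prefix-transport arguments filling in details the paper's sketch states only in one line.
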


\begin{proof}[Proof sketch]

The proof of relative security relies on several assumptions (\eqref{eq:specibt-rs-sidecond} and \eqref{eq:specibt-rs-match}),
which we omit here as they are identical to those described for the backwards compiler correctness lemma \autoref{lem:specibt-bcc}.

The proof itself is straightforward. We first apply \autoref{lem:specibt-bcc} to
each speculative execution in \eqref{eq:specibt-rs-goal}, yielding a pair of
corresponding ideal executions. \newtext{Since $\sim$ is injective, both ideal
executions also use the same set of directives, so by \autoref{lem:specibt-ideal-rs}, we obtain
that those ideal executions are observationally equivalent.
Speculative observational equivalence then follows from the fact that the
relation $\sim$ relates each observation in the ideal semantics to one unique speculative observation.
}
\end{proof}

\section{Translation to Machine Code}
\label{sec:translation-mc}

So far, we have discussed our \Triosecuris transformation for a language with a
basic blocks and code pointers. While this was necessary to
formalize the mitigation, it also carries the risk of inadequately modeling the
behaviors of lower-level languages, as discussed in
\autoref{sec:ki-undefined-values}. Therefore, we now demonstrate that our
security results indeed carry over to lower levels on an example compilation
step to a linearized language which we call \MiniMC.

\subsection{Definitions and Semantics}

\yh{Consider what to highlight and how to highlight it (color, font, etc).}
\begin{figure}
  \centering
  \resizebox{\columnwidth}{!}{%
  \(\begin{array}{c}
  \inferrule*[left=\textsc{McSpec\_Branch}]
  { \substack{
    \highlight{\mcfetch{\m}{\pc}{\cbranch{e}{l}}} \quad \eval{e}{\rho} = n \quad b = (n \neq 0) \\
     \pc' = \ite{b'}{\highlight{l}}{\pc+1} \quad \ms' = \ms \lor (b \neq b')
   } }
  { \runningstate{\scfg{\cfg{\pc}{\rho}{\mu}{\stk}}{\mfalse}{\ms}}
    \specstep{\obranch{b}}{\dbranch{b'}}
    \runningstate{\scfg{\cfg{\pc'}{\rho}{\mu}{\stk}}{\mfalse}{\ms'}}
  }
  \\[1.2ex]
  \inferrule*[left=\textsc{McSpec\_Jump}]
  { \highlight{\mcfetch{\m}{\pc}{\cjump{l}} } }
  { \runningstate{\scfg{\cfg{\pc}{\rho}{\mu}{\stk}}{\mfalse}{\ms}}
    \specstep{\bullet}{\bullet}
    \runningstate{\scfg{\cfg{\highlight{l}}{\rho}{\mu}{\stk}}{\mfalse}{\ms}} }
  \\[1.2ex]
  \inferrule*[left=\textsc{McSpec\_Call}]
  {
    \highlight{\mcfetch{\m}{\pc}{\ccall{e}}} \quad \eval{e}{\rho} = l \quad \ms' = \ms \lor \pc' \neq \highlight{l}
  }
  {
    \runningstate{\scfg{\cfg{\pc}{\rho}{\mu}{\stk}}{\mfalse}{\ms}}
    \specstep{\ocall{l}}{\dcall{\pc'}}
    \runningstate{\scfg{\cfg{\pc'}{\rho}{\mu}{(\pc+1)::\stk}}{\mtrue}{\ms'}}
  }
  \end{array}\)%
  }
  \caption{Speculative semantics for \MiniMC (selected rules)}
  \label{fig:spec-semantics-minimc}
  \vspace{-1em}
\end{figure}

The core aspect of \MiniMC{} is that it flattens the basic block
program structure of \MiniMIR{} into a linear list of instructions, while
leaving the syntax of instructions and expressions largely unchanged.
The \emph{linearization} step $\mc{\cdot}$\jb{Reviewer comment: where is this
defined?}\ifappendix\space(\autoref{fig:linearization-transformation})\fi, which translates \MiniMIR to
\MiniMC, simply concatenates into one list of instructions, while translating
the labels in jumps, branches and code pointer constants into the
appropriate addresses.
We assume that a \MiniMC{} program is placed in memory after the data section,
\IE the final memory layout is of the form $\m = \mu \concat \prog$, where $\mu$ and $\prog$ denote the data and code sections, respectively.

The (speculative) semantics of \MiniMC{} (\autoref{fig:spec-semantics-minimc})
is similar to the speculative semantics of \MiniMIR{}
(\autoref{fig:spec-semantics-specibt}), except for the \highlight{highlighted} changes due to the new memory layout and the absence of logical pointers\ifappendix\space(The full set of rules can be found in \autoref{sec:appendix-speculative-semantics-minimc})\fi:
\begin{itemize}[leftmargin=*,nosep,label=$\blacktriangleright$]
  \item The program counter stores an integer address, 
    rather than an offset within a basic block.
  \item Instruction fetching retrieves the instruction at the address
    specified by the program counter from the memory \m{}.
  \item Correspondingly, the targets of control-flow instructions (branches,
    jumps) are now numeric addresses in memory.
\end{itemize}

\subsection{Relative Security for Linearization}
To prove relative security proof for linearization, we employ Theorem 2
(Soundness of backwards simulations) of
\citet{OlmosBBGL25}\footnote{Our mechanization is independent of their work,
but uses the same idea.},
which states that relative security follows directly from a backwards simulation
(\IE backwards compiler correctness)
if there are functions mapping \emph{target} directives to \emph{source}
directives and \emph{source} observations to \emph{target} observations.
Intuitively, this means that the observations in the target are uniquely
determined by the observations at the source level, so if the observations of
two source executions are equal, they must result in equal observations of the
target executions.

Since linearization refines the code pointers of \MiniMIR{} into integer
addresses, we take this into account for our backwards compiler correctness
proof by defining a relation $\simeq_{\lengthof{\mu}}$ which relates code pointers
to their corresponding addresses (based on the length of the data section,
$\lengthof{\mu}$) 
, numeric values to themselves, and $\UV$ to all
numbers. We lift this relation to register assignments, memories and stacks.



\begin{lemma}[Backwards compiler correctness for Linearization]\leavevmode\vspace{-1\baselineskip}%
\label{lem:minimc-bcc}
\begingroup\footnotesize
\setcounter{equation}{0}
\renewcommand{\theHequation}{lem\thelemma.\theequation}
\begin{alignat}{2}
& \wwfp{\prog} \;\land\; \issafe{s}{spec}{\prog} \;\land\; \wfdsmc{\Dirs}{\lengthof{\mu'}}{\mc{\prog}}
 &\;&\Rightarrow
\label{eq:minimc-bcc-sidecond}
\\
&\mc{\prog} \vdash \runningstate{\scfg{\pc,\rho,\mu,\stk}{\ct}{\ms}}\specmulti{\Obss}{\Dirs}\cdot
   &&\Rightarrow
\label{eq:minimc-bcc-spec}
\\
& \pc \simeq_{\lengthof{\mu'}} \pc' \land \rho \simeq_{\lengthof{\mu'}} \rho' \land \mu \simeq_{\lengthof{\mu'}} \mu' \land \stk \simeq_{\lengthof{\mu'}} \stk' &&\Rightarrow
\label{eq:minimc-bcc-match}
\\
&\left(\begin{array}{@{}l@{}}
   \exists \Obss', \Dirs'.\; \prog \vdash \runningstate{\scfg{\pc,\rho,\mu,\stk}{\ct}{\ms}}
   \specmulti{\Obss'}{\Dirs'}\cdot \wedge
   \\
   \phantom{\exists \Obss', \Dirs'.\;}
   \Obss \simeq_{\lengthof{\mu'}} \Obss' \wedge \Dirs \simeq_{\lengthof{\mu'}} \Dirs'
   \end{array}\right)
&&\label{eq:minimc-bcc-src}
\end{alignat}
where
  {\footnotesize
  \(\quad
    \issafe{s}{spec}{\prog} \triangleq
    \begin{aligned}[t]
    &\forall \Dirs,\Obss, s'.\; \wfds{\Dirs}{\prog} ~\land~ 
    s
    \specmulti{\Obss}{\Dirs}
    \state{s'} \Rightarrow \\
    &\exists \Dirs',\Obss',s''.\;
    \state{s'}
    \specstep{\Obss'}{\Dirs'}
    s''
    \end{aligned}
  \)
  }

\endgroup
\end{lemma}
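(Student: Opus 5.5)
I would prove \autoref{lem:minimc-bcc} by strong induction on the length of the \MiniMC{} execution in \eqref{eq:minimc-bcc-spec}, generalizing over the related source state. The core is a \emph{one-step backwards simulation}. Suppose a \MiniMC{} state is related by $\simeq_{\lengthof{\mu'}}$ to a \MiniMIR{} state (pc, registers, memory and stack related componentwise, $\ct$ and $\ms$ equal), and the \MiniMC{} state takes a step under directive $d$ with observation $o$. Then the \MiniMIR{} state takes exactly one step, with a directive $d'$ and observation $o'$ satisfying $d \simeq d'$ and $o \simeq o'$, and the two resulting states are again related. Linearization is one-to-one on instructions (it only concatenates blocks and translates labels and code-pointer constants), so no stuttering is needed and the simulation is lockstep. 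That makes the multi-step statement a routine induction.

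For the one-step lemma I would first prove two auxiliary facts.

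\textbf{Instruction fetch.} If $\pc \simeq_{\lengthof{\mu'}} \pc'$ with $\pc'$ a valid \MiniMIR{} location, then the instruction fetched from $\m$ at $\pc$ is the linearization of $\prog\llbracket \pc' \rrbracket$. This comes from the layout $\m = \mu \concat \prog$ and the definition of block offsets in $\mc{\cdot}$. I would also prove the converse: a numeric address inside the code section but not at a block boundary is related to the corresponding (label, offset) pair. This converse is needed for speculative $\dcall{}$ and $\dret{}$ targets.

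\textbf{Expression evaluation.} $\eval{\mc{e}}{\rho} \simeq \eval{e}{\rho'}$ whenever $\rho \simeq \rho'$. The proof is by induction on $e$. Code-pointer constants map to their addresses. The interesting case is $\op$: when the source result is $\UV$ (mixed code pointer and number, or $\UV$ operands), any target number is related to $\UV$, so the case holds trivially. Equality of two code pointers is preserved because the address map is injective. The conditional expression needs the guard to be defined, or to be irrelevant on the source side.

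With these in hand, the instruction cases go as follows.

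\begin{itemize}[leftmargin=*,nosep,label=$\blacktriangleright$]
\item \textbf{Loads, stores, divisions, branches and calls.} The source must evaluate the relevant operand to a concrete value rather than $\UV$. This is exactly where $\issafe{s}{spec}{\prog}$ is used: the source state is reachable and safe, so it can step, which by inversion on the source rules forces a concrete address, divisor or call target. Relatedness then forces the target value to be equal, or to be the corresponding address. Observations therefore match, with $\ocall{}$ related through the address map.
\item \textbf{Directives.} Target directives are mapped back to source directives. A $\dcall{n}$ is sent to the code location related to $n$, and well-formedness $\wfdsmc{\Dirs}{\lengthof{\mu'}}{\mc{\prog}}$ guarantees that $n$ lies in the code section. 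The Fault and CTarget cases follow from the fetch correspondence, since $\cctarget$ is present at related locations.
\item \textbf{Peek, return and termination.} These follow from stack relatedness. For $\dret{}$, $\wfret{}$ must transfer across the map.
\end{itemize}

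I expect the main obstacle to be the interaction between $\UV$ and safety. The safety hypothesis speaks about executions from $s$ under well-formed source directives, so I must check that the mapped directives are well-formed ($\wfds{\Dirs'}{\prog}$) and keep the invariant that the current source state is reachable from $s$ by them. This requires the induction to carry the accumulated source prefix explicitly. A related difficulty is the directive mapping itself: it must be a genuine function on target directives, since arbitrary numeric call targets outside the code or at mid-instruction positions need a canonical source image. I would first try to get this from $\wfdsmc{\cdot}{\cdot}{\cdot}$ restricting the possible targets, and otherwise map such targets to a fixed invalid location that faults in both semantics.
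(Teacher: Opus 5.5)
Your proposal is correct and follows the paper's proof. The core is a lock-step one-step backwards simulation: if the source state is safe and related to the target state, then every \MiniMC{} step is matched by exactly one \MiniMIR{} step with related observation and directive, and the relation is re-established. This is lifted to multi-step executions by induction, with safety used by inversion to rule out stuck source steps on $\UV$ or non-pointer operands. The only difference is bookkeeping: instead of carrying the accumulated source prefix explicitly, the paper's invariant is simply that the current source state is safe, which is preserved because safety under well-formed directives is closed under well-formed steps.
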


This is similar to \autoref{lem:specibt-bcc}, with the following differences:

\begin{itemize}[leftmargin=*,nosep,label=$\blacktriangleright$]
  \item Linearization refines undefined behavior (\EG if the source program
    attempts to call a numerical value, which is now interpreted as an address).
    In this case, there is no source relation to relate to, so we assume that
    the initial source state is safe under speculative execution.

    Note that we define safety only for the specific input in question, instead
    of requiring that the program is safe for all inputs. However, while
    sequential safety would typically only quantify over prefixes of one
    (deterministic) execution, safety under speculative execution actually
    quantifies over all possible execution paths (specified by different sets of
    directives $\mathcal{D}$). It would be insufficient to require only that
    there exists a safe path (this would be satisfied by the path that does not
    misspeculate).

  \item Values are related by $\simeq_{\lengthof{\mu'}}$\remove{ instead of
    equality}.
  \item We require (predicate $\wfdsmc{\Dirs}{\lengthof{\mu'}}{\mc{\prog}}$) that call directives must always point to a location within
    the code section. We justify this by the fact that the data section should
    be placed entirely in non-executable memory pages, as it would otherwise be
    impossible to provide guarantees without introducing restrictions on data.
    A similar restriction in the definition of
    $\issafe{s}{spec}{\prog}$ requires that directives use valid labels
    and offsets.
  \item \newtext{$\wwfp{\prog}$ is the weaker version of $\wfp{\prog}$ that only guarantees the following: all basic blocks end in a $\cret$ or $\ijump$ instruction; programs and blocks are non-empty.
    We proved $\forall \prog, \wfp{\prog} \rightarrow \wwfp{\tr{\prog}}$ to use this weaker condition for linearization, which is satisfied by the output of the \Triosecuris transformation.}
\end{itemize}

\begin{proof}[Proof sketch]
  The proof relies on a lock-step backwards simulation: 
  if the source state is safe and the source and target states are related,
  then for every step in the target, the source can take a corresponding step
  while preserving the relation between the states.
  We apply this simulation inductively to obtain the result for multi-step
  executions.
\end{proof}


\begin{lemma}[Relative Security for Linearization]
\label{lem:mc-rs}
\begingroup\footnotesize
\setcounter{equation}{0}
\renewcommand{\theHequation}{lem\thelemma.\theequation}

\begin{alignat}{2}
& \wwfp{\prog} \land \len = \lengthof{\mu_1'} = \lengthof{\mu_2'} &&\Rightarrow
\label{eq:mc-rs-sidecond}
\\
& \issafe{s_1}{spec}{\prog} \wedge \issafe{s_2}{spec}{\prog} &&\Rightarrow
\label{eq:mc-rs-safe}
\\
&
  \prog \vdash \runningstate{(0,0), \rho_1, \mu_1, \emptylist, \mtrue, \mfalse} \, \specobseq \,\runningstate{(0,0), \rho_2, \mu_2, \emptylist, \mtrue, \mfalse}
  &&\Rightarrow
\label{eq:mc-rs-seqeq}
\\
& \rho_1 \simeq_{\mu_1'} \rho_1' \wedge \rho_2 \simeq_{\mu_2'} \rho_2' \wedge \mu_1 \simeq_{\mu_1'} \mu_1' \wedge \mu_2 \simeq_{\mu_2'} \mu_2' &&\Rightarrow
\label{eq:mc-rs-match}
\\
& \mc{\prog}, \len \vdash \runningstate{0, \rho_1', \mu_1', \emptylist, \mtrue, \mfalse} \, \mcobseq \,\runningstate{0, \rho_2', \mu_2', \emptylist, \mtrue, \mfalse}
\end{alignat}
where
  {\footnotesize
  \[
   \prog, \len \vdash \runningstate{s} \mcobseq \runningstate{s'} \triangleq
   \begin{array}[t]{@{}l@{}}
   \forall \Dirs,\Obss_1,\Obss_2.\;
   \wfdsmc{\Dirs}{\len}{\mc{\prog}} \;\land\;
   \\
   \prog \vdash \runningstate{s}
    \specmulti{\Obss_1}{\Dirs}\cdot \;\land\;
   \prog \vdash \runningstate{s'}
   \specmulti{\Obss_2}{\Dirs}\cdot \;\Rightarrow\;
  \Obss_1 \lessgtr \Obss_2
   \end{array}
  \]
  }
  and $\mcobseq$ is defined similarly to $\specobseq$.
\endgroup

\end{lemma}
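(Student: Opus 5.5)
We derive this lemma from \autoref{lem:minimc-bcc} together with the soundness-of-backwards-simulation argument of \citet{OlmosBBGL25}. Fix $\Dirs$ with $\wfdsmc{\Dirs}{\len}{\mc{\prog}}$ and two \MiniMC runs
\[
\mc{\prog} \vdash \runningstate{0,\rho_i',\mu_i',\emptylist,\mtrue,\mfalse} \specmulti{\Obss_i}{\Dirs}\cdot, \qquad i = 1, 2 .
\]
Since \MiniMC fetches from $\m = \mu\concat\prog$, the initial pc $0$ of the code section corresponds to $(0,0)$ under $\simeq_{\len}$. Together with \eqref{eq:mc-rs-match} and empty stacks, this gives the matching hypothesis \eqref{eq:minimc-bcc-match}. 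We use $\len = \lengthof{\mu_1'} = \lengthof{\mu_2'}$ so that both runs use the same offset relation $\simeq_{\len}$. The premises $\wwfp{\prog}$ and $\issafe{s_i}{spec}{\prog}$ come from \eqref{eq:mc-rs-sidecond} and \eqref{eq:mc-rs-safe}. \autoref{lem:minimc-bcc} therefore yields source runs
\[
\prog \vdash \runningstate{(0,0),\rho_i,\mu_i,\emptylist,\mtrue,\mfalse} \specmulti{\Obss_i'}{\Dirs_i'}\cdot
\]
with $\Obss_i \simeq_{\len} \Obss_i'$ and $\Dirs \simeq_{\len} \Dirs_i'$.

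The next step is to show $\Dirs_1' = \Dirs_2'$, i.e.\ that $\simeq_{\len}$ is functional from target directives to source directives. Branch directives are related only to themselves. For $\dcall$ and $\dret$ targets, the relation between \MiniMIR pcs $(l,o)$ and integer addresses is the bijection given by the layout, $\len$ plus the block start plus $o$. It is injective because blocks are laid out disjointly. It is defined on every well-formed directive because $\wfdsmc{\Dirs}{\len}{\mc{\prog}}$ forces call targets into the code section. I would prove this as a small lemma stating that the layout map from valid pcs to code addresses is injective and total on $[\len, \len+\lengthof{\mc{\prog}})$. One subtlety: the directives of $\Dirs$ consumed by each run may differ in length, since one run may stop earlier. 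I would therefore take the shorter run's prefix and use prefix-closure of multi-step execution, so that both source runs are driven by a common prefix of the same source directive list.

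With common source directives, \eqref{eq:mc-rs-seqeq} gives $\Obss_1' \lessgtr \Obss_2'$. It remains to transport this back along $\simeq_{\len}$ on observations, which needs the converse direction: each source observation determines its target observation. $\oload$, $\ostore$, $\odiv$ and $\obranch$ are over numbers. Here the main obstacle is that $\UV$ is related to every number, so a source observation containing $\UV$ would not determine its target counterpart. I would resolve this with safety. A load or store address or a call target equal to $\UV$ makes the source stuck, contradicting $\issafe{s_i}{spec}{\prog}$. Division operands and branch conditions must evaluate to numbers $n$ in the source rules, so $\UV$ never appears in an observation. $\ocall{(l,o)}$ maps to the unique address given by the layout. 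Hence there is a function $f$ on observations with $\Obss_i = f(\Obss_i')$, computed pointwise, and $f$ preserves $\lessgtr$, giving $\Obss_1 \lessgtr \Obss_2$. The expected hard part is the bookkeeping that ties observation determinism to safety: checking that every source rule which emits an observation evaluates the relevant operand to a defined number or code pointer rather than $\UV$.
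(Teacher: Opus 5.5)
Your proposal is correct and follows the paper's proof. The paper invokes Theorem 2 of \citet{OlmosBBGL25} with \autoref{lem:minimc-bcc}, using that $\simeq_{\len}$ restricted to code pointers is a bijection with in-bounds code addresses (for directives) and is functional on every value except $\UV$, which never occurs in observations; you have essentially unfolded that theorem's argument inline, and two steps can be dropped: excluding $\UV$ needs no appeal to safety (the source rule premises already force numbers or code pointers), and no prefix bookkeeping is needed (both runs in $\mcobseq$ consume the same list $\Dirs$).
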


\begin{proof}[Proof sketch]
  To obtain relative security for linearization, we apply Theorem 2 of
  \citet{OlmosBBGL25} with the backwards simulation of \autoref{lem:minimc-bcc}.
  This requires that there are functions mapping target directives to source
  directives, and source directives from target directives.

  We obtain the mapping from target directives to source directives directly
  from the relation $\simeq_{\lengthof{\mu}}$, since the restriction of $\simeq_{\lengthof{\mu}}$
  to code pointer values is a bijection between code pointers and
  in-bounds addresses. 

  For observations, we similarly use the fact that $\simeq_{\lengthof{\mu}}$ is
  functional for all values other than $\UV$, which cannot occur in
  observations.
\end{proof}

\subsection{End-to-End Relative Security}

As mentioned in \autoref{sec:ki-undefined-values}, \autoref{lem:mc-rs} requires
that the (speculative) \MiniMIR executions are safe. Thus, in order to obtain
end-to-end relative security, we need to prove that this is the case.

\newtheorem*{lem:specibt-safety-preservation}{\autoref{lem:specibt-safety-preservation}}

\begin{lem:specibt-safety-preservation}[Safety Preservation for \Triosecuris]
\label{lem:specibt-safety}
\begingroup\footnotesize
\setcounter{equation}{0}
\renewcommand{\theHequation}{lem\thelemma.\theequation}

\begin{alignat}{2}
& \left(\begin{array}{@{}l@{}}
  \wfp{\prog} \wedge \callee, \msf \notin \UsedVars{\prog} \wedge {}
  \\
  \isct{\prog[0]} \wedge \cctarget \notin \prog \wedge {}
  \\
  \lengthof{\mu} > 0 \wedge \rho'(\callee) = \&0 \wedge \rho'(\msf) = 0
  \end{array}\right)
 &\;&\Rightarrow
\label{eq:specibt-safe-sidecond}
\\
& \issafe{\runningstate{(0,0), \rho, \mu, \emptylist}}{seq}{\prog}
   &&\Rightarrow
\label{eq:specibt-safe-seq}
\\
& \rho \sim \rho' \land \mu \sim \mu' &&\Rightarrow
\label{eq:specibt-safe-match}
\\
&\issafe{\runningstate{\scfg{(0,0), \rho', \mu', \emptylist}{\mtrue}{\mfalse}}}{spec}{\tr{\prog}}
   &&
\label{eq:specibt-safe-spec}
\end{alignat}
where
  {\footnotesize
  \(\quad
    \issafe{s}{seq}{\prog} \triangleq
    \forall \Obss, s'.\;
    s
    \seqmulti{\Obss}
    \runningstate{\scfg{s'}} \Rightarrow
    \exists \Obss',s''.\;
    \runningstate{\scfg{s'}}
    \seqstep{\Obss'}
    s''
  \)
  }
\endgroup
\end{lem:specibt-safety-preservation}


\begin{proof}[Proof sketch]
  We divide the proof into two parts:
  first, we prove safety preservation between sequential and ideal semantics;
  second, we prove safety preservation between ideal and speculative semantics.

  For the first part, before misspeculation, ideal and sequential semantics behave identically.
  Once the misspeculation flag is set, it cannot be reverted, and all instructions that would cause undefined behavior are masked with appropriate constant values.
  Therefore, ideal execution is always safe.

  For the second part, we use a strengthened version of \autoref{lem:specibt-bcc}
  to obtain the ideal state matching the last speculative state. Since
  the ideal execution is safe, we obtain a step of the ideal semantics. By case
  distinction on this step, we prove that the speculative semantics takes
  a corresponding step. 
\end{proof}

We have now proved all the lemmas and theorems needed to establish end-to-end
relative security, as illustrated in \autoref{fig:composition}.

\begin{theorem}[End-to-End Relative Security]
\label{lem:ete-rs}
\begingroup\footnotesize
\setcounter{equation}{0}
\renewcommand{\theHequation}{thm\thetheorem.\theequation}

\begin{alignat}{2}
& \left(\begin{array}{@{}l@{}}
  \wfp{\prog} \wedge \callee, \msf \notin \UsedVars{\prog} \wedge {}
  \\
  \isct{\prog[0]} \wedge \cctarget \notin \prog \wedge {}
  \\
  \len = \lengthof{\mu_1'} = \lengthof{\mu_2'} > 0  \wedge {}
  \\
  \rho_1'(\callee) = \len \wedge \rho_2'(\callee) = \len \wedge {}
  \\
  \rho_1'(\msf) = 0 \wedge \rho_2'(\msf) = 0
  \\
  \wfreg{\rho_1} \land \wfreg{\rho_2} \land \wfmem{\mu_1} \land \wfmem{\mu_2}
  \end{array}\right)
 &\;&\Rightarrow
\label{eq:ete-rs-sidecond}
\\
& \issafe{\runningstate{(0,0), \rho_1, \mu_1, \emptylist}}{seq}{\prog} \wedge \issafe{\runningstate{(0,0), \rho_2, \mu_2, \emptylist}}{seq}{\prog} &&\Rightarrow
\label{eq:ete-rs-safe}
\\
&\prog \vdash \runningstate{\pc_1, \rho_1, \mu_1, \stk_1} \, \seqobseq \, \runningstate{\pc_2, \rho_2, \mu_2, \stk_2} &\;& \Rightarrow\;
\label{eq:ete-rs-seqeq}
\\
& \rho_1 \cong_{\lengthof{\mu_1'}} \rho_1' \wedge \rho_2 \cong_{\lengthof{\mu_2'}} \rho_2' \wedge \mu_1 \simeq_{\lengthof{\mu_1'}} \mu_1' \wedge \mu_2 \simeq_{\lengthof{\mu_2'}} \mu_2' &&\Rightarrow
\label{eq:ete-rs-match}
\\
& \mc{\tr{\prog}}, \len \vdash \runningstate{0, \rho_1', \mu_1', \emptylist, \mtrue, \mfalse} \, \mcobseq \,\runningstate{0, \rho_2', \mu_2', \emptylist, \mtrue, \mfalse}
\label{eq:ete-rs-mceq}
\end{alignat}

\endgroup

\end{theorem}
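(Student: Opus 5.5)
The plan is to obtain the end-to-end result by composing the three earlier results along the two arrows of \autoref{fig:composition}. The target is \eqref{eq:ete-rs-mceq}. \autoref{lem:mc-rs} gives this once it is applied to the hardened program $\tr{\prog}$ in place of $\prog$. So the work is to discharge each hypothesis of \autoref{lem:mc-rs} for $\tr{\prog}$.

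\textbf{Step 1: set up the intermediate states.} Pick \MiniMIR intermediate register assignments $\hat\rho_i$ and memories $\hat\mu_i$ with
\begin{itemize}[leftmargin=*,nosep,label=$\blacktriangleright$]
\item $\rho_i \sim \hat\rho_i$ and $\mu_i \sim \hat\mu_i$, and
\item $\hat\rho_i \simeq_{\len} \rho_i'$ and $\hat\mu_i \simeq_{\len} \mu_i'$.
\end{itemize}
Also require $\hat\rho_i(\callee) = \vfp{(0,0)}$ and $\hat\rho_i(\msf)=0$. This gives $\hat\rho_i(\callee) \simeq_{\len} \len = \rho_i'(\callee)$, because the code pointer $(0,0)$ lies at address $\len$ right after the data section. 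The side conditions $\wfreg{\cdot}$ and $\wfmem{\cdot}$ should guarantee that code pointers in $\rho_i,\mu_i$ point to function entries. Then the offset-adjusting relation $\sim$ (identity at offset $0$) composed with $\simeq_{\len}$ is exactly the combined relation $\cong_{\len}$ in \eqref{eq:ete-rs-match}. I expect $\cong$ to be defined as this composite, modulo the reserved registers, so the decomposition should be a direct unfolding.

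\textbf{Step 2: the two hypotheses about $\tr{\prog}$.} First, apply \autoref{lem:specibt-safety} to each input using \eqref{eq:ete-rs-safe}. Its side condition $\lengthof{\mu}>0$ follows from $\len>0$ together with the length preservation of $\sim$ and $\simeq$. This yields $\issafe{\cdot}{spec}{\tr{\prog}}$ for both intermediate initial states, which is \eqref{eq:mc-rs-safe}. Second, apply \autoref{lem:specibt-rs} to \eqref{eq:ete-rs-seqeq}. This yields speculative observational equivalence of the two hardened \MiniMIR states, which is \eqref{eq:mc-rs-seqeq}. Here the states in \eqref{eq:ete-rs-seqeq} must be instantiated at $\pc_i=(0,0)$ and $\stk_i=\emptylist$, as \autoref{lem:specibt-rs} needs.

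\textbf{Step 3: the remaining side condition and the final application.} The well-formedness $\wwfp{\tr{\prog}}$ follows from the proved fact $\wfp{\prog}\rightarrow\wwfp{\tr{\prog}}$. The length equalities in \eqref{eq:mc-rs-sidecond} are assumed directly. Then \autoref{lem:mc-rs} with the relations from Step 1 concludes.

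The main obstacle should be Step 1. We must show that the relation $\cong$ in the end-to-end statement really factors through $\sim$ and $\simeq_{\len}$ with a single choice of intermediate state that also meets the reserved-register preconditions of both earlier lemmas. The subtle points are:
\begin{itemize}[leftmargin=*,nosep,label=$\blacktriangleright$]
\item Undefined values: $\simeq$ relates $\UV$ to any number, so the intermediate state must keep $\UV$ wherever the source has it.
\item Code pointers: those in the source must be function entry points, so that $\sim$ is defined on them. This is where $\wfreg{\cdot}$ and $\wfmem{\cdot}$ will be used.
\end{itemize}
I would first try the canonical construction $\hat\rho_i = \rho_i$ updated at $\callee$ and $\msf$, and $\hat\mu_i=\mu_i$, using the identity of $\sim$ at offset-zero pointers.
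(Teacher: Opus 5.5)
Your proposal is correct and follows the paper's proof essentially step for step. The paper also takes $\rho_i$ updated with $\msf=0$ and $\callee=\vfp{(0,0)}$ as the intermediate register assignments and reuses $\mu_i$ as the intermediate memory, then composes \autoref{lem:specibt-rs}, \autoref{lem:specibt-safety-preservation}, $\wfp{\prog}\Rightarrow\wwfp{\tr{\prog}}$ and \autoref{lem:mc-rs} exactly as you describe. One correction to your expectation: the paper says $\wfreg{\cdot}$ and $\wfmem{\cdot}$ admit valid return addresses as well as function entry pointers, and $\sim$ shifts return-address offsets, so $\mu_i\sim\mu_i$ and $\rho_i\sim\hat\rho_i$ do not follow from $\sim$ being the identity at offset~$0$ alone (the paper's sketch leaves this point equally implicit).
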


$\cong_{\lengthof{\mu}}$ denotes the composition of $\sim$ and $\simeq_{\lengthof{\mu}}$.
\newtext{$\wfreg{\rho}$ and $\wfmem{\mu}$ are well-formedness conditions for register assignments and memories,
which require that every code pointer stored in registers and memory, respectively, is either a function entry pointer or a valid return address.}
The remaining side conditions match \autoref{lem:specibt-safety-preservation},
\autoref{lem:mc-rs} and \autoref{lem:specibt-rs}.

\begin{proof}[Proof sketch]
  We construct intermediate register assignments $\rho^i_1, \rho_1 \sim
  \rho^i_1$ and $\rho^i_2, \rho_2 \sim \rho^i_2$ by updating both assignments
  with $\msf = 0$ and $\callee = \vfp{(0, 0)}$.
  By \autoref{lem:specibt-rs} with \eqref{eq:ete-rs-seqeq}, we obtain
  observational equivalence of  $\runningstate{\scfg{(0,0), \rho^i_1, \mu_1,
  \emptylist}{\mtrue}{\mfalse}}$ and $\runningstate{\scfg{(0,0), \rho^i_2,
  \mu_2, \emptylist}{\mtrue}{\mfalse}}$.
  By \autoref{lem:specibt-safety-preservation} and \eqref{eq:ete-rs-safe}, we
  further obtain that all executions starting in those states are safe.

  From $\rho_1 \sim \rho^i_1$ (resp.\ $\rho_2 \sim \rho^i_2$) and the conditions
  on $\rho_1', \rho_2'$ in \eqref{eq:ete-rs-sidecond}, we obtain $\rho^i_1
  \simeq_{\lengthof{\mu_1'}} \rho_1'$ (resp.\ $\rho^i_2 \simeq_{\lengthof{\mu_2'}} \rho_2'$).
  We conclude by \autoref{lem:mc-rs}.
\end{proof}



\section{Related Work}
\label{sec:related-work}

\paragraph{Spectre BTB \newreviews[REQ2]{and RSB} defenses}
We are not the first to propose defenses against Spectre BTB \newreviews[REQ2]{and RSB}, with previous work
in this space including Serberus~\cite{MosierNMT24},
Swivel~\cite{NarayanDMCJGVSS21}, ShadowCFI~\cite{TrujilloKKY} and the work of \citet{FabianPGB25}.
%
Serberus~\cite{NarayanDMCJGVSS21} offers the most complete Spectre mitigation to
date, protecting not only against Spectre BTB, PHT, and RSB, but even against
the \emph{data speculation} variants Spectre STL and PSF.
\newreviews[REQ3]{It uses only coarse-grained IBT, accounting for the resulting
  imprecision by constructing a transient
control flow graph according to the remaining attacker capabilities. Serberus
then relies on}
\removereviews{It achieves this with }a combination of fence
insertion along a minimum cut, function-private stacks, and register cleaning
to obtain a comprehensive yet efficient mitigation.
Their work includes high-level security proofs on paper,
while our work is fully mechanized in Rocq.
Their setting is more restrictive than the standard cryptographic
constant-time discipline, additionally requiring that all function arguments and
return values must be public (passing secrets is thus only possible by
reference). In contrast, our mitigation is much more permissive,
protecting arbitrary programs.
\removereviews{Finally, as mentioned in \autoref{sec:intro}, by precisely detecting
misspeculation we also obtain simplicity benefits for \Triosecuris.}


\newreviews[REQ3]{Finally, we wonder if fine-grained or precise IBT could also benefit Serberus.
  A reduced transient control flow graph might in principle reduce the
  number of fences Serberus needs, and perhaps even relax its register-clearing
  pass or the requirement that all arguments be public---but whether this
  can be achieved in the relatively complex Serberus design is unclear to us.
  The simple way \Triosecuris uses CET to precisely detect misspeculation
  and set the SLH flag does not directly apply to Serberus, which is
  not based on SLH, but on adding fences.}

Swivel~\cite{NarayanDMCJGVSS21} protects Wasm modules against Spectre PHT, BTB,
and RSB mounted from other modules running in the same process.
Their attack model is different from ours, as it aims to prevent malicious Wasm
modules both (1)~from speculatively breaking out of the Wasm sandbox and
(2)~from inducing speculative leakage in other modules.
Part (2) is similar to our attacker model, but weaker, since
they consider an attack mitigated if the attacker cannot train the prediction mechanism to take a
specific path, whereas our work assumes that the attacker has perfect control
over the prediction mechanisms, which prevents more sophisticated
attacks~\cite{WiebingG25}.
\newreviews[REQ3]{Their leakage model is also weaker, including only memory
  accesses to mapped memory regions.

One of their enforcement techniques, included in the Swivel-CET variant, also
uses software checks on top of hardware IBT. Unlike our technique, which makes
use of the code pointers directly, they rely on block identifiers and do not
cover return edges.
Their approach does not use SLH and instead relies on
clearing the heap base pointer to restrict all memory accesses to unmapped
memory regions. While sufficient under the restrictions of Wasm and their leakage model, this approach
would not provide adequate protection for more general programs.
}
%
Finally, while a small formal model of Swivel was sketched \newreviews[REQ3]{in a short paper} by
\citet{CauligiGMSV22}, \newreviews[REQ3]{their modeling of Spectre RSB attacks is unrealistic}
and security proofs were left for future work.

In very recent independent work, \citet{TrujilloKKY} proposed Register Hiding and ShadowCFI, a mitigation against Spectre BTB and RSB
 which also precisely detects misspeculation, without requiring CET.
However, so-called ``architectural register-state
independent secret reachability attacks'' are out of scope for them, whereas
our mitigation does protect against such attacks, while also offering protection
against Spectre PHT.
The paper provides an argument for a simplified version of their
mitigation, but this argument is informal (not really a formal security proof) and it does not model memory.

\citet{FabianPGB25} introduce a secure compilation framework that allows lifting
the security guarantees provided by Spectre countermeasures from weaker
speculative semantics to stronger ones.
This allows them to take software defenses against Spectre BTB (retpoline
variants~\cite{retpoline}), prove them secure in a semantics that only includes
BTB, and obtain security against more variants (\EG PHT), for programs that are
already secure against the extra variants (\EG PHT).
They see this as a first step towards more easily composing Spectre mitigations.
They also lift PHT defenses like Ultimate SLH to obtain security against other
variants (\EG RSB, STL, SLS), for programs that are already secure against the
extra variants (\EG RSB, STL, SLS).
They show, however, that without modifications SLH-based defenses cannot be
lifted in the same way to obtain security against BTB, as indirect jumps can be
used to bypass SLH protection.
In this paper we show that Ultimate SLH can be modified to also defend against
BTB and for this we combine it with CET-style protection, a defense that is not
investigated by \citet{FabianPGB25}.

\paragraph{Spectre RSB defense}
\citet{OlmosBCGLOSYZ25} propose a mitigation for Spectre RSB and PHT for
the Jasmin language~\cite{AlmeidaBBBGLOPS17} and do a security proof for a
simplified version of it in Rocq.  They compile returns to
tables of conditional jumps, then use \emph{selective} SLH
\cite{ShivakumarBBCCGOSSY23} to protect against Spectre PHT. Their work is
targeted only at cryptographic constant time code, and protection against
Spectre BTB is trivial, as their source language does not support indirect
jumps, and their transformation does not insert them.

\paragraph{Relative security and preservation of constant time by compilation}

In this work we use the relative security definition for compilers introduced by
\citet{BaumannBDHH25}.
%
This is inspired by \citet{DongolGPW24}, who defined relative security for a single program, but
varying inputs.
This is also inspired by related concepts used in prior work: speculative
non-interference~\cite{GuarnieriKMRS20, PatrignaniG21}, relative
non-interference~\cite{ShivakumarBBCCGOSSY23, CauligiDMBS22}, relative
constant-time~\cite{ZhangBCSY23}.
%

For compiling cryptographic code, a common security guarantee is
\emph{preservation of constant-time}~\cite{OlmosBBGL25, BartheBGHLPT20}\jb{there
should be plenty more if we want},
\IE proving that the target program is constant-time if
the source program is. This differs from relative security since it makes
certain assumptions about which inputs must be indistinguishable in the source
program, whereas relative security simply preserves indistinguishability.
However, it is worth noting that generalized versions, such as $\phi$-SCT
\cite{OlmosBBGL25}, can preserve arbitrary equivalence relations, and are thus
equivalent to relative security. One difference is that \citet{OlmosBBGL25}
require safety of
the source program, whereas our formalization allows for the program to be
unsafe on some inputs.

\section{Conclusion and Future Work}
\label{sec:conclusion}\label{sec:future-work}

In this paper we introduced \Triosecuris, a formally verified defense that protects
arbitrary programs against Spectre attacks by combining CET-style
hardware-assisted control-flow integrity with compiler-inserted Ultimate SLH.
\paragraph{Future LLVM Implementation}
In the future, we would like to implement \Triosecuris in LLVM by extending the
implementation of Ultimate SLH~\cite{ZhangBCSY23}.
\newreviews[REQ1]{
Since the features that \Triosecuris relies on---especially the CET
instructions---are already exposed at the LLVM MIR level,
where Ultimate SLH is implemented, we expect no fundamental obstacles.

We expect the additional overhead introduced by \Triosecuris over that of Ultimate SLH to be small.
The extra overhead of our instrumentation should mainly come from two sources:
(1)~register pressure from passing the callee and return address,
but we expect this to be negligible, as the live range of these registers is confined to the control-transfer
boundary---starting before the indirect transition and ending after the
check, which is performed right after the transition;
(2)~detection code inserted at call and return sites, but these happen less
frequently than the branches Ultimate SLH already instruments, and the
overhead of this instrumentation is anyway low relative to the masking
overhead of Ultimate SLH~\cite[Figure 3]{ZhangBCSY23}.
Given the performance evaluation of Ultimate SLH~\cite[Figure 3]{ZhangBCSY23},
we also expect \Triosecuris to incur less overhead than lfence-based defenses against Spectre PHT.
An implementation would allow empirically checking these expectations
by running experiments using standard benchmarks.

While we expect the additional overhead of \Triosecuris to be small,
Ultimate SLH itself has a significant overhead of $\sim$150\% on the SPEC benchmarks~\cite{ZhangBCSY23}.}
\newreviews[REQ4B1]{To reduce this overhead, it would be interesting to investigate whether \Triosecuris can be built on top of a
selective version of Ultimate SLH called FSLH~\cite{BaumannBDHH25}.}
\newreviews[REQ1]{This may be nontrivial though,
because the formal guarantees of \Triosecuris currently rely on the comprehensive masking of Ultimate SLH
(\EG to prevent undefined behavior for safety preservation).
Moreover, implementing FSLH itself in LLVM seems challenging~\cite{BaumannBDHH25}.

Finally, the Ultimate SLH implementation allows linking with unprotected code,
but this weakens the security guarantees and hasn't been formally studied.
Preserving the sequential correctness of \Triosecuris' instrumentation
when linking with unprotected code would also be challenging to achieve.}

\newtext{
\paragraph{More precisely modeling the stack.}
Inspired by Serberus~\cite{MosierNMT24}, our semantics relies on a separate call
stack (\autoref{sec:lang}), an abstraction of the normal stack in memory and a
CET-protected shadow stack.
While this is a convenient simplification for our proofs, this may be relying on
hardware protection unnecessarily.
Since our current version of \Triosecuris is built on Ultimate SLH, this already
masks all speculative memory writes, including all speculative stack overwrites.
Assuming a CET shadow stack thus does not seem necessary in this context, and more
precisely modeling the layout of the normal stack in memory would allow one to
prove this in the future.
Moreover, more precisely modeling the stack would also allow one
to investigate whether \Triosecuris can also make use of
software defenses such as retpoline~\cite{retpoline} and saferet~\cite{saferet},
for old processors without Intel CET or Arm BTI, or for which these hardware
features are not implemented in a way that is secure under speculation.
}

\paragraph{Other indirect jumps}

Finally, it could be interesting to extend \Triosecuris to indirect jumps that go
beyond calls through function pointers (e.g. switch tables).
\jb{I'd argue we can drop this paragraph if we need space}

\ch{Included above:  The one thing we should definitely mention here, probably before all
  other extensions, is that in the presence of Ultimate SLH we may not need the
  CET shadow stack. Work is already underway on this, right?}

\ch{Included above: It may be interesting to model the CET shadow stack more concretely.
  In our post-CSF discussion we identified this as the most interesting extension
  to try to investigate immediately. Board of that whole discussion:
  \url{https://photos.app.goo.gl/UQZKVBBjV7B1XAFi7}}

\jb{It could be interesting here that (according to the SpecCFI paper), ARM BTI
  has different target markers for jumps and calls}\ch{what's the difference?
  would it change the high-level story above?}
  \jb{no, this comment is quite old. this would be a technical detail for an
  extension to indirect jumps, not something worth mentioning here}

\ch{Unsure it fits above (risk of the reviewers asking why we didn't do it
  already, with proofs): Using testing to make our machine more realistic
  without going all the way to LLVM: RISC-V instead of MiniMC.
  There may be something interesting happening with register
  allocation~\cite{VanDerWallM25}.}

\ch{Don't think we should mention it here; but relating to something like
  MiniRevizor (always mispredict) could also be interesting.
  It was on pre-CSF our board, but afterwards we said it's not so interesting
  (orthogonal to everything).}


\ifanon\else
{\small
\paragraph{Acknowledgments}
We are grateful to Santiago Arranz Olmos, Lucie Lahaye,
Sören van der Wall, Yuval Yarom, and
Zhiyuan Zhang for the insightful discussions.
We also thank the CSF'26 reviewers and shepherd for their very helpful feedback.
This work was in part supported by the
Deutsche Forschungsgemeinschaft (DFG\iffull, German Research Foundation\fi)
as part of the Excellence Strategy of the German Federal and State Governments
-- EXC 2092 CASA -- 390781972.
This work was also in part supported by the Institute of Information \&
Communications Technology Planning \& Evaluation (IITP) grant funded by the
Korea government (MSIT) (No.RS-2024-00441762, Global Advanced Cybersecurity
Human Resources Development)
\ch{TODO: add more acknowledgements here, if needed}
}
\fi


\ifappendix
\appendices

\section{Full Definition for \Triosecuris}
\label{sec:appendix-triosecuris-full}

\subsection{Sequential Semantics for \MiniMIR}
\label{sec:appendix-seq-semantics}

\autoref{fig:seq-semantics-specibt} shows the full sequential semantics for \MiniMIR.
\begin{figure*}
  \centering
  \[
  \inferrule*[left=\textsc{Seq\_Skip}]
    { \fetch{\prog}{\pc}{\cskip} }
    { \runningstate{\cfg{\pc}{\rho}{\mu}{\stk}}
      \seqstep{\bullet}
      \runningstate{\cfg{\pc+1}{\rho}{\mu}{\stk}}
    }
  \]\[
  \inferrule*[left=\textsc{Seq\_Asgn}]
    { \fetch{\prog}{\pc}{\casgn{x}{e}} \quad \eval{e}{\rho} = v }
    { \runningstate{\cfg{\pc}{\rho}{\mu}{\stk}}
      \seqstep{\bullet}
      \runningstate{\cfg{\pc+1}{\subst{x}{v}{\rho}}{\mu}{\stk}}
    }
  \]\[
  \inferrule*[left=\textsc{Seq\_Div}]
    { \fetch{\prog}{\pc}{\cdiv{x}{e_1}{e_2}} \quad \eval{e_1}{\rho} = n_1 \quad \eval{e_2}{\rho} = n_2 \quad
    n = {\begin{cases} n_1/n_2 & \text{if } n_2 \neq 0 \\ \UV & \text{if } n_2 = 0 \end{cases}}
    }
    { \runningstate{\cfg{\pc}{\rho}{\mu}{\stk}}
      \seqstep{\odiv{n_1}{n_2}}
      \runningstate{\cfg{\pc+1}{\subst{x}{n}{\rho}}{\mu}{\stk}}
    }
  \]\[
  \inferrule*[left=\textsc{Seq\_Branch}]
  { \substack{
    \fetch{\prog}{\pc}{\cbranch{e}{l}} \quad \eval{e}{\rho} = n \quad b = (n \neq 0) \\
     \pc' = \ite{b}{(l,0)}{\pc+1}
   } }
  { \runningstate{\cfg{\pc}{\rho}{\mu}{\stk}}
    \seqstep{\obranch{b}}
    \runningstate{\cfg{\pc'}{\rho}{\mu}{\stk}}
  }
  \]\[
  \inferrule*[left=\textsc{Seq\_Jump}]
  { \fetch{\prog}{\pc}{\cjump{l}} }
  { \runningstate{\cfg{\pc}{\rho}{\mu}{\stk}}
    \seqstep{\bullet}
    \runningstate{\cfg{(l,0)}{\rho}{\mu}{\stk}} }
  \]\[
  \inferrule*[left=\textsc{Seq\_Load}]
  { \fetch{\prog}{\pc}{\cload{x}{e}} \quad
    \eval{e}{\rho} = n \quad \mu [n] = v
  }
  { \runningstate{\cfg{\pc}{\rho}{\mu}{\stk}}
    \seqstep{\oload{n}}
    \runningstate{\cfg{\pc+1}{\subst{x}{v}{\rho}}{\mu}{\stk}}
  }
  \]\[
  \inferrule*[left=\textsc{Seq\_Store}]
  {
    \fetch{\prog}{\pc}{\cstore{e}{e'}} \quad \eval{e}{\rho} = n \quad \eval{e'}{\rho} = v
  }
  {
    \runningstate{\cfg{\pc}{\rho}{\mu}{\stk}}
    \seqstep{\ostore{n}}
    \runningstate{\cfg{\pc+1}{\rho}{\subst{n}{v}{\mu}}{\stk}}
  }
  \]\[
  \inferrule*[left=\textsc{Seq\_Call}]
  {
    \fetch{\prog}{\pc}{\ccall{e}} \quad \eval{e}{\rho} = \vfp{(l, o)}
  }
  {
    \runningstate{\cfg{\pc}{\rho}{\mu}{\stk}}
    \seqstep{\ocall{(l, o)}}
    \runningstate{\cfg{(l, o)}{\rho}{\mu}{(\pc+1)::\stk}}
  }
  \]\[
  \inferrule*[left=\textsc{Seq\_Peek}]
  { \fetch{\prog}{\pc}{\cpeek{x}} \quad
    v = {\begin{cases} \pc' & \text{if } \stk = \pc' :: \stk' \\ \UV & \text{if } \stk = \epsilon \end{cases}}
  }
  { \runningstate{\cfg{\pc}{\rho}{\mu}{\stk}}
    \seqstep{\bullet}
    \runningstate{\cfg{\pc+1}{\subst{x}{v}{\rho}}{\mu}{\stk}}
  }
  \]\[
  \inferrule*[left=\textsc{Seq\_Ret}]
  {
    \fetch{\prog}{\pc}{\cret}
  }
  {
    \runningstate{\cfg{\pc}{\rho}{\mu}{\pc' :: \stk}}
    \seqstep{\bullet}
    \runningstate{\cfg{\pc'}{\rho}{\mu}{\stk}}
  }
  \]\[
  \inferrule*[left=\textsc{Seq\_Term}]
  {
    \fetch{\prog}{\pc}{\cret}
  }
  {
    \runningstate{\cfg{\pc}{\rho}{\mu}{\emptylist}}
    \seqstep{\bullet}
    \termstate
  }
  \]
  \caption{Sequential semantics for \MiniMIR}
  \label{fig:seq-semantics-specibt}
  \vspace{-1em}
\end{figure*}
\label{app-1}

\subsection{Ideal Semantics for \MiniMIR}
\label{sec:appendix-ideal-semantics}

\autoref{fig:ideal-semantics-specibt-full} shows the full ideal semantics for \MiniMIR.

\begin{figure*}
  \centering
  \[
  \inferrule*[left=\textsc{Ideal\_Skip}]
    { \fetch{\prog}{\pc}{\cskip} }
    { \runningstate{\icfg{\cfg{\pc}{\rho}{\mu}{\stk}}{\ms}}
      \idealstep{\bullet}{\bullet}
      \runningstate{\icfg{\cfg{\pc+1}{\rho}{\mu}{\stk}}{\ms}}
    }
  \]\[
  \inferrule*[left=\textsc{Ideal\_Asgn}]
    { \fetch{\prog}{\pc}{\casgn{x}{e}} \quad \eval{e}{\rho} = v }
    { \runningstate{\icfg{\cfg{\pc}{\rho}{\mu}{\stk}}{\ms}}
      \idealstep{\bullet}{\bullet}
      \runningstate{\icfg{\cfg{\pc+1}{\subst{x}{v}{\rho}}{\mu}{\stk}}{\ms}}
    }
  \]\[
  \inferrule*[left=\textsc{Ideal\_Div}]
    { \substack{
      \fetch{\prog}{\pc}{\cdiv{x}{e_1}{e_2}} \quad \highlight{\ite{\ms}{0}{\eval{e_1}{\rho}} = n_1} \\
      \highlight{\ite{\ms}{0}{\eval{e_2}{\rho}} = n_2} \quad
      n = {\begin{cases} n_1/n_2 & \text{if } n_2 \neq 0 \\ \UV & \text{if } n_2 = 0 \end{cases}}
    }
    }
    { \runningstate{\icfg{\cfg{\pc}{\rho}{\mu}{\stk}}{\ms}}
      \idealstep{\odiv{n_1}{n_2}}{\bullet}
      \runningstate{\icfg{\cfg{\pc+1}{\subst{x}{n}{\rho}}{\mu}{\stk}}{\ms}}
  }
  \]\[
  \inferrule*[left=\textsc{Ideal\_Branch}]
  { \substack{
    \fetch{\prog}{\pc}{\cbranch{e}{l}} \quad \highlight{\ite{\ms}{0}{\eval{e}{\rho}} = n} \quad b = (n \neq 0) \\
     \pc' = \ite{b'}{(l,0)}{\pc+1} \quad \ms' = \ms \lor (b \neq b')
   } }
  { \runningstate{\icfg{\cfg{\pc}{\rho}{\mu}{\stk}}{\ms}}
    \idealstep{\obranch{b}}{\dbranch{b'}}
    \runningstate{\icfg{\cfg{\pc'}{\rho}{\mu}{\stk}}{\ms'}}
  }
  \]\[
  \inferrule*[left=\textsc{Ideal\_Jump}]
  { \fetch{\prog}{\pc}{\cjump{l}} }
  { \runningstate{\icfg{\cfg{\pc}{\rho}{\mu}{\stk}}{\ms}}
    \idealstep{\bullet}{\bullet}
    \runningstate{\icfg{\cfg{(l,0)}{\rho}{\mu}{\stk}}{\ms}} }
  \]\[
  \inferrule*[left=\textsc{Ideal\_Load}]
  { \fetch{\prog}{\pc}{\cload{x}{e}} \quad \highlight{\eval{\ite{ms}{0}{e}}{\rho} = n} \quad \mu [n] = v
  }
  { \runningstate{\icfg{\cfg{\pc}{\rho}{\mu}{\stk}}{\ms}}
    \idealstep{\oload{n}}{\bullet}
    \runningstate{\icfg{\cfg{\pc+1}{\subst{x}{v}{\rho}}{\mu}{\stk}}{\ms}}
  }
  \]\[
  \inferrule*[left=\textsc{Ideal\_Store}]
  {
    \fetch{\prog}{\pc}{\cstore{e}{e'}} \quad \highlight{\eval{\ite{ms}{0}{e}}{\rho} = n} \quad \eval{e'}{\rho} = v
  }
  {
    \runningstate{\icfg{\cfg{\pc}{\rho}{\mu}{\stk}}{\ms}}
    \idealstep{\ostore{n}}{\bullet}
    \runningstate{\icfg{\cfg{\pc+1}{\rho}{\subst{n}{v}{\mu}}{\stk}}{\ms}}
  }
  \]\[
  \inferrule*[left=\textsc{Ideal\_Call}]
  {
    \fetch{\prog}{\pc}{\ccall{e}} \quad \highlight{\ite{\ms}{\vfp{(0, 0)}}{\eval{e}{\rho}} = \vfp{(l, o)}} \quad \ms' = \ms \lor (l', o') \neq (l, o) \\
    \highlight{\lengthof{\prog} > l' \quad \isct{\prog[l']}} \quad \highlight{o' = 0}
  }
  {
    \runningstate{\icfg{\cfg{\pc}{\rho}{\mu}{\stk}}{\ms}}
    \idealstep{\ocall{(l, o)}}{\dcall{(l', o')}}
    \runningstate{\icfg{\cfg{\pc'}{\rho}{\mu}{(\pc+1)::\stk}}{\ms'}}
  }
  \]\[
  \inferrule*[left=\textsc{Ideal\_Call\_Fault}]
  {
    \fetch{\prog}{\pc}{\ccall{e}} \quad \highlight{\ite{\ms}{\vfp{(0, 0)}}{\eval{e}{\rho}} = \vfp{(l, o)}} \\
    \highlight{\lengthof{\prog} \leq l' \lor \neg\isct{\prog[l']} \lor (o' \neq 0)}
  }
  {
    \runningstate{\icfg{\cfg{\pc}{\rho}{\mu}{\stk}}{\ms}}
    \idealstep{\ocall{(l, o)}}{\dcall{(l', o')}}
    \faultstate
  }
  \]\[
  \inferrule*[left=\textsc{Ideal\_Peek}]
  { \fetch{\prog}{\pc}{\cpeek{x}} \quad
    v = {\begin{cases} \pc' & \text{if } \stk = \pc' :: \stk' \\ \UV & \text{if } \stk = \epsilon \end{cases}}
  }
  { \runningstate{\icfg{\cfg{\pc}{\rho}{\mu}{\stk}}{\ms}}
    \idealstep{\bullet}{\bullet}
    \runningstate{\icfg{\cfg{\pc+1}{\subst{x}{v}{\rho}}{\mu}{\stk}}{\ms}}
  }
  \]\[
  \inferrule*[left=\textsc{Ideal\_Ret}]
  {
    \fetch{\prog}{\pc}{\cret} \quad \wfret{\pc''} \quad \ms' = \ms \lor \pc' \neq \pc''
  }
  {
    \runningstate{\icfg{\cfg{\pc}{\rho}{\mu}{\pc' :: \stk}}{\ms}}
    \idealstep{\bullet}{\dret{\pc''}}
    \runningstate{\icfg{\cfg{\pc''}{\rho}{\mu}{\stk}}{\ms'}}
  }
  \]\[
  \inferrule*[left=\textsc{Ideal\_Term}]
  {
    \fetch{\prog}{\pc}{\cret}
  }
  {
    \runningstate{\icfg{\cfg{\pc}{\rho}{\mu}{\emptylist}}{\ms}}
    \idealstep{\bullet}{\bullet}
    \termstate
  }
  \]
  \caption{Ideal semantics for \MiniMIR}
  \label{fig:ideal-semantics-specibt-full}
  \vspace{-1em}
\end{figure*}

\section{Full Definition for Linearization}
\label{sec:appendix-linearization}

\subsection{Speculative Semantics for \MiniMC}
\label{sec:appendix-speculative-semantics-minimc}

\autoref{fig:spec-semantics-minimc-full} shows the full speculative semantics for \MiniMC.

\begin{figure*}
  \centering
  \[
  \inferrule*[left=\textsc{McSpec\_Skip}]
    { \highlight{\mcfetch{\m}{\pc}{\cskip}} }
    { \runningstate{\scfg{\cfg{\pc}{\rho}{\mu}{\stk}}{\mfalse}{\ms}}
      \specstep{\bullet}{\bullet}
      \runningstate{\scfg{\cfg{\pc+1}{\rho}{\mu}{\stk}}{\mfalse}{\ms}}
    }
  \]\[
  \inferrule*[left=\textsc{McSpec\_Asgn}]
    { \highlight{\mcfetch{\m}{\pc}{\casgn{x}{e}}} \quad \eval{e}{\rho} = v }
    { \runningstate{\scfg{\cfg{\pc}{\rho}{\mu}{\stk}}{\mfalse}{\ms}}
      \specstep{\bullet}{\bullet}
      \runningstate{\scfg{\cfg{\pc+1}{\subst{x}{v}{\rho}}{\mu}{\stk}}{\mfalse}{\ms}}
    }
  \]\[
  \inferrule*[left=\textsc{McSpec\_Div}]
    { \highlight{\mcfetch{\m}{\pc}{\cdiv{x}{e_1}{e_2}}} \quad \eval{e_1}{\rho} = n_1 \quad \eval{e_2}{\rho} = n_2 \quad
    n = {\begin{cases} n_1/n_2 & \text{if } n_2 \neq 0 \\ \UV & \text{if } n_2 = 0 \end{cases}}
    }
    { \runningstate{\scfg{\cfg{\pc}{\rho}{\mu}{\stk}}{\mfalse}{\ms}}
      \specstep{\odiv{n_1}{n_2}}{\bullet}
      \runningstate{\scfg{\cfg{\pc+1}{\subst{x}{n}{\rho}}{\mu}{\stk}}{\mfalse}{\ms}}
    }
  \]\[
  \inferrule*[left=\textsc{McSpec\_Branch}]
  { \substack{
    \highlight{\mcfetch{\m}{\pc}{\cbranch{e}{l}}} \quad \eval{e}{\rho} = n \quad b = (n \neq 0) \\
     \pc' = \ite{b'}{\highlight{l}}{\pc+1} \quad \ms' = \ms \lor (b \neq b')
   } }
  { \runningstate{\scfg{\cfg{\pc}{\rho}{\mu}{\stk}}{\mfalse}{\ms}}
    \specstep{\obranch{b}}{\dbranch{b'}}
    \runningstate{\scfg{\cfg{\pc'}{\rho}{\mu}{\stk}}{\mfalse}{\ms'}}
  }
  \]\[
  \inferrule*[left=\textsc{McSpec\_Jump}]
  { \highlight{\mcfetch{\m}{\pc}{\cjump{l}} } }
  { \runningstate{\scfg{\cfg{\pc}{\rho}{\mu}{\stk}}{\mfalse}{\ms}}
    \specstep{\bullet}{\bullet}
    \runningstate{\scfg{\cfg{\highlight{l}}{\rho}{\mu}{\stk}}{\mfalse}{\ms}} }
  \]\[
  \inferrule*[left=\textsc{McSpec\_Load}]
  { \highlight{\mcfetch{\m}{\pc}{\cload{x}{e}}} \quad
    \eval{e}{\rho} = n \quad \mu [n] = v
  }
  { \runningstate{\scfg{\cfg{\pc}{\rho}{\mu}{\stk}}{\mfalse}{\ms}}
    \specstep{\oload{n}}{\bullet}
    \runningstate{\scfg{\cfg{\pc+1}{\subst{x}{v}{\rho}}{\mu}{\stk}}{\mfalse}{\ms}}
  }
  \]\[
  \inferrule*[left=\textsc{McSpec\_Store}]
  {
    \highlight{\mcfetch{\m}{\pc}{\cstore{e}{e'}}} \quad \eval{e}{\rho} = n \quad \eval{e'}{\rho} = v
  }
  {
    \runningstate{\scfg{\cfg{\pc}{\rho}{\mu}{\stk}}{\mfalse}{\ms}}
    \specstep{\ostore{n}}{\bullet}
    \runningstate{\scfg{\cfg{\pc+1}{\rho}{\subst{n}{v}{\mu}}{\stk}}{\mfalse}{\ms}}
  }
  \]\[
  \inferrule*[left=\textsc{McSpec\_Call}]
  {
    \highlight{\mcfetch{\m}{\pc}{\ccall{e}}} \quad \eval{e}{\rho} = l \quad \ms' = \ms \lor \pc' \neq \highlight{l}
  }
  {
    \runningstate{\scfg{\cfg{\pc}{\rho}{\mu}{\stk}}{\mfalse}{\ms}}
    \specstep{\ocall{l}}{\dcall{\pc'}}
    \runningstate{\scfg{\cfg{\pc'}{\rho}{\mu}{(\pc+1)::\stk}}{\mtrue}{\ms'}}
  }
  \]\[
  \inferrule*[left=\textsc{McSpec\_Peek}]
  { \highlight{\mcfetch{\m}{\pc}{\cpeek{x}}} \quad
    v = {\begin{cases} \pc' & \text{if } \stk = \pc' :: \stk' \\ \UV & \text{if } \stk = \epsilon \end{cases}}
  }
  { \runningstate{\scfg{\cfg{\pc}{\rho}{\mu}{\stk}}{\mfalse}{\ms}}
    \specstep{\bullet}{\bullet}
    \runningstate{\scfg{\cfg{\pc+1}{\subst{x}{v}{\rho}}{\mu}{\stk}}{\mfalse}{\ms}}
  }
  \]\[
  \inferrule*[left=\textsc{McSpec\_Ret}]
  {
    \highlight{\mcfetch{\m}{\pc}{\cret}} \quad \wfret{\pc''} \quad \ms' = \ms \lor \pc' \neq \pc''
  }
  {
    \runningstate{\scfg{\cfg{\pc}{\rho}{\mu}{\pc' :: \stk}}{\mfalse}{\ms}}
    \specstep{\bullet}{\dret{\pc''}}
    \runningstate{\scfg{\cfg{\pc''}{\rho}{\mu}{\stk}}{\mfalse}{\ms'}}
  }
  \]\[
  \inferrule*[left=\textsc{McSpec\_Term}]
  {
    \highlight{\mcfetch{\m}{\pc}{\cret}}
  }
  {
    \runningstate{\scfg{\cfg{\pc}{\rho}{\mu}{\emptylist}}{\mfalse}{\ms}}
    \specstep{\bullet}{\bullet}
    \termstate
  }
  \]\[
  \inferrule*[left=\textsc{McSpec\_CTarget}]
  {
  \highlight{\mcfetch{\m}{\pc}{\cctarget}}
  }
  {
    \runningstate{\scfg{\cfg{\pc}{\rho}{\mu}{\stk}}{\ct}{\ms}}
    \specstep{\bullet}{\bullet}
    \runningstate{\scfg{\cfg{\pc+1}{\rho}{\mu}{\stk}}{\mfalse}{\ms}}
  }
  \]\[
  \inferrule*[left=\textsc{McSpec\_Fault}]
  { \highlight{\m \texttt{[} \pc \texttt{]} \neq \cctarget} }
  {
    \runningstate{\scfg{\cfg{\pc}{\rho}{\mu}{\stk}}{\mtrue}{\ms}}
    \specstep{\bullet}{\bullet}
    \faultstate
  }
  \]
  \caption{Speculative semantics for \MiniMC}
  \label{fig:spec-semantics-minimc-full}
  \vspace{-1em}
\end{figure*}

\subsection{Linearization Transformation Function}
\label{sec:appendix-linearization-transformation}

\autoref{fig:linearization-transformation} gives the definition of the linearization
transformation function $\mc{\cdot}$ on \MiniMIR expressions and instructions; all instructions not shown are transformed to themselves (e.g. $\mc{\cskip} = \cskip$).

\begin{figure*}[t]
\centering
\begin{minipage}[t]{0.46\textwidth}
\[
\begin{aligned}
\mc{n} &= n \\
\mc{\efp(l, o)} &= \begin{array}[t]{@{}l@{}}
                   (\sum_{i=0}^{l-1} \lengthof{\prog[i]}) + o + \lengthof{\mu} \\
                   \text{where $\prog$ is the \MiniMIR program being} \\
                   \text{linearized and $\mu$ is its data memory.} \\
                   \end{array} \\
\mc{x} &= x \\
\mc{\op(e_1, \ldots, e_n)} &= \op(\mc{e_1}, \ldots, \mc{e_n}) \\
\mc{\ccond{b}{e_1}{e_2}} &= \ccond{\mc{b}}{\mc{e_1}}{\mc{e_2}}
\end{aligned}
\]
\end{minipage}
\hfill
\begin{minipage}[t]{0.46\textwidth}
\[
\begin{aligned}
\mc{\casgn{x}{e}} &= \casgn{x}{\mc{e}} \\
\mc{\cdiv{x}{e_1}{e_2}} &= \cdiv{x}{\mc{e_1}}{\mc{e_2}} \\
\mc{\cbranch{e}{l}} &= \begin{array}[t]{@{}l@{}}
                       \cbranch{\mc{e}}{l^\star} \\
                       \text{where } \mc{\efp(l, 0)} = l^\star
                       \end{array} \\
\mc{\cjump{l}} &= \begin{array}[t]{@{}l@{}}
                       \cjump{l^\star} \\
                       \text{where } \mc{\efp(l, 0)} = l^\star
                       \end{array} \\
\mc{\cload{x}{e}} &= \cload{x}{\mc{e}} \\
\mc{\cstore{e}{e'}} &= \cstore{\mc{e}}{\mc{e'}} \\
\mc{\ccall{e}} &= \ccall{\mc{e}} \\
\end{aligned}
\]
\end{minipage}
\caption{Linearization transformation}
\label{fig:linearization-transformation}
\end{figure*}

\else
\fi 

\ifieee
\bibliographystyle{abbrvnaturl}
\else 
\ifcamera
\bibliographystyle{ACM-Reference-Format}
\citestyle{acmauthoryear}   
\else
\bibliographystyle{abbrvnaturl}
\fi

\fi 

\bibliography{spec}

\end{document}